\theoremstyle{plain}
\newtheorem{theorem}{Theorem}[section]
\newtheorem{lemma}[theorem]{Lemma}
\newtheorem{proposition}[theorem]{Proposition}
\theoremstyle{definition}
\newtheorem{definition}[theorem]{Definition}
\newtheorem{ass}[theorem]{Assumption}
\newtheorem{remark}[theorem]{Remark}
\numberwithin{equation}{section}
\newcommand{\bE}{\mathbb{E}}
\newcommand{\E}{\bE}
\newcommand{\bR}{\mathbb{R}}
\newcommand{\bN}{\mathbb{N}}
\newcommand{\abs}[1]{\lvert#1\rvert} %
\DeclareMathOperator{\sign}{sign}
\begin{document}
    \title{A Mean-Field Game of Market Entry}
    \author{Guanxing Fu\thanks{The Hong Kong Polytechnic University, Department of Applied Mathematics and Research Centre for Quantitative Finance, Hung Hom, Kowloon, Hong Kong; guanxing.fu@polyu.edu.hk. G.
    Fu’s research is supported by Hong Kong RGC (ECS) Grant No. 25215122, NSFC Grant No. 12471453 and Grant No. 12101523, and internal grants.} \qquad  Paul P. Hager\thanks{University of Vienna, Department of Statistics and Operations Research, Kolingasse 14-16, 1090 Wien; paul.peter.hager@univie.ac.at.} \qquad Ulrich Horst \thanks{Humboldt University Berlin, Department of Mathematics and School of Business and Economics, Unter den Linden 6,
        10099 Berlin; horst@math.hu-berlin.de.
        Horst gratefully acknowledges financial support by the Deutsche Forschungsgemeinschaft through CRC/TRR 388 ``Rough Analysis, Stochastic Dynamics and Related Fields" Project B04. }}%

    \makeatletter
    \def\@maketitle{%
        \newpage
        \null
        \vskip 0.1em%
        \begin{center}%
            \let \footnote \thanks
            {\LARGE \@title \par}%
            \vskip 0em%
                {\large -- Portfolio Liquidation with Trading Constraints --\par}%
            \vskip 1.5em%
                {\large
            \lineskip .5em%
            \begin{tabular}[t]{c}%
                \@author
            \end{tabular}\par}%
            \vskip 1em%
                {\large \@date}%
        \end{center}%
        \par
        \vskip 1.5em}
    \makeatother

    \maketitle

    \begin{abstract}
        We consider both $N$-player and mean-field games of optimal portfolio liquidation in which the players are not allowed to change the direction of trading. Players with an initial short position of stocks are only allowed to buy while players with an initial long position are only allowed to sell the stock. Under suitable conditions on the model parameters we show that the games are equivalent to games of timing where the players need to determine the optimal times of market entry and exit. We identify the equilibrium entry and exit times and prove that equilibrium mean-trading rates can be characterized in terms of the solutions to a highly non-linear higher-order integral equation with endogenous terminal condition. We prove the existence of a unique solution to the integral equation from which we obtain the existence of a unique equilibrium both in the mean-field and the $N$-player game.
    \end{abstract}

    {\bf AMS Subject Classification:} 93E20, 91B70, 60H30

        {\bf Keywords:}{ portfolio liquidation, mean-field game, Nash equilibrium, trading constraint, non-linear integral equations}

    \section{Introduction}

    We consider deterministic games of optimal portfolio liquidation with finitely and infinitely many players where the players are not allowed to change the direction of trading. Players with an initial long position are only allowed to sell the stocks (``sellers''); players with an initial short position are only allowed to buy the stocks (``buyers''). This avoids any form of round-trip strategy in which a player builds up a long or short position at any point in time with the explicit goal of unwinding it later at more favorable prices. Our trading restrictions also account for the fact that compliance rules or legal restrictions may prevent brokers from changing the direction of trading when acting on behalf of clients. 

We show that our trading game is equivalent to a game of {timing} in which players determine when to optimally enter and exit the market. It turns out that in seller-dominated markets where the aggregate initial liquidity of sellers exceeds that of buyers, a seller never enters late and a buyer never exits early; the case of a buyer-dominated market is symmetric.  

{In the model equilibrium aggregate trading rates can be characterized in terms of a higher-order non-linear integral equation with \textit{endogenous} terminal condition. We  establish the existence and, under a suitable weak interaction condition, the uniqueness of a solution to this equation, and hence the existence and uniqueness of an equilibrium in our trading game.  

The integral equation depends on three key parameters: the proportion of sellers remaining in the market, the proportion of buyers who have entered the market, and the liquidity entering buyers will bring to the market in the future. The challenge is to determine the aggregate equilibrium trading rate at the terminal time, which we show to be path-dependent. It depends on the entire history of the buyers' market entries, but \textsl{not} on the history of the sellers' market exits.\footnote{This emphasizes on a mathematical level the very different roles of buyers and sellers in seller-dominated markets and highlights again the key differences between our current model and the one considered in \cite{FHH-2023} where the path-dependence is not present.}

Comparing to \cite{FHH-2023}, where a market dropout condition was imposed for early liquidation, we find that in a seller-dominated market the ``no change in trading direction'' constraint imposed in the current paper has the same effect on the seller side, in the sense that a seller starts trading immediately and leaves the market as soon as her portfolio process hits zero.
As a result, the best responses of the sellers to the aggregate market take the same form as under the dropout condition.

The situation is very different for the buyers. In \cite{FHH-2023} buyers with small initial portfolios initially sell the asset in equilibrium and buy it back later to benefit from favorable price trends that outweigh liquidity costs. This form of round-trip strategy is not allowed in our current model. As a consequence, some buyers will postpone market entry, which results in dynamic market entries and the said path-dependence of the equilibrium equation, which makes the equilibrium analysis much more challenging.  

    \subsection{Portfolio liquidation models}

    Models of optimal portfolio liquidation have received substantial consideration in the financial mathematics literature in recent years. Starting with the work of Almgren and Chriss \cite{AC-2001} existence and uniqueness of solutions to single-player problems in different settings have been established by a variety of authors including \cite{AJK-2014,bank:voss:16,FruthSchoenebornUrusov14, FHX2, GatheralSchied11, GH-2017, HXZ-2020,Kratz14, KP-2016, OW-2013,PZ-2018}. One of the main characteristics of portfolio liquidation models is a singular terminal condition of the value function induced by the liquidation constraint. The singularity causes substantial technical difficulties when solving the value function and/or applying verification arguments.

\subsubsection{Liquidation games}

    Mean-field liquidation games with market impact but without trading constraints and without strict liquidation constraints have been analyzed by many authors. Cardaliaguet and Lehalle \cite{CL-2018} considered a {mean-field game} (MFG) where each player has a different risk aversion. Casgrain and Jaimungal \cite{C-Jai-2018,C-Jai-2018b} considered games with partial information and different beliefs, respectively. Huang et al. \cite{HJN-2015} considered a game between a major agent that is liquidating a large number of shares and many minor agents that trade against the major player.

    Finite-player market impact games with and without strict liquidation constraint and transient market impact were studied in, e.g.~\cite{Schied-2020, Schied-2017b,Schied-2019,Strehle-2018} and more recently by Micheli et al \cite{Micheli-2023}  and Neumann and Vo\ss ~\cite{NeumannVoss-2023}; games with permanent impact were studied in, e.g. \cite{Carlin2007,Drapeau2019,FH-2020}. Mean-field liquidation games {with} strict liquidation constraint have been analyzed in \cite{FGHP-2018, FHX1}.

    A mean-field liquidation game with permanent impact and market dropout has recently been considered in our accompanying paper \cite{FHH-2023}. Under the dropout condition, a player exits the market as soon as her portfolio process hits zero. The condition avoids round-trips where players with zero initial position trade the asset to benefit from favorable future market dynamics. Beneficial round-trips are usually regarded as a form of statistical arbitrage and should thus be avoided. The dropout condition also avoids the ``hot potato effects'' that occur in \cite{Schied-2017b, Schied-2019} where different players repeatedly take long and short positions in the same asset. However, it does not prevent players from changing the direction of trading.

    In models with only sellers or only buyers, the dropout condition turns out to be equivalent to the ``no change in trading direction'' condition. However, when buyers and sellers interact in the same market, an example in \cite{FHH-2023} shows that the dropout condition does not prevent players from changing the direction of trading. In markets dominated by sellers (buyers), a weak form of round-trip strategies, where buyers (sellers) with small initial conditions may take advantage of price trends and benefit from first selling (buying) the asset and then buying (selling) it back at better prices, may still emerge. Our ``no change in trading condition'' is much stronger than the dropout condition in \cite{FHH-2023} and avoids any form of round-trip strategies.

\subsubsection{Mean-field games of timing}

We prove that our game is equivalent to a game of timing, 
   in which the players need to determine the optimal times of market entry {\sl and} exit. In particular, in seller-dominated markets, the equilibrium dynamics will depend on the entire history of market entries (market exits in buyer-dominated markets). This is in sharp contrast to earlier work like \cite{FHH-2023,PUR} where only exit times had to be determined and the equilibrium dynamics only depended on the overall proportion of early exits. 

    The literature on (mean-field) games of optimal entry and exit is still sparse, especially when both entry and exit times need to be determined. The paper that is conceptually closest to ours is the one by A\"id et al \cite{DuTa-2021b}. They consider an MFG of electricity production where energy producers using conventional, respectively, renewable resources need to decide when to exit, respectively enter the market.  In our model, the players need to determine \textsl{both} entry and exit times.

    Dumitrescu et al. \cite{DuTa-2021} and Bouveret et al \cite{DuTa-2020} develop relaxed solutions approaches to solve MFGs where the representative agent chooses both the optimal control and the optimal time to exit the game. Campi and coauthors \cite{campi2021, campi2018, campi2020} consider special classes of MFGs with drop-out (exit). Even if not explicitly formulated as stopping problems, drop-out conditions implicitly involve a choice of optimal exit times. Carmona et al \cite{CDL-2017}, and Nutz \cite{N-2018} use probabilistic methods to solve MFGs arising in models of bank runs that can also be viewed as MFGs of market drop-out. No entry times are to be determined in these models. We shall see that in our setting determining equilibrium entry and exit times requires very different approaches.

    \subsection{Solving our liquidation games}

    We solve our MFG and $N$-player game of optimal liquidation within a common mathematical framework. The key mathematical challenge is to solve the multi-dimensional (in the $N$-player game), highly non-linear system of forward-backward equations that characterizes the candidate equilibrium trading strategies.  %

    \subsubsection{Decoupling and candidate entry/exit times}
    
   To overcome the said problem, we follow \cite{FHH-2023} and first decouple the multi-dimensional forward-backward system by replacing the average trading rate in the co-state dynamics by an exogenous trading rate. This is precisely what one would do in MFGs; in our case, this approach also works for $N$-player games. The decoupling reduces the multi-dimensional system to a family of independent systems, but the new systems are still non-linear. In a second step, we introduce a family of auxiliary linear systems on different time intervals in terms of which we characterize candidate optimal market entry and exit times of a representative buyer and seller. It turns out that in seller-dominated markets the candidate entry times for seller and the candidate exit times for buyers are trivial (vice versa in buyer dominated markets). In particular, we  only need to determine {\sl either} the equilibrium exit {\sl or} entry times.\footnote{We emphasize, that this is an equilibrium property; a priori both times need to be determined.} %
    
     Our candidate exit times for seller-dominated markets are the same as in \cite{FHH-2023}. As a result, we only need to determine the buyers' entry times. Entry times require a different approach. Optimal exit times turn out to be the first times where the portfolio process is zero; by contrast, entry times turn out to be the first times where the derivative of the portfolio process is different from zero.

    We prove that only buyers with comparably small positions enter a seller-dominated market late. This result is very intuitive. In a model with trading constraints, players with small enough position could potentially benefit from favorable price trends that outweigh the additional impact cost a player incurs when she initially increases a position that she actually needs to unwind. Under our trading constraints, these are precisely the players that enter late (respectively exit early in buyer-dominated markets).

    With the candidate entry and exit times in hand, we derive candidate best response strategies for buyers and sellers to {\sl exogenous} aggregate trading rates, in terms of the solutions to {\sl unrestricted} trading problems on the resulting endogenous trading intervals. It turns out that the corresponding portfolio processes are strictly monotone, hence admissible and optimal even under the ``no change of trading direction'' condition.

    In terms of the candidate best response functions, we derive a general fixed-point equation for the candidate equilibrium mean trading rate and prove that the fixed-point equation can be rewritten in terms of a higher-order, non-linear, path-dependent integral equation with endogenous terminal condition. 

\subsubsection{Equilibrium dynamics}
    
    Although the approach of reducing our game to a game of timing and the idea of rewriting the fixed-point equation as a non-linear integral equation follows \cite{FHH-2023}, the resulting equation is very different and the analysis of the equilibrium dynamics is much more challenging.   
    
    Compared to the market dropout situation studied in \cite{FHH-2023}  the continuous influx of players adds additional nonlinear components to the fixed-point equation. Moreover, the terminal condition of the equilibrium equation is much harder to identify, and it turns out that it depends on the entire history of market entries. As a result, our equilibrium equation is fully path-dependent. This is {\sl not} the case in \cite{FHH-2023} where all players enter the market at the initial time. 

    Our key observation is that the path-dependence
can be disentangled by introducing two free parameters that represent the aggregate trading rate at the terminal time and - loosely speaking - the largest initial endowment that a seller that exits early can hold. We show that solving the fixed-point equation is equivalent to solving a {\sl two-dimensional} root-finding problem that incorporates parameter-dependent solution to a still nonlinear and still higher-order, yet non path-dependent, integral equation. The corresponding root-finding problem in \cite{FHH-2023} reduces to {\sl one-dimensional} problem, as the second free parameter is a function of the first. 
    
    The main difficulty when solving the two-dimensional root-finding problem is to verify monotonicity properties of the solution map with respect to our parameters, which we achieve by identifying Volterra integral equations for the corresponding partial derivatives and applying a suitable comparison principle. %
    The Volterra kernel depends non-linearly on the solution, is generally discontinuous, and exhibits a singularity at the origin; as a result, it does not fall into any standard class of kernels.
    
    We prove the non-negativity of the kernel function and establish growth bounds that carry over to the partial derivatives of the solution mapping.  This allows us to prove that the root-finding problem has a solution and that the solution is unique under a bound on the impact of buyers or sellers on the market dynamics, depending on which side holds the smaller initial position. Moderate influence conditions are standard in the game theory literature when proving uniqueness of Nash equilibria. In various economic settings they have, for instance, been imposed in, e.g. ~\cite{H-2005, HS-2006}. In market impact games weak interaction conditions have been imposed in, e.g.~\cite{FGHP-2018, FHX1, Micheli-2023}.
    
    Our theoretical analysis is accompanied by extensive numerical simulations. Our simulations suggest that convergence to the MFG equilibrium is fast and that the MFG provides a good approximation for games with 15 players or more. Our simulations also suggest that trading constraints may lower aggregate costs in markets with strong permanent impact. This result is very intuitive. Without constraints buyers may choose to initially sell additional assets in seller dominated markets, thereby amplifying a downward price trend that results in additional trading costs for the majority of market participants.

    The remainder of this paper is organized as follows. In Section \ref{sec:model} we introduce our liquidation games. Section \ref{Sec3-new} studies the auxiliary forward-backward systems and derives the candidate best-response function for buyers and sellers separately. The equilibrium analysis is carried out in Section \ref{sec:eqana}. Section \ref{sec:examples} illustrates the impact of our trading constraint on equilibrium trading. Section \ref{sec:conclusion} concludes.

\medskip

\section{Model and main results}\label{sec:model}

In this section, we introduce the portfolio liquidation game with trading constraint and state our main results, including an existence and uniqueness of equilibrium result under a suitable bound on the player's risk aversion. In particular, we show that in seller dominated markets where the majority of the liquidity is on the sell side, in equilibrium buyers with small initial positions enter the market late, and sellers with small initial positions leave the market early; in buyer dominated markets buyers with small initial portfolios leave the market early and sellers with small initial positions enter late.

\subsection{The trading game}

We first consider a liquidation game among $N$ players in which player $i \in \{1, ..., N\}$ holds an initial portfolio of $x_i \in \bR$ shares that he or she needs to close over the time interval $[0,T]$. If the initial position is positive, the player needs to sell the stock; else, he or she needs to buy it. The distribution of the players' initial portfolios is denoted by
    \[
        \nu^{N}(dx) := \frac{1}{N}\sum_{i=1}^N \delta_{x_i}(dx).
    \]

    Following the majority of the liquidation literature we assume that only absolutely continuous trading strategies are allowed. The portfolio process of player $i$ is hence given by
    \[
        X^i_t = x_i-\int_0^t \xi^i_s \,ds, \quad t \in [0,T]
    \]
  subject to the binding liquidation constraint
    \[
    	X^i_T = 0,
    \] 
    where $\xi^i_t$ denotes the trading rate at time $t \in [0,T]$; positive rates indicate that the player is selling the asset; negative rates indicate that he or she is buying it.  

    We assume that the unaffected price process, against which the trading costs are benchmarked, follows some Brownian martingale $S$ and that the transaction price process of player $i$ is of the form
    \[
        \tilde S^i_t = S_t - \int_0^t \kappa \overline\xi^N_s ds - {\color{black} \frac 1 2}\eta_t \xi^i_t, \quad t \in [0,T]
    \]
    for some deterministic positive market impact process  $\eta$ and constant $\kappa$, and
    \begin{align*}
        \overline\xi^N_t ~:=~ \frac{1}{N}\sum_{i=1}^{N} {\color{black}\xi^i_t}
    \end{align*}
    denotes the average trading rate throughout the entire universe of players. That the permanent impact factor $\kappa$ and the instantaneous impact factor $\eta$ is the same for all players accounts for the fact that all players are trading in the same market.

    The assumption that permanent market impact depends on aggregate behavior is standard in the literature on liquidation games, see e.g. \cite{CL-2018,CL-2015,C-Jai-2018b,FGHP-2018, FHH-2023}. By contrast, the instantaneous impact depends on individual, not aggregate demand. As different traders never consume liquidity at exactly the same time in practice, it is reasonable to assume that instantaneous impact always only affects one player.

    The player's liquidation cost $C^i$ is defined as the difference between the book value and the proceeds from trading:
    \[
        C^i = x_i S_0 - \int_0^T \tilde S^i_t \xi^i_t\, dt.
    \]
    Doing integration by parts and taking expectations the martingale terms drops out and the expected liquidation cost equals
    \[
        \mathbb{E}[C^i] = \int_0^T \left( \frac{1}{2}\eta_t {\color{black}(\xi^i_t)^2+\kappa \bar \xi^N_t X^i_t} \right) \, dt.
    \]

    Introducing an additional risk term  $\frac{1}{2}\lambda_t (X^i_t)^2$ for some deterministic non-negative function $\lambda$ that penalizes slow liquidation, the cost functional for a generic player $i$ given the vector {\color{black} $\xi^{-i}=(\xi^j)_{j \neq i}$} of all the other players' strategies equals
    \[
        J(\xi^i; \xi^{-i}):=\int_0^T \left(\frac{1}{2}\eta_t(\xi^i_t)^2+\frac{\kappa X^i_t}{ N}  \sum_{j=1}^N \xi^j_t +\frac{1}{2}\lambda_t (X^i_t)^2\right)\,dt.
    \]

    The above cost function is standard in the liquidation literature. Departing from the standard literature,  we assume that the players are not allowed to change the direction of trading. The set of admissible trading strategies of player $i$ is hence given by the set
    \begin{equation*}
        \mathcal A_{x_i}:=\left\{\xi^i \in L^2([0,T]) \;\bigg|\;  \sign(x_i) \xi^i_t \geq 0 ~ \mbox{and} ~ X^i_T = 0 \right\}
    \end{equation*}
    of all square integrable strategies that satisfy the trading and the liquidation constraint, and her optimization problem reads
    \begin{equation}\label{eq:N-player}
        \min_{\xi^i \in {\mathcal{A}}_{x_i}} J(\xi^i;\xi^{-i}) \quad \mbox{ s.t. } \quad dX^i_t = - \xi^i_t dt,\quad X_0^i=x_i.
    \end{equation}
    An admissible strategy profile $\xi^* = \big(\xi^{*,1}, ... , \xi^{*,N} \big)$ is a Nash equilibrium if for all $\xi^i \in \mathcal{A}_{x_i}$ and all $i=1, ..., N$,
    \[
        J(\xi^{*,i};\xi^{*,-i}) \leq J(\xi^{i};\xi^{*,-i}).
    \]

    In the corresponding MFG the distribution of the players' initial positions is denoted by $\nu$, the average trading rate is replaced by an exogenous trading rate $\mu$, the representative player's cost functional is given by
    \[
        J(\xi; \mu):=\int_0^T \left(\frac{1}{2}\eta_t\xi^2_t+\kappa \mu_t X_t+\frac{1}{2}\lambda_t X^2_t\right)\,dt
    \]
    and her control problem reads
    \begin{equation}
        \label{opt-rp}
        \min_{\xi \in \mathcal A_x} J(\xi;\mu) \quad \mbox{ s.t. } \quad dX_t = - \xi_t dt,\quad X_0 \sim \nu.
    \end{equation}
    Given an initial distribution $\nu$ of portfolios and the optimal trading rates $\xi^{*,x,\mu}$ for the representative player with initial position $x$ as a function of the exogenous mean trading rate $\mu$, the equilibrium condition both in the MFG and the finite-player game reads
    \[
        \mu = \int_{\mathbb{R}}\xi^{\ast, x,\mu} \nu(dx).
    \]

\subsection{Assumptions and main result}

In this section, we state our existence and uniqueness of equilibrium result. We shall distinguish between buyer- and seller-dominated markets.  

\begin{definition}
	We say that a market is {\sl seller-dominated} if the initial aggregate position of sellers exceed that of the buyers, i.e.~if  $\int x \nu(dx)>0$. We call the model {\sl buyer-dominated} if the buyers' aggregate initial positions exceed that of the sellers. 
\end{definition}

    We proceed under the following standing assumption on the model parameters, which is binding throughout the paper. The fact that the permanent impact factor $\kappa$ is assumed to be constant is needed to unify the verification arguments for the MFG and the $N$-player game. If only the MFG is considered, then $\kappa$ can be chosen to be a continuously differentiable function of time.

    \begin{ass}
        \label{ass:single-player} The cost coefficients satisfy $$ \lambda \in L^{\infty}([0,T];[0,\infty)), \quad 1/\eta, \eta\in C^{1}([0,T];(0,\infty)), \quad \mbox{and} \quad \kappa > 0.$$
    \end{ass}

    We are now ready to state the main result of this paper. Its proof is given in the following sections.
    To unify notation, in the following, we denote by $\nu$ a generic distribution of initial positions, which reduces to the empirical distribution of $\{x_i\}_{i=1, \dots, N}$ in the case of an $N$-player game.

    \begin{theorem}\label{thm:main}
        Suppose that the distribution of the players' initial portfolios has a finite absolute first moment, i.e. 
        \[
        		\int |x| \nu(dx) < \infty
        \]
        and that the instantaneous impact parameter and the risk aversion coefficient satisfy at least one of the following conditions:
        \begin{itemize}
            \item The function $\lambda \leq \lambda_0$, where $\lambda_0$ is an explicit positive constant\footnote{See Appendix \ref{app:A} for the derivation of the explicit expression $\lambda_0= 2 (T^2\Vert \eta^{-1}\Vert_{\infty}^2 \|\eta\|_\infty)^{-1} e^{- 2T \kappa \Vert \eta^{-1}\Vert_{\infty}}$.} that depends only on the time horizon $T$, the permanent impact factor $\kappa$ and the process $\eta$ (e.g. $\lambda = 0$).
            \item The product $\lambda\eta$ is non-decreasing (e.g. constant parameters.)
        \end{itemize}
        Then the following holds:
        \begin{itemize}
            \item[(i)] Both the $N$-player and the MFG admit a Nash equilibrium such that the aggregate equilibrium trading rate does not change its sign and such that
            \begin{itemize}
            	\item the mapping $t \mapsto \eta_t \mu_t$ is non-increasing if the market is seller-dominated
		\item the mapping $t \mapsto \eta_t \mu_t$ is non-decreasing if the market is buyer-dominated.
        \end{itemize}
        {In this equilibrium, if the market is seller- (buyer-)dominated, then  sellers (buyers) never enter late while buyers (sellers) never exit early.}  
            \item[(ii)] %
            There exists a constant\footnote{An explicit form of $\nu_0$ is difficult to obtain, as it depends on an a priori bound of an intricate Volterra integral equation (see the proof of Lemma~\ref{lem:differentiation_bc} and Theorem \ref{thm:existence2}).} $\nu_0 \in [0,1)$ that depends only $T$ and the model parameters $\lambda, \eta$ and $\kappa$ such that if 
            \begin{itemize}
                \item $\nu\big((-\infty,0]\big) < \nu_0$ for the case of a seller-dominated market,
                \item $\nu\big([0,\infty)\big) < \nu_0$ for the case of a buyer-dominated market,
            \end{itemize}
            that is, if the non-dominating side of the market is small enough,
            then the equilibrium from (i) is  unique with continuous  aggregate equilibrium trading rate satisfying the above monotonicity properties.

            \item[(iii)] Assume that the initial distribution $\nu$ of the MFG has a density and satisfies the uniqueness condition in (ii). Let $(\nu^N)_{N \in \mathbb{N}}$ be a sequence of initial distributions for the finite-player game that converges weakly to $\nu$ and such that\footnote{The convergence assumptions are equivalent to convergence in the Wasserstein $W_1$ distance.}
            $$\lim_{N\to\infty}\int_{\bR} |x|\nu^N(dx) = \int_{\bR} |x|\nu(dx).$$
            Then the corresponding finite-player aggregate equilibrium trading rates $(\mu^N)_{N\in\mathbb N}$  converge to the mean-field equilibrium $\mu$:
            \[
                \lim_{N\rightarrow\infty}\sup_{0\leq t\leq T}|\mu^N_t-\mu_t|=0.
            \]
            \item[(iv)] If the market is initially equilibrated, i.e.~if $\int x \nu(dx)=0$, then $\mu \equiv 0$ is an equilibrium.
        \end{itemize}
    \end{theorem}

It follows from the preceding theorem that a continuous, monotone equilibrium aggregate trading rate exists whenever the model parameters are constant. In this case, the equilibrium is unique in this class whenever the impact of buyers (sellers) in a seller-dominated (buyer-dominated) market is small enough.

\subsection{Characterization of equilibria}

   Our equilibrium trading rates can be charaterized in terms of high-dimensional non-linear forward-backward system of ordinary differential equations. In fact, given the trading rates $\xi^{-i}{=(\xi^j)_{j\neq i}}$ of all other players the Hamiltonian associated with the optimization problem of player $i$ is given by
    \[
        H(t,\xi^i,X^i,Y^i; \xi^{-i} )=-\xi^i Y^i + \frac{1}{2}\eta_t (\xi^i)^2+\kappa \bar \xi^N_t  X^i +\frac{1}{2}\lambda_t (X^i)^2.
    \]
    In the corresponding MFG the average rate $\bar \xi^N$ is to be replaced by a generic trading rate $\mu$. 
    
    {\color{black}Setting $y_{\sign(x)}:=y\vee 0$ if $x>0$ and $y_{\sign(x)}:=y\wedge 0$ if $x\leq 0$ for $x,y \in \mathbb R$},  minimizing the Hamiltonian pointwise and taking the trading constraint into consideration yields the candidate equilibrium strategies
    \begin{equation}
        \label{xi*8}
        \xi^{i}_t :=
        \left(\frac{Y^i_t  - {\frac 1 N \kappa X^i_t}   }{\eta_t}\right)_{\sign(x_i)}
    \end{equation}
    in terms of a solution to the coupled, non-linear forward-backward differential equation
    \begin{equation}
        \label{eq:XY-i2}
        \left\{\begin{split}
                   \bigg.  \dot X^i_t=&~-\left( \frac{Y^i_t  - {\frac 1 N \kappa X^i_t} }{\eta_t} \right)_{\sign(x_i)}  \\
                   \bigg.  -\dot Y^i_t=&~(\lambda_t X^i_t+\kappa \bar \xi^N_t  ) \\
                   X^i_0=&~x_i,\quad X^i_T = 0
        \end{split}\right., \qquad \text{for a.e. } t\in[0,T].
    \end{equation}

\begin{remark}
We notice that the terminal state of the adjoint equation is unknown, due to the liquidation constraint on the state process. The terminal condition needs to be determined in equilibrium.
\end{remark}

{\color{black}Solving the high-dimensional, non-linear forward-backward ODE system \eqref{eq:XY-i2} is challenging. To solve the system we proceed in two main steps. 

\begin{itemize}
	\item In Section 3 we decouple the system by replacing the aggregate trading rate in the co-state dynamics by an exogenous process $\mu$. The decoupled system allows us to obtain best responses $\xi^{i,\mu}$ to an exogenous, respectively equilibrium trading rates $\mu$, for MFG, respectively, the $N$-player game. 
	\item In Section 4 we solve the fixed-point problem of finding an exogenous trading rate $\mu^*$ that coincides with the average trading rate $\overline \xi^{\mu^*}$ associated with the strategies $\xi^{i,\mu^*}$, and prove that any such $\mu^*$ forms an equilibrium aggregate trading rate.     
\end{itemize}
}
We proceed under the standing assumption of a seller dominated market. The case of a buyer-dominated market is symmetric.

\section{Conditional best responses}\label{Sec3-new}

Following the approach introduced in \cite{FHH-2023} we now reduce the coupled system \eqref{eq:XY-i2} to a single forward-backward system by replacing the aggregate trading rate $\bar \xi^N$ by an exogenous process $\mu$. This is precisely what one would do in the MFG; in our model this approach also works for finite player games. 

{\color{black}
While our decoupling approach reduces the high-dimensional, coupled system to a family of independent systems, the new system is still non-linear. In a second step we, therefore, introduce a family of auxiliary linear systems on arbitrary time intervals $[\sigma,\tau] \subseteq [0,T]$ that can be solved in closed form. 

The closed-form solutions yield a family of auxiliary strategies $\xi^{x,\mu,\sigma,\tau}$ depending on a player's initial endowment $x$, in terms of which we shall obtain candidate optimal market entry and exit times $\sigma^*$ and $\tau^*$. We show that in a seller-dominated markets, sellers never enter late while buyers never exit early. In particular, for buyers we only need to determine $\sigma^*$ and for sellers we only need to determine $\tau^*$. 

In a final step we prove that the resulting strategy $\xi^{x,\mu,\sigma^*,\tau^*}$ is an optimal trading strategy in a model with exogenous impact function $\mu$ for MFG. For the $N$-player game, the best-response property will only hold in equilibrium. 
}

\subsection{Decoupling the forward-backward system}
    
Motivated by the analysis of MFGs we now consider - for any $\delta \in [0,1]$, any initial position $x \in \bR$ and any exogenous trading rate $\mu$ - the auxiliary forward-backward system
    \begin{equation}
        \label{eq:XY-i}
        \left\{\begin{split}
                   \bigg.  \dot X_t=&~-\left( \frac{Y_t  - {\delta \kappa X_t} }{\eta_t} \right)_{\sign(x)}  \\
                   \bigg.  -\dot Y_t=&~(\lambda_t X_t+\kappa \mu_t  ) \\
                   X_0=&~x,\quad X_T = 0
        \end{split}\right., \qquad \text{for a.e. } t\in[0,T].
    \end{equation}

    The case $\delta=0$ corresponds to the MFG. In this case the above system describes the forward-backward system associated with the representative player's optimization problem, and for any given exogenous trading rate $\mu$ we expect a solution $(X^\mu,Y^\mu)$ to yield the representative agent's best response
    \begin{equation}
        \xi^\mu := \left( \frac{Y^\mu - \delta \kappa X^\mu}{\eta}\right)_{\sign(x)}.
    \end{equation}

    The case $\delta=\frac{1}{N}$ corresponds to the $N$-player game. In this case we expect  the best response property to hold in equilibrium. 

 We proceed under the assumption that the exogenous trading rate $\mu$ is strictly positive. For seller-dominated markets this condition will be verified in equilibrium. For technical reasons we also need to assume that the map $t \mapsto \eta_t \mu_t$ is non-increasing. This assumption, too, will be verified in equilibrium.

    \begin{ass}
        \label{ass:mu_sign}
        \begin{itemize} 
            \item[(i)] The function $\mu: [0,T] \to \bR$ does not change sign and w.l.o.g. $\mu > 0$.
            \item[(ii)] The function $t \mapsto \eta_t \mu_t$ is non-increasing (non-decreasing if $\mu < 0$).
        \end{itemize}
    \end{ass}

    \subsection{Reduction to a game of timing}

We are now going to reduce the non-linear system \eqref{eq:XY-i} to a family of linear ones, from which we shall deduce that our trading game can be reduced to a game of timing, where the players need to determine optimal market entry and exit times. To this end, we consider, for any pair $0 \leq \sigma < \tau \le T$ the linear ODE system
    \begin{equation}
        \label{eq:FBODE} %
        \left\{\begin{split}
                   \bigg. \dot X_t=&~- \frac{Y_t  - {\delta\kappa X_t}     }{\eta_t}  1_{\{\sigma \leq t \le \tau\}}\\
                   \bigg.  -\dot Y_t=&~    \lambda_t X_t+\kappa\mu_t \\ %
                   X_\sigma=&~x,\quad  X_\tau = 0
        \end{split}\right., \qquad \text{for a.e. } t\in[0,T] .
    \end{equation}
    
    Our goal is to identify entry and exit times $\sigma$ and $\tau$ such that the solutions to the constrained system \eqref{eq:XY-i} and the unconstrained system \eqref{eq:FBODE} coincide. 
    
    The unconstrained system can be solved in closed form. To see this, we denote by
    $(A^{{\delta}}, B^{\delta,\tau})$ the unique solution to the following singular Riccati equation on $[0,T]$:
    \begin{equation}
        \label{eq:AB+}
        \left\{\begin{split}
                   -\dot A_t=&~-\frac{A^2_t}{\eta_t} {+\delta\frac{\kappa}{\eta_t}A_t} +\lambda_t\\
                   -\dot B_t=&~\left(-\frac{A_tB_t}{\eta_t}+\kappa\mu_t\right) 1_{\{t \leq\tau\}}\\
                   \lim_{t\nearrow T}A_t=&~\infty,\quad B_{\tau} =0
        \end{split}\right.
    \end{equation}

    {\color{black} The analysis in \cite[Section 2.3.2]{FHH-2023} }shows that, for any exit time $\tau \in (0,T]$, solving the Riccati equation on the interval $[\sigma,\tau]$ is equivalent to solving the ODE system \eqref{eq:FBODE}, and the explicit solution is given by
    \begin{equation}
        \label{eq:explicit-X}
        \left\{   \begin{split}
                      X^{\delta,\sigma,\tau}_t &=   xe^{-\int_\sigma^t \frac{A^{{\delta}}_r  {-\delta\kappa} }{\eta_r}\,dr}    -\int_\sigma^t \frac{1}{\eta_s} e^{-\int_s^t \frac{A^{{\delta}}_r   {-\delta\kappa}    }{\eta_r}\,dr }\int_s^{\tau} \kappa \mu_u e^{-\int_s^u\frac{A^{{\delta}}_r}{\eta_r}\,dr}\,du\,ds \\
                      Y^{\delta, \sigma,\tau}_t &= A^\delta_{t} X^{\delta, \sigma,\tau}_t + B^{\delta,\tau}_t.
        \end{split} \right.
    \end{equation}

\begin{remark}
    We emphasize that the Riccati equation \eqref{eq:AB+} can be solved for any $\delta \in [0,1]$ and any pair $0 \leq \sigma < \tau \leq T$, and hence that the process $\Big( X^{\delta,\sigma,\tau}, Y^{\delta,\sigma,\tau} \big)$ is well defined for any such triple. However, in general we cannot expect the process $X^{\delta,\sigma,\tau}$ to satisfy the liquidation constraint. Hence solving the systems \eqref{eq:FBODE} and \eqref{eq:AB+} is not equivalent in general; this is true only if we know \textsl{a priori} that $\tau$ is an exit time, i.e., that\footnote{For the process $X^{\delta,\sigma,\tau}$ to satisfy the liquidation constraint for any given $\tau$, one has to replace the singular terminal condition in \eqref{eq:AB+} by $\lim_{t\nearrow \tau}A_t=~\infty$, in which case the process $A$ would depend on $\tau$.} $$X^{\delta, \sigma, \tau}_\tau = 0.$$
\end{remark}

    The processes $(X^{\delta,\sigma,\tau}, Y^{\delta,\sigma,\tau})$ defined in \eqref{eq:explicit-X} turn out to be very useful for our subsequent analysis, as they allow us to identify candidate equilibrium strategies. Specifically, they allow us to introduce the following auxiliary strategies:
    \begin{equation}
        \label{aux-strat}
        \xi^{\delta, \sigma,\tau} = \frac{Y^{\delta, \sigma,\tau} {-\delta\kappa X^{\delta, \sigma,\tau}}   }{\eta}.
    \end{equation}
    For $\delta = 0$ and an exit time $\tau$ the strategy $\xi^{0, \sigma,\tau}$ is the unique optimal trading strategy of the representative agent in a liquidation model without trading constraints and with trading interval $[\sigma,\tau]$. 
    
    For $\delta = \frac 1 N$ and an exit time $\tau$ the strategy is admissible in an $N$-player game without trading constraints and with trading interval $[\sigma,\tau]$ as stated in the following lemma. The proof of (i) follows from \cite[Lemma 2.8]{FHH-2023}; part (ii) follows by construction.

    \begin{lemma}
        \label{lem-admis}
        \begin{itemize}
            \item[ (i)]  {The strategy $ {\xi^{\delta, \sigma, \tau}}$ defined in \eqref{aux-strat} is absolutely continuous on $[\sigma,\tau]$} and there exists a constant $C>0$ that depends only on $\mu, \sigma, \tau, \eta, \lambda, \kappa$ such that
            \begin{align}
                \label{eq:xi_apriori_estiate}
                \Vert \xi^{\delta, \sigma, \tau}\Vert_{\infty} + \Vert \dot\xi^{\delta, \sigma, \tau}\Vert_\infty \le C(1+\abs{x}), \qquad x\in \bR, \; \delta\in[0,1].
            \end{align}
            \item[(ii)]If $\mu\in L^{1}([0,T])$, then the strategy is square integrable on $[\sigma,\tau]$. If, in addition, $\tau$ is an exit time for this strategy, then $ {\xi^{\delta, \sigma, \tau}}$ is admissible.
        \end{itemize}
    \end{lemma}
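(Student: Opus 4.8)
The plan is to exploit the explicit representation \eqref{eq:explicit-X} of $(X^{\delta,\sigma,\tau},Y^{\delta,\sigma,\tau})$ together with the standard analysis of the singular Riccati system \eqref{eq:AB+}; part~(i) is then essentially \cite[Lemma 2.8]{FHH-2023} adapted to the $\delta$-parametrised family, and part~(ii) amounts to unwinding the construction. The facts about \eqref{eq:AB+} that I use are the following. For each $\delta\in[0,1]$ the scalar equation for $A^\delta$ is the usual liquidation Riccati equation with an additional term $\delta\kappa A/\eta$; under Assumption~\ref{ass:single-player} it admits a unique solution on $[0,T)$ that is strictly positive, continuous, and satisfies the standard singular asymptotics $A^\delta_t\sim \eta_T/(T-t)$ as $t\nearrow T$. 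A comparison argument for the Riccati equation shows that $A^\delta$ is monotone in $\delta$, so that $A^0\le A^\delta\le A^1$; since $A^0$ and $A^1$ share the same blow-up rate, this yields bounds on $A^\delta$ that are uniform in $\delta\in[0,1]$ on each $[0,\tau']$, $\tau'<T$, together with a $\delta$-uniform control of the blow-up near $T$. Feeding these into the linear equation for $B^{\delta,\tau}$ (whose solution $B_t=\int_t^\tau\kappa\mu_s e^{-\int_t^s A^\delta_r/\eta_r\,dr}\,ds$ has absolute value $\le\kappa\|\mu\|_{L^1}$, the kernel being $\le1$ because $A^\delta\ge0$) bounds $\|B^{\delta,\tau}\|_\infty$ by a constant depending only on $\mu,\sigma,\tau,\eta,\lambda,\kappa$. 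Finally, under Assumption~\ref{ass:mu_sign}, in force throughout this subsection, $\mu$ is automatically bounded, since $\eta\mu$ is monotone on the compact interval $[0,T]$ while $\eta,1/\eta$ are bounded away from $0$ and $\infty$.

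For part~(i), using $Y^{\delta,\sigma,\tau}=A^\delta X^{\delta,\sigma,\tau}+B^{\delta,\tau}$ I rewrite $\xi^{\delta,\sigma,\tau}=\big((A^\delta-\delta\kappa)X^{\delta,\sigma,\tau}+B^{\delta,\tau}\big)/\eta$ and split $X^{\delta,\sigma,\tau}$ into its $x$-homogeneous term and its $x$-independent term in \eqref{eq:explicit-X}. As $A^\delta\ge0$ and $\delta\kappa$ is bounded, the kernels $\exp(-\int_s^t(A^\delta_r-\delta\kappa)/\eta_r\,dr)$ are bounded uniformly in $\delta,s,t$, so the $x$-homogeneous part of $X^{\delta,\sigma,\tau}$ is $O(|x|)$ and the $x$-independent part is $O(1)$ ($\mu$-terms controlled by $\|\mu\|_\infty$); the same bookkeeping gives $\|\xi^{\delta,\sigma,\tau}\|_\infty\le C(1+|x|)$ once one notes that at the endpoint $\tau=T$, where $A^\delta$ blows up, the blow-up is exactly cancelled by the decay $X^{\delta,\sigma,\tau}_t=O(T-t)$, keeping $A^\delta X^{\delta,\sigma,\tau}$ bounded. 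For the derivative, I differentiate $\xi^{\delta,\sigma,\tau}=(Y^{\delta,\sigma,\tau}-\delta\kappa X^{\delta,\sigma,\tau})/\eta$ along \eqref{eq:XY-i}: with $-\dot Y^{\delta,\sigma,\tau}=\lambda X^{\delta,\sigma,\tau}+\kappa\mu$, $\dot X^{\delta,\sigma,\tau}=-\xi^{\delta,\sigma,\tau}$ on $[\sigma,\tau]$ and $1/\eta\in C^1$,
\[
  \dot\xi^{\delta,\sigma,\tau}=\frac{-(\lambda X^{\delta,\sigma,\tau}+\kappa\mu)+\delta\kappa\,\xi^{\delta,\sigma,\tau}}{\eta}+\big(Y^{\delta,\sigma,\tau}-\delta\kappa X^{\delta,\sigma,\tau}\big)\frac{d}{dt}\frac1\eta .
\]
Every factor on the right is already bounded by $C(1+|x|)$ — for $Y^{\delta,\sigma,\tau}-\delta\kappa X^{\delta,\sigma,\tau}=\eta\,\xi^{\delta,\sigma,\tau}$ this is immediate, and $\|X^{\delta,\sigma,\tau}\|_\infty\le C(1+|x|)$ by the estimate above — while $\|\lambda\|_\infty$, $\|\mu\|_\infty$ and the $C^1$-norm of $\eta$ are part of the data; hence $\|\dot\xi^{\delta,\sigma,\tau}\|_\infty\le C(1+|x|)$, which is \eqref{eq:xi_apriori_estiate}, and in particular $\xi^{\delta,\sigma,\tau}$ is absolutely continuous on $[\sigma,\tau]$.

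For part~(ii), I only assume $\mu\in L^1([0,T])$ (so $\mu$ need not be bounded and Assumption~\ref{ass:mu_sign}(ii) is dropped); $\xi^{\delta,\sigma,\tau}$ is still well defined by \eqref{eq:explicit-X}--\eqref{aux-strat}. If $\tau<T$ then $A^\delta$ is bounded on $[\sigma,\tau]$, hence so are $X^{\delta,\sigma,\tau}$ and $B^{\delta,\tau}$ ($\mu$-terms controlled by $\|\mu\|_{L^1}$), so $\xi^{\delta,\sigma,\tau}$ is bounded and a fortiori in $L^2([\sigma,\tau])$. If $\tau=T$, a direct estimate from \eqref{eq:explicit-X} using only $A^\delta_t=O((T-t)^{-1})$ and $\mu\in L^1$ yields $X^{\delta,\sigma,\tau}_t=O((T-t)|\log(T-t)|)$ near $T$, hence $\xi^{\delta,\sigma,\tau}_t=O(|\log(T-t)|)\in L^2([\sigma,T])$; alternatively, for $\delta=0$ this is the optimiser of an unconstrained liquidation problem on $[\sigma,\tau]$ and its square integrability follows from finiteness of the cost. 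Finally, by construction $X^{\delta,\sigma,\tau}$ solves $\dot X=-\xi^{\delta,\sigma,\tau}$ with $X_\sigma=x$, so extending $\xi^{\delta,\sigma,\tau}$ by $0$ outside $[\sigma,\tau]$ the associated portfolio process equals $x$ on $[0,\sigma]$, coincides with $X^{\delta,\sigma,\tau}$ on $[\sigma,\tau]$, and is frozen at $X^{\delta,\sigma,\tau}_\tau$ afterwards; it therefore satisfies the liquidation constraint $X_T=0$ precisely when $\tau$ is an exit time, i.e.\ $X^{\delta,\sigma,\tau}_\tau=0$.

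The substantive input — and the only step that is not bookkeeping on top of \eqref{eq:explicit-X} — is the $\delta$-uniform analysis of the singular Riccati system \eqref{eq:AB+}: the monotonicity of $A^\delta$ in $\delta$ and the resulting two-sided, $\delta$-independent control of $A^\delta$ and of its blow-up rate at $T$, together with the matching endpoint estimate at $t=\tau=T$. This is precisely what is imported from \cite[Lemma 2.8]{FHH-2023}.
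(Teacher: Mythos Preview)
Your proposal is correct and follows essentially the same approach as the paper: the paper's proof of (i) is a bare citation of \cite[Lemma~2.8]{FHH-2023} and (ii) is said to follow ``by construction,'' and what you have written is precisely the detailed expansion of that reference, namely the $\delta$-uniform Riccati bounds on $A^\delta$ (via comparison), the resulting control of $(A^\delta-\delta\kappa)X^{\delta,\sigma,\tau}$ through the endpoint cancellation, the explicit derivative formula for $\dot\xi^{\delta,\sigma,\tau}$, and the bookkeeping that recovers the liquidation constraint exactly when $\tau$ is an exit time. The monotonicity $A^0\le A^\delta\le A^1$ you invoke is indeed correct (it follows from a standard comparison for the transformed equation $V'=1/\eta-\delta\kappa V/\eta-\lambda V^2$, $V(0)=0$), so the uniformity in $\delta$ is justified.
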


In what follows we are using the auxiliary strategies $\xi^{\delta, \sigma,\tau}$ to heuristically identify a pair $\{\sigma^*,\tau^*\}$ of market entry and exit times such that the system \eqref{eq:XY-i} on $[0,T]$ and the system \eqref{eq:FBODE} on $[\sigma^*,\tau^*]$ coincide.

    \subsubsection{Candidate entry times}
 
 If a player optimally exits the market at time $\tau^*$ and optimally enters the market at some strictly positive time $\sigma^*$, then we expect the player to start trading gradually. That is, we expect that the smooth pasting condition 
    \begin{equation}
        \label{entry}
        Y^{\delta, \sigma^*,\tau^*}_{\sigma^*} -\delta \kappa X^{\delta,\sigma^*,\tau^*}_{\sigma^*}= 0
    \end{equation}
to hold. To identify candidate optimal entry times we introduce the function    
\begin{equation} \label{eq:psi} 
	\psi^{\delta,\tau}_\mu(t) ~:=~ \frac{B^{\delta,\tau}_t  }{A^\delta_t -\delta\kappa }, \quad t \in [0, \tau]
    \end{equation}
    in terms of which we can represent the adjoint processes $Y^{\delta,\sigma,\tau}$ as
    \begin{equation}
        \label{Yst}
        \begin{split}
            Y^{\delta,\sigma,\tau}_t = \big( A^\delta_t - \delta\kappa \big) \big( X^{\delta,\sigma,\tau}_t + \psi^{\delta,\tau}_\mu(t) \big)   +\delta\kappa X_t^{\delta,\sigma,\tau}              , \quad t \in [\sigma,\tau]     .
        \end{split}
    \end{equation}

    We emphasize that the function $\psi^{\delta,\tau}_\mu$ does not depend on the candidate entry time. If this function is invertible, then it follows from the above that 
    \begin{equation}
        \label{invert}
        -x = \psi^{\delta,\tau^*}_\mu(\sigma^*).
    \end{equation}

\begin{remark}
    Since the function $\psi^{\delta,\tau}_\mu$ is positive for all $\tau$, the preceding equation has no solution for sellers. This suggests that sellers never enter seller-dominated market late to avoid future adverse price movements. 
\end{remark}
    
    To obtain a well-defined candidate entry time, we proceed under the following assumption.

    \begin{ass}
        \label{ass:ceof_relation}
        The function $\psi^{\delta,\tau}_\mu$ is strictly decreasing, hence invertible on the interval $[0,\tau]$ for all $\tau \in [0,T]$, $\delta \in [0,1]$.
    \end{ass}

    The following proposition states sufficient conditions on the model parameters that guarantee the strict monotonicity of $\psi^{\delta,\tau}_\mu$ on $[0,\tau]$. The proof is postponed to the Appendix~\ref{app:A}.

    \begin{proposition}
        \label{prop:sufficient_conditions}
        The function $\psi^{\delta,\tau}_\mu$ admits the integral representation
        \[
            \psi^{\delta,\tau}_\mu(t) = \frac{1}{\alpha^\delta_t}  \int_t^{\tau} e^{ -\int_0^s\frac{A^\delta_r}{\eta_r}\,dr    }\kappa\mu_s\,ds,
            \quad t \in [0, \tau]
        \]
        where
        \[
            \alpha^\delta_t := (A^\delta_t - \delta\kappa)e^{-\int_0^t  \frac{A^\delta_r}{\eta_r}\,dr  }.
        \]
        The function $\alpha^\delta$ is strictly positive, bounded, and differentiable on $[0,T]$.
        In particular, for any $\mu$ that satisfies Assumption~\ref{ass:mu_sign} the function
        $\psi^{\delta,\tau}_\mu$ is bounded, differentiable, and strictly positive on $[0,\tau)$.

        Moreover, the function $\psi^{\delta,\tau}_\mu$ is invertible on $[0,\tau]$ under any of the following conditions: %
        \begin{enumerate}
            [label=(\roman*)]
            \item The function $\lambda$ is small enough (e.g., $\lambda = 0$),
            \item The product $\lambda\eta$ is non-decreasing (e.g., $\lambda$ and $\eta$ are constants).
        \end{enumerate}
    \end{proposition}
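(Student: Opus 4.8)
The plan is to work throughout with the Riccati system \eqref{eq:AB+}. First I would derive the integral representation. On $[0,\tau]$ the second component solves the \emph{linear} equation $\dot B_t = \tfrac{A^\delta_t}{\eta_t}B_t-\kappa\mu_t$ with $B_\tau=0$; multiplying by the integrating factor $e^{-\int_0^t A^\delta_r/\eta_r\,dr}$ and integrating from $t$ to $\tau$ gives $B^{\delta,\tau}_t=e^{\int_0^t A^\delta_r/\eta_r\,dr}\int_t^\tau\kappa\mu_s e^{-\int_0^s A^\delta_r/\eta_r\,dr}\,ds$, and dividing by $A^\delta_t-\delta\kappa$ produces the stated formula with $\alpha^\delta$ as defined. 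Differentiating $\alpha^\delta_t=(A^\delta_t-\delta\kappa)e^{-\int_0^t A^\delta_r/\eta_r\,dr}$ and inserting the first line of \eqref{eq:AB+} makes every quadratic term cancel, leaving $\dot\alpha^\delta_t=-\lambda_t e^{-\int_0^t A^\delta_r/\eta_r\,dr}$; hence $\alpha^\delta$ is non-increasing, Lipschitz (of class $C^1$ when $\lambda$ is continuous), with $\alpha^\delta_0=A^\delta_0-\delta\kappa$. A barrier comparison against the constant $\delta\kappa$ yields $A^\delta_t>\delta\kappa$ on $[0,T)$, so $\alpha^\delta>0$ there, while the singular asymptotics $A^\delta_t\sim\eta_t/(T-t)$ near $T$ (established for this type of equation in \cite{PUR,FHH-2023}) show that $\alpha^\delta_t$ tends to a strictly positive finite limit and $\dot\alpha^\delta_t\to 0$ as $t\nearrow T$; thus $\alpha^\delta$ is bounded and strictly positive on all of $[0,T]$. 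Since $\eta$ is bounded away from $0$ and $\eta\mu$ is non-increasing, $\mu$ is bounded, so the numerator in the representation of $\psi^{\delta,\tau}_\mu$ is bounded and, as $\mu>0$, strictly positive on $[0,\tau)$; hence $\psi^{\delta,\tau}_\mu$ is bounded, (absolutely) continuous, and strictly positive there.

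The second step reduces invertibility to a pointwise inequality. Differentiating $\psi^{\delta,\tau}_\mu=B^{\delta,\tau}/(A^\delta-\delta\kappa)$ and once more using both lines of \eqref{eq:AB+} cancels the shared quadratic term and leaves the scalar linear ODE
\[
\dot\psi^{\delta,\tau}_\mu(t)=\frac{\lambda_t\,\psi^{\delta,\tau}_\mu(t)-\kappa\mu_t}{A^\delta_t-\delta\kappa},\qquad \psi^{\delta,\tau}_\mu(\tau)=0.
\]
As $A^\delta_t-\delta\kappa>0$, the function $\psi^{\delta,\tau}_\mu$ is strictly decreasing on $[0,\tau]$, hence invertible, as soon as $\lambda_t\psi^{\delta,\tau}_\mu(t)<\kappa\mu_t$ for a.e.\ $t\in[0,\tau)$ (the inequality being automatic at $\tau$, where $\psi^{\delta,\tau}_\mu=0<\kappa\mu_\tau$). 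So everything comes down to this estimate.

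It remains to verify the estimate under each hypothesis. From the representation, $\psi^{\delta,\tau}_\mu(t)=(A^\delta_t-\delta\kappa)^{-1}\int_t^\tau\kappa\mu_s e^{-\int_t^s A^\delta_r/\eta_r\,dr}\,ds$; bounding $\mu_s\le\eta_t\mu_t/\eta_s$ for $s\ge t$ (Assumption~\ref{ass:mu_sign}(ii)) and $A^\delta_r\ge A^\delta_t$ for $r\in[t,s]$ once $A^\delta$ is known to be non-decreasing, the substitution $w=\int_t^s\eta_r^{-1}\,dr$ turns the inner integral into an elementary exponential integral and gives $\psi^{\delta,\tau}_\mu(t)\le\kappa\eta_t\mu_t/\big(A^\delta_t(A^\delta_t-\delta\kappa)\big)$. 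The Riccati equation rearranges to $A^\delta_t(A^\delta_t-\delta\kappa)=\eta_t(\dot A^\delta_t+\lambda_t)$, whence $\lambda_t\psi^{\delta,\tau}_\mu(t)\le\kappa\mu_t\,\lambda_t/(\dot A^\delta_t+\lambda_t)<\kappa\mu_t$ exactly when $\dot A^\delta_t>0$. So under hypothesis (ii) it suffices to show $\dot A^\delta>0$ on $[0,T)$ (which also supplies the monotonicity of $A^\delta$ just used): differentiating the Riccati equation once and eliminating the quadratic term by $A^\delta(A^\delta-\delta\kappa)=\eta(\dot A^\delta+\lambda)$ gives, for $p:=\dot A^\delta$, the linear Volterra-type relation $dp_t=p_t\,\tfrac{2A^\delta_t-\delta\kappa-\dot\eta_t}{\eta_t}\,dt-\tfrac{1}{\eta_t}\,d(\lambda\eta)_t$, read as an identity of measures since $\lambda$ is merely bounded; under (ii) the driving measure $d(\lambda\eta)$ is non-negative, and since $p_t=\dot A^\delta_t>0$ on a left-neighborhood of $T$ (there the quadratic term dominates $\lambda$), running the variation-of-constants formula backwards from such a time $t_0$ yields $p_t\ge p_{t_0}\exp\big(-\int_t^{t_0}\tfrac{2A^\delta_r-\delta\kappa-\dot\eta_r}{\eta_r}\,dr\big)>0$ for every $t<t_0$, and letting $t_0\nearrow T$ gives $\dot A^\delta>0$ on $[0,T)$. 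Under hypothesis (i) one bounds the inner integral more crudely by $(\inf_{[0,T]}A^\delta)^{-1}$, obtaining $\lambda_t\psi^{\delta,\tau}_\mu(t)\le\kappa\mu_t\,\Vert\lambda\Vert_\infty\Vert\eta\Vert_\infty/\big(\inf A^\delta\cdot\inf(A^\delta-\delta\kappa)\big)$; since the two infima stay bounded below by positive constants as $\lambda\to0$ (the $\lambda\equiv0$ Riccati solution being explicit and strictly positive), the right-hand side is $<\kappa\mu_t$ once $\Vert\lambda\Vert_\infty$ is small, which is the quantitative meaning of ``$\lambda$ small enough''.

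\textbf{Expected main obstacle.} The delicate point is the estimate on $\psi^{\delta,\tau}_\mu$ under hypothesis (ii), equivalently the positivity $\dot A^\delta>0$: it is the only place the structural assumption on $\lambda\eta$ enters, it forces one to differentiate the \emph{singular} Riccati equation and to cope with the low regularity of $\lambda$ (so the relation for $\dot A^\delta$ must be handled at the level of measures, via a comparison/Volterra argument), and it is precisely the monotonicity of the solution map flagged in the introduction. A second, milder, technical point is pinning down the strictly positive limit of $\alpha^\delta$ at $T$, which relies on the known singular asymptotics of $A^\delta$.
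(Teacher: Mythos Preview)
Your argument is correct, but your route for hypothesis~(ii) is genuinely different from the paper's. Both proofs agree on the integral representation, the formula $\dot\alpha^\delta_t=-\lambda_t e^{-\int_0^t A^\delta_r/\eta_r\,dr}$, the positivity of $\alpha^\delta$ (the paper cites \cite[Lemma~A.1, A.4]{FHH-2023} for the lower bound on $A^\delta-\delta\kappa$ and the behavior at $T$, which is essentially your asymptotic argument), and the scalar ODE $\dot\psi^{\delta,\tau}_\mu=(A^\delta-\delta\kappa)^{-1}(\lambda\psi^{\delta,\tau}_\mu-\kappa\mu)$. The divergence is in how one verifies $\lambda_t\psi^{\delta,\tau}_\mu(t)<\kappa\mu_t$ under~(ii).

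You reduce everything to the structural claim $\dot A^\delta>0$: first bounding $\psi^{\delta,\tau}_\mu(t)\le \kappa\eta_t\mu_t/\big(A^\delta_t(A^\delta_t-\delta\kappa)\big)$ via the monotonicity of $A^\delta$ and Assumption~\ref{ass:mu_sign}(ii), then rewriting the denominator as $\eta_t(\dot A^\delta_t+\lambda_t)$ from the Riccati equation, so that the desired inequality becomes exactly $\dot A^\delta_t>0$. You then prove this by differentiating the Riccati equation to obtain $dp_t=p_t\,c(t)\,dt-\eta_t^{-1}d(\lambda\eta)_t$ with $p=\dot A^\delta$, and running a backward variation-of-constants argument; under~(ii) the function $\lambda\eta$ is non-decreasing hence of bounded variation, so $\lambda$ and therefore $p$ are BV and the measure-level computation is legitimate. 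This is correct and yields the bonus fact that $A^\delta$ is strictly increasing under~(ii).

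The paper instead avoids touching $\dot A^\delta$ altogether. After using $\eta_s\mu_s\le\eta_t\mu_t$ it factors the estimate as $\dot\psi^{\delta,\tau}_\mu(t)\le \frac{\kappa\eta_t\mu_t}{A^\delta_t-\delta\kappa}\big(\lambda_t z(t)-\eta_t^{-1}\big)$, where $z(t):=\frac{1}{\alpha^\delta_t}\int_t^T e^{-\int_0^s A^\delta_r/\eta_r\,dr}\eta_s^{-1}\,ds$ is a $C^1$ function regardless of the regularity of $\lambda$. One computes $\dot z=\frac{1}{(A^\delta-\delta\kappa)\eta}(\lambda\eta\,z-1)$ and argues by contradiction: if $t_0:=\sup\{t:\lambda_t\eta_t z(t)=1\}\ge 0$, then Gr\"onwall applied to $z-\frac{1}{\lambda_{t_0}\eta_{t_0}}$ on $[t_0,T)$ forces $z(t)\ge \frac{1}{\lambda_{t_0}\eta_{t_0}}\ge \frac{1}{\lambda_t\eta_t}$ there, contradicting $z(T)=0$. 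The advantage is that the whole argument lives at the level of classical ODEs for the smooth auxiliary $z$; no measure-valued differentiation of the singular Riccati equation is needed, and the monotonicity assumption on $\lambda\eta$ enters only through the pointwise comparison $\lambda_t\eta_t\ge\lambda_{t_0}\eta_{t_0}$. For part~(i) the two approaches are closer in spirit (both are crude size estimates), though the paper's explicit smallness threshold $\Vert\lambda\Vert_\infty<2\Vert\eta^{-1}\Vert_\infty/(T^2\Vert\widetilde\eta^{-1}\Vert_\infty^2)$ comes from a sharper bound on the same integral via the comparison $A^\delta-\delta\kappa\ge e^{\int\delta\kappa/\eta}A^\circ$.
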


    \subsubsection{Candidate exit times}\label{sec:verification-sellers}

    Candidate entry times were identified %
    by the smooth pasting condition of setting the trading rate at a time of late entry to zero. This approach does not carry over to exit times as the corresponding equation $$Y^{\delta,\sigma, \tau}_\tau {-\delta\kappa X^{\delta,\sigma,\tau}_\tau} = 0$$ holds for {\sl any} exit time. In fact, if $\tau < T$, then
    \[
        Y^{\delta,\sigma, \tau}_\tau {-\delta\kappa X^{\delta,\sigma,\tau}_\tau} = (A^\delta_\tau -\delta\kappa) \left( X^{\delta,\sigma, \tau}_\tau + \psi_\mu^{\delta,\tau}(\tau) \right)   = (A^\delta_\tau-\delta\kappa) \left( 0 - 0 \right)   = 0.
    \]
	
	Instead, we proceed as follows. Regardless of the entry time, we expect the auxiliary portfolio process $X^{\delta,\sigma,\tau}$ introduced in \eqref{eq:explicit-X} 
to satisfy
    \begin{equation}
        \label{cand-time2}
        X_{\tau^*}^{\delta,\sigma,\tau^*} = 0
    \end{equation}
    at any exit time $\tau^*$. For $\tau^*< T$, that is in the case of early exit, this implies that 
    \begin{equation}
        \label{eq:=x}
        \int_\sigma^{\tau^*}\frac{1}{\eta_s} e^{\int_\sigma^s\frac{A^{\delta}_r-\delta\kappa}{\eta_r}\,dr} \int_s^{\tau^*}\kappa\mu_u e^{-\int_s^u\frac{A^{\delta}_r}{\eta_r}\,dr}\,du\,ds=x.  %
    \end{equation}

    To identify those initial positions for which early liquidation may take place we introduce the function
    \begin{equation}\label{eq:def_h}
       h^{\delta,\sigma}_t:=e^{    -\int_\sigma^t\frac{A^{\delta}_r}{\eta_r}\,dr    }\int_\sigma^t \frac{1}{\eta_s} e^{ \int_\sigma^s\frac{2A^{\delta}_r - \delta\kappa}{\eta_r}\,dr}\,ds
    \end{equation}
    and apply Fubini's theorem to rewrite the left hand side of the equation \eqref{eq:=x} as
    \begin{equation}
        \label{cha-tau}
        \begin{split}
            & \int_\sigma^t\frac{1}{\eta_s}e^{\int_\sigma^s\frac{A^{\delta}_r-\delta\kappa}{\eta_r}\,dr} \int_s^t \kappa\mu_u e^{    -\int_s^u\frac{A^{\delta}_r}{\eta_r}\,dr    }\,du\,ds \\
            =&~\int_\sigma^t\kappa\mu_u e^{    -\int_\sigma^u\frac{A^{\delta}_r}{\eta_r}\,dr    }\int_\sigma^u \frac{1}{\eta_s} e^{\int_\sigma^s\frac{2A^{\delta}_r - \delta\kappa}{\eta_r}\,dr}\,ds\,du \\
            =& ~  \int_\sigma^t\kappa\mu_u  {h^{\delta,\sigma}_u}\,du  \\
            =: & ~ \phi^{\delta, \sigma}_\mu(t).
        \end{split}
    \end{equation}

    The function $\phi^{\delta, \sigma}_\mu$ is well defined and {\sl strictly} increasing, due to \cite[Lemma 2.6]{FHH-2023}. Hence we expect the first (and optimal) exit time to be characterized by the equation
    \begin{equation}\label{eqn:phi}
    	x=\phi^{\delta,\sigma}_\mu(\tau^*).
    \end{equation}
 
 \begin{remark}
    Since the function $\phi^{\delta,\sigma}_\mu$ is increasing and starts at zero, this equation has no solution for buyers (negative initial positions). This suggests that buyers never exit a seller dominated market early. 
\end{remark}

\subsection{Verification}\label{Sec:Verification}

Our preceding heuristics suggests that a late optimal entry time $\sigma^*$ and an early optimal exit time $\tau^*$ satisfy the equations
\[
    	\psi^{\delta, \tau^*}_\mu(\sigma^*) = -x, \quad \mbox{respectively}, \quad \phi^{\delta, \sigma^*}_\mu(\tau^*) = x.
    \]
    Since the first equation has no solution for sellers, and the second equation has no solution for buyers, we expect that in seller-dominated markets $\sigma^* = 0$ for any seller while $\tau^* = T$ for all buyers. 
    
    In this section we show that this is indeed the case. For sellers we can draw on earlier work; the verification for buyers is more subtle.

    \subsubsection{Sellers}\label{sec:verification-sellers}

Our previous analysis suggests that for any exogenous trading rate $\mu$ that does not change its sign the optimal entry time for a seller with initial endowment $x$ is given by $\sigma^*=0$, that an optimal exit time for sellers is given by 
    \begin{equation}
        \label{eqn:tau}
        \tau_\mu(x) := \inf\left\{ t \in [0,T] : \phi^{\delta,0}_\mu(t) = x\right\} \quad \mbox{with} \quad \inf \emptyset := T
    \end{equation}
and that an optimal trading strategy is given by
        \begin{equation}
            \label{xi*3}
            \xi^{*,\delta,0,x,\mu}_t:=\left\{\begin{matrix}
                                               \frac{Y^{\delta,0,\tau_\mu(x)}_t - \delta \kappa X^{\delta,0,\tau_\mu(x)}_t}{\eta_t} & \textrm{ if }t\in[0, \tau_\mu(x)] \\ 0 & \textrm{ else }
            \end{matrix}\right. .
        \end{equation}
It has been shown in \cite{FHH-2023} that this is indeed the case under a market drop-out assumption where a player drops out of the market the first time her portfolio process hits zero. Since the drop-out constraint is weaker than the ``no change of trading condition'', this shows that $\tau_\mu(x)$ is admissible and hence optimal in a model with trading constraints, provided that the process
    \[
        Y^{\delta,0,\tau_\mu(x)}_t = A^\delta_t X^{\delta,0,\tau_\mu(x)}_t + B^{\delta,\tau_\mu(x)}_t, \quad t \in [0,\tau_\mu(x))
    \]
    is strictly positive, which follows from the strict positivity of the processes $A^\delta$ and $B^{\delta,\tau_\mu(x)}$. Hence, we have shown the following result.

    \begin{theorem}\label{prop:sellers}
        In a seller dominated market with an exogenous impact function $\mu$ that does not change its sign, the optimal exit time is given by \eqref{eqn:tau} and the unique optimal trading strategy is given by \eqref{xi*3}. 
    \end{theorem}

    \subsubsection{Buyers}

For buyers the analysis is more subtle. Our heuristic suggests to consider the exit time $\tau^*=T$, the entry time 
    \begin{equation}
        \label{eqn:sigma}
        \sigma_\mu(x) := \inf\left\{ t \in [0,T] : \psi^{\delta,T}_\mu(t) = -x\right\} \quad \mbox{with} \quad \inf \emptyset := 0
    \end{equation}
and the trading strategy  
 \begin{equation}
            \label{xi-sigma}
            \xi^{*,\delta,T,x,\mu}_t:=\left\{\begin{matrix}
                 \frac{Y^{\delta,\sigma_\mu(x),T}_t  - \delta\kappa X^{\delta,\sigma_\mu(x),T}_t  }{\eta_t} & \textrm{ if }t\in[\sigma_\mu(x),T] \\ 0 & \textrm{ else }
            \end{matrix}\right. .
        \end{equation}

    The following lemma shows that the strategy $\xi^{*,\delta,T,x,\mu}$ is admissible; the proof is given in Appendix A. 

    \begin{lemma}\label{lem:admiss}
        If the function $\psi^{\delta,T}_\mu$ is strictly decreasing, then the process $$Y^{\delta,\sigma_\mu(x), T} - \delta\kappa X^{\delta,\sigma_\mu(x),T} $$ is strictly negative on the interval $(\sigma_\mu(x),T]$. In particular, the strategy \eqref{xi-sigma} is admissible. 
    \end{lemma}

    We emphasize that we do not expect $\xi^{*,\delta,T,x,\mu}$ to be a best response to any given $\mu$. This will only be the case in the MFG and in equilibrium whenever the set of players is finite. 
    To prove our verification result for buyers we fix an initial position $x_i < 0$ of player $i$ and put
    \[
        \xi^{*,i}:= \xi^{*,\frac 1 N, T,x_i, \mu}, \quad X^{*,i} = X^{\frac 1 N, \sigma_\mu(x_i), T}, \quad Y^{i} = Y^{\frac 1 N, \sigma_\mu(x_i), T}, \quad \sigma^{*,i} = \sigma_{\mu}(x_i).
    \]
    
    To substantiate our intuition that $\xi^{*,i}$ being a best response in equilibrium, we further fix a strategy profile $\xi^{-i}=(\xi^j)_{j\neq i}$ of the player's opponents such that the aggregate trading rate equals $\mu$: 
    \begin{align}\label{eq:fixed_profile}
            \frac{1}{N} \sum_{j \neq i} \xi^j + \frac 1 N \xi^{*, i} = \mu.
    \end{align}

    The MFG corresponds to the case $N=\infty$. In this case, the above equality is to be understood as fixing the exogenous trading rate equal to $\mu$ and we set 
    \[
        J_i(\xi;\xi^{-i}) = J_i(\xi;\mu).
    \]
We are now ready to state our verification result for buyers. The proof is given in Appendix A. 

    \begin{theorem}
        \label{thm:veri-N}
        Let $\xi^{-i}$ be a  strategy profile satisfying \eqref{eq:fixed_profile}. Then under Assumption~\ref{ass:mu_sign} and \ref{ass:ceof_relation} the strategy $\xi^{*,i}$ is the unique solution to the optimal control problem
        \begin{align}
            \label{eq:problem-player-very}
            \inf_{\xi \in \mathcal{A}_{x_i}}J(\xi;\xi^{-i}),  \qquad
            X_t = x_i - \int_0^{t}\xi_s \, ds, \qquad t\in[0,T].
        \end{align}
    \end{theorem}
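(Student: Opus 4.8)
The plan is a completion-of-squares (sufficient maximum principle) argument, preceded by a reduction to a genuinely convex problem. Throughout write $\delta=\tfrac1N$ (the MFG being $\delta=0$), $X^*=X^{\frac1N,x_i,\mu}$, $\xi^*=\xi^{*,i}$, $\sigma^*=\sigma^{*,i}$; by Lemma~\ref{lem-admis}(ii) and the preceding lemma, $\xi^*\in\mathcal{A}_{x_i}$ (it is square integrable, it liquidates, and $\xi^*\le0$). I would first substitute the fixed profile \eqref{eq:fixed_profile} into $J$: when player $i$ uses $\xi$ the aggregate rate is $h_t+\tfrac1N\xi_t$ with $h_t:=\mu_t-\tfrac1N\xi^*_t$, and since every $\xi\in\mathcal{A}_{x_i}$ has $X_0=x_i$ and $X_T=0$, integration by parts gives $\int_0^T\tfrac\kappa N X_t\xi_t\,dt=\tfrac{\kappa x_i^2}{2N}$, a constant on $\mathcal{A}_{x_i}$. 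Hence $J(\xi;\xi^{-i})=\tfrac{\kappa x_i^2}{2N}+\widetilde J(\xi)$ with $\widetilde J(\xi):=\int_0^T\big(\tfrac12\eta_t\xi_t^2+\kappa h_tX_t+\tfrac12\lambda_tX_t^2\big)\,dt$, and it suffices to show $\xi^*$ is the unique minimiser of $\widetilde J$ over $\mathcal{A}_{x_i}$, a strictly convex problem, in which $h\ge\mu>0$ because $\xi^*\le0$.

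Next I would construct the adjoint state for $\widetilde J$. With $(A,B):=(A^\delta,B^{\delta,T})$ put $q_t:=(A_t-\delta\kappa)X^*_t+B_t$ on $[\sigma^*,T]$; this equals $Y^i-\delta\kappa X^*$ by \eqref{eq:explicit-X} and \eqref{Yst}, it equals $\eta\xi^*$ there, and $-\dot q=\lambda X^*+\kappa h$ on $[\sigma^*,T]$ (using the co-state equation in \eqref{eq:FBODE} and $\dot X^*=-\xi^*$); when $\sigma^*>0$ extend $q$ to $[0,\sigma^*]$ by solving $-\dot q_t=\kappa\mu_t+\lambda_tx_i$ backward from $q_{\sigma^*}=0$, the endpoint value being forced by \eqref{Yst} together with $\psi_\mu(\sigma^*)=-x_i$. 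Then $q$ is absolutely continuous on $[0,T]$ with $-\dot q=\kappa h+\lambda X^*$ everywhere. The key fact to check is $q\ge0$ on $[0,\sigma^*]$: here $q_t=\int_t^{\sigma^*}\big(\kappa\mu_s-\lambda_s\psi_\mu(\sigma^*)\big)\,ds$, and since $\psi_\mu$ is decreasing (Assumption~\ref{ass:ceof_relation}) we have $\psi_\mu(\sigma^*)\le\psi_\mu(s)$ for $s\le\sigma^*$, while differentiating $\psi_\mu=B/(A-\delta\kappa)$ using \eqref{eq:AB+} yields the pointwise identity $(A-\delta\kappa)\dot\psi_\mu=\lambda\psi_\mu-\kappa\mu$ (cf.\ the proof of the preceding lemma), so $\dot\psi_\mu<0$ forces $\lambda_s\psi_\mu(s)<\kappa\mu_s$. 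Therefore $\xi^*_t$ minimises $\xi\mapsto-\xi q_t+\tfrac12\eta_t\xi^2$ over $\{\xi\le0\}$ for a.e.\ $t$: on $[\sigma^*,T]$ the free minimiser $q_t/\eta_t=\xi^*_t$ is already $\le0$, while on $[0,\sigma^*]$ it is $q_t/\eta_t\ge0$, so the constrained minimiser is $\xi^*_t=0$.

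Then I would complete the square. For any $\xi\in\mathcal{A}_{x_i}$, expanding the quadratic integrand of $\widetilde J$ about $(\xi^*,X^*)$ and using $-\dot q=\kappa h+\lambda X^*$ gives, for a.e.\ $t$,
\[
 \ell_t(\xi,X)-\ell_t(\xi^*,X^*)=-\tfrac{d}{dt}\big[q_t(X_t-X^*_t)\big]+(\eta_t\xi^*_t-q_t)(\xi_t-\xi^*_t)+\tfrac12\eta_t(\xi_t-\xi^*_t)^2+\tfrac12\lambda_t(X_t-X^*_t)^2,
\]
with $\ell_t(\xi,X):=\tfrac12\eta_t\xi^2+\kappa h_tX+\tfrac12\lambda_tX^2$. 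The cross term $(\eta_t\xi^*_t-q_t)(\xi_t-\xi^*_t)$ is $\ge0$: it vanishes on $[\sigma^*,T]$ (there $\eta\xi^*=q$) and equals $-q_t\xi_t\ge0$ on $[0,\sigma^*]$ (there $\xi^*_t=0$, $\xi_t\le0$, $q_t\ge0$). Integrating over $[0,T-\eps]$, the boundary term at $0$ vanishes ($X_0=X^*_0$) and the one at $T-\eps$ tends to $0$, so that, after also sending the tail $\int_{T-\eps}^T[\ell_t(\xi,X)-\ell_t(\xi^*,X^*)]\,dt$ to $0$, the limit $\eps\downarrow0$ yields $\widetilde J(\xi)-\widetilde J(\xi^*)\ge\tfrac12\int_0^T\eta_t(\xi_t-\xi^*_t)^2\,dt\ge0$, with equality iff $\xi=\xi^*$ a.e.\ (as $\eta>0$). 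This gives optimality and uniqueness of $\xi^*$ for $\widetilde J$, hence for $J(\cdot;\xi^{-i})$, proving the theorem.

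\textbf{Main obstacle.} The genuinely delicate step is the limit $\eps\downarrow0$: the adjoint $q$ inherits the singular terminal behaviour $A_T=\infty$ of \eqref{eq:AB+}, and showing that $q_{T-\eps}(X_{T-\eps}-X^*_{T-\eps})$ and the tail integral vanish requires combining the a priori bounds of Lemma~\ref{lem-admis} (which give $\|\xi^*\|_\infty<\infty$, hence $q$ absolutely continuous and $X^*_{T-\eps}=O(T-\eps)$, while $X_{T-\eps}\to0$) with the integrability of $\mu$ and the boundedness of $X,X^*$; this is exactly the technical role played by the singular-terminal-condition analysis of \cite{FHH-2023,PUR}. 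A secondary, purely computational pivot is the identity $(A-\delta\kappa)\dot\psi_\mu=\lambda\psi_\mu-\kappa\mu$, which converts Assumption~\ref{ass:ceof_relation} into the sign bound $\kappa\mu>\lambda\psi_\mu$ needed in the second step.
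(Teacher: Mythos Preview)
Your argument is correct and in fact slightly cleaner than the paper's, but the route is different enough to be worth noting. The paper works directly with the original cost $J$ and the adjoint $Y^i$, splits into the two cases $\sigma<\sigma^{*,i}$ and $\sigma\ge\sigma^{*,i}$, and in each case carries the self-interaction term $\frac{\kappa}{N}X(\xi-\xi^{*,i})$ through a chain of integration-by-parts identities (producing the auxiliary terms $\frac{\kappa}{2N}(X_{\sigma^{*,i}}-X^{*,i}_{\sigma^{*,i}})^2$ etc.). Your first move---observing that $\int_0^T\frac{\kappa}{N}X_t\xi_t\,dt=\frac{\kappa x_i^2}{2N}$ is constant on $\mathcal{A}_{x_i}$ and passing to the strictly convex surrogate $\widetilde J$ with forcing $h=\mu-\frac1N\xi^*$---eliminates precisely those terms and allows a single completion-of-squares identity with the shifted adjoint $q=Y^i-\delta\kappa X^*$, no case split. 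Both proofs ultimately hinge on the same inequality $\kappa\mu_s>\lambda_s\psi_\mu(s)$ on $[0,\sigma^*]$, which the paper invokes directly and you obtain from $(A^\delta-\delta\kappa)\dot\psi_\mu=\lambda\psi_\mu-\kappa\mu$ together with Assumption~\ref{ass:ceof_relation}. What your reduction buys is a uniform treatment of the MFG and the $N$-player game and the avoidance of the two-case bookkeeping; what the paper's direct computation buys is that one never needs to introduce the modified drift $h$.

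One remark on your ``main obstacle'': it is not actually an obstacle. By Lemma~\ref{lem-admis}(i) the strategy $\xi^*$ is bounded on $[\sigma^*,T]$, hence $q=\eta\xi^*$ is bounded there (and is absolutely continuous on $[0,\sigma^*]$ by construction), so $q$ is bounded and absolutely continuous on all of $[0,T]$. Since $X_T=X^*_T=0$, the boundary term $q_{T}(X_T-X^*_T)$ vanishes outright; no $\varepsilon$-limit is needed, and the singular behaviour of $A^\delta$ at $T$ is already absorbed in the a priori estimate \eqref{eq:xi_apriori_estiate}.
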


    \section{Equilibrium analysis}\label{sec:eqana}

    In this section, we establish existence and uniqueness of equilibrium results for both the $N$-player game and the corresponding MFG within a common mathematical framework. To this end, we denote by $\xi^{\delta, \mu}=\big(\xi^{*,\delta,x,\mu} \big)_{x\in\mathbb R}$ the vector of optimal trading strategies for buyers and sellers given in \eqref{xi-sigma} and \eqref{xi*3}, respectively, and introduce the mapping
    \[
        F: L^1([0,T]) \to \mathbb{R}^{[0,T]}, \qquad  F(\mu)_t := \int_{\bR}\xi^{*,\delta, x, \mu}_t \nu(dx)
    \]
    that maps exogenous trading rates into an aggregate best response. We expect any fixed-point of the mapping $F$ that does not change its sign to yield a Nash equilibrium.
    This suggests that our trading games can be solved as follows:
    \begin{align*}
        \label{equlibrium_framwork}
        \renewcommand{\arraystretch}{1.5}
        \begin{array}{l}
            1. \text{ Fix } \mu\in L^{1}([0,T]).                                                                            \\
            2. \text{ Consider the candidate strategy profile $\xi^{\delta,\mu}$ for $\delta=0$, resp. $\delta=\frac 1 N$}. \\
            3. \text{ Find the fixed-points $\mu^*$ of the mapping  $\mu \mapsto F(\mu)$ in $L^{1}([0,T])$}.\\
            4. \text{ Verify that $\xi^{\delta,\mu^*}$ is a Nash equilibrium.} %
        \end{array}
    \end{align*}

    We characterize fixed-points of the mapping $\mu \mapsto F(\mu)$ as solutions to a non-standard, path-dependent integral equation with endogenous terminal condition and prove that any solution to that integral equation satisfies Assumption \ref{ass:mu_sign}. In hindsight, this justified the analysis of Section 3. 

{\color{black}    
    The fixed-point equation is derived in Section 4.1. The explicit representation is given in Section 4.1.1; Section 4.1.2 verifies that any solution to our equation does not change its sign. The key challenge is to identify the terminal equilibrium trading rate. This is achieved in Section 4.1.3.

	The fixed-point equation is solved in Section 4.2. The key observation is that the integral equation is equivalent to a three-dimensional ODE with two free parameters and that solving the equation is equivalent to solving a two-dimensional root-finding problem. The first parameter corresponds to the aggregate trading rate at the terminal time as in \cite{FHH-2023}. The second parameter is new; heuristically, it describes the maximal initial endowment that a seller that exits the market early can hold. 

    Finally, in Section~\ref{sec:proof_main}, we bring together the results of the fixed-point analysis and the verification results from Section~\ref{Sec:Verification} to present the proof of the main result, Theorem~\ref{thm:main}.}

    \subsection{The integral equation}\label{sec:equ-eq-MFG}

   {\color{black} To guarantee that the fixed-point mapping is well defined, we recall that we are working under the following assumption on the initial distribution $\nu$ of the players' portfolios.
    \begin{ass}
        \label{ass:initial_distribution} The distribution of initial position $\nu$ has a finite absolute first moment, that is, $$\int |x| \nu(dx) < \infty.$$
    \end{ass}}

    \subsubsection{Representation of fixed-points}

    To derive a more explicit form of the fixed-point mapping, we recall the definitions of the functions
    {
    \begin{equation*}
        \begin{split}
            \phi_\mu(t) &= \int_0^t\kappa\mu_uh^{\delta}_u\,du, \qquad {h^{\delta}_t := h^{\delta,0}_t }= e^{    -\int_0^t\frac{A^{\delta}_r}{\eta_r}\,dr    }\int_0^t \frac{1}{\eta_s} e^{ \int_0^s\frac{2A^{\delta}_r - \delta\kappa}{\eta_r}\,dr}\,ds,\\
            \psi_\mu(t) &= \frac{1}{\alpha^\delta_t}  \int_t^{T} e^{ -\int_0^s\frac{A^\delta_r}{\eta_r}\,dr    }\kappa\mu_s\,ds,
        \end{split}\qquad \begin{split}
           \bigg. \\\bigg. t \in [0, T],
        \end{split}
    \end{equation*}}
    introduced in \eqref{eq:psi} and \eqref{cha-tau} and denote by
    \[
        I_\mu(t) := (-\infty, -\psi_\mu(t)]\cup[\phi_\mu(t), \infty)
    \]
    the set of player types that are active on the market at time $t \in [0,T]$. The following representation of the mapping $F$ will allow us to characterize equilibrium trading rates in terms of integral equations.

    \begin{lemma}
        \label{lem:mu_differentiability}
        For any $\mu\in L^{1}([0,T])$ it holds for all $t \in [0,T]$ that
        \begin{align}
            \label{eq:consitency2}
            F(\mu)_t ~= F(\mu)_T + \int_t^{T}\frac{1}{\eta_s}\int_{I_\mu(s)}\Big(\kappa\mu_s+\lambda_s X^{\delta, x, \mu}_s + (\dot\eta_s-\delta\kappa)\xi^{\ast,\delta,x,\mu}_s\Big) \nu(dx)\,ds,
        \end{align}
        {where $X^{\delta,x,\mu}$ is the state process corresponding to $\xi^{*,\delta,x,\mu}$. }
        In particular, $F$ maps the set $L^{1}([0,T])$ into the space of absolutely continuous functions on $[0,T]$.
    \end{lemma}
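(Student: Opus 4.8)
The plan is to compute $F(\mu)_t$ by differentiating under the integral sign and then integrating back from $T$. Fix $\mu \in L^1([0,T])$. For each fixed $x \in \bR$, the candidate strategy $\xi^{*,\delta,x,\mu}$ is, on its active interval, equal to $\big(Y^{\delta,x,\mu} - \delta\kappa X^{\delta,x,\mu}\big)/\eta$, and both $X^{\delta,x,\mu}$ and $Y^{\delta,x,\mu}$ are absolutely continuous with the ODE dynamics from \eqref{eq:FBODE}–\eqref{eq:AB+} (recall $Y = A^\delta X + B^{\delta,\tau}$). Hence on the interior of the active interval,
\[
  \frac{d}{dt}\,\xi^{*,\delta,x,\mu}_t
  = \frac{\dot Y^{\delta,x,\mu}_t - \delta\kappa \dot X^{\delta,x,\mu}_t}{\eta_t}
    - \frac{\dot\eta_t}{\eta_t}\,\xi^{*,\delta,x,\mu}_t.
\]
Substituting $-\dot Y^{\delta,x,\mu}_t = \lambda_t X^{\delta,x,\mu}_t + \kappa\mu_t$ and $-\dot X^{\delta,x,\mu}_t = \xi^{*,\delta,x,\mu}_t$ gives
\[
  \frac{d}{dt}\,\xi^{*,\delta,x,\mu}_t
  = -\frac{1}{\eta_t}\Big(\kappa\mu_t + \lambda_t X^{\delta,x,\mu}_t + (\dot\eta_t - \delta\kappa)\,\xi^{*,\delta,x,\mu}_t\Big),
\]
which is precisely the integrand appearing in \eqref{eq:consitency2}. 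So the pointwise identity holds for each active $x$; for $x \notin I_\mu(s)$ the strategy is identically zero there, consistent with excluding it from the inner integral.

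The second step is to justify interchanging $\frac{d}{dt}$ and $\int_\bR \cdot\, \nu(dx)$, and to handle the endpoints $\sigma_\mu(x)$ (for buyers) and $\tau_\mu(x)$ (for sellers), which vary with $x$. For the differentiation under the integral I would invoke the a priori bounds of Lemma~\ref{lem-admis}(i): $\|\xi^{*,\delta,x,\mu}\|_\infty + \|\dot\xi^{*,\delta,x,\mu}\|_\infty \le C(1+|x|)$ uniformly in $\delta$, and similarly $|X^{\delta,x,\mu}_t| \le |x| + \int_0^T \|\xi^{*,\delta,x,\mu}\|_\infty \le C(1+|x|)$; since $\nu$ has a finite first absolute moment (Assumption~\ref{ass:initial_distribution}), the integrand on the right of \eqref{eq:consitency2} is dominated by an integrable function of $x$ uniformly in $s$, and the difference quotients in $t$ are likewise dominated, so dominated convergence applies. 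The endpoint issue is benign because at its own entry/exit time the strategy $\xi^{*,\delta,x,\mu}$ and the quantity $\kappa\mu + \lambda X + (\dot\eta - \delta\kappa)\xi$ are continuous across the junction: for a buyer, $\xi^{*,\delta,x,\mu}_{\sigma_\mu(x)} = 0$ by the entry condition \eqref{entry}, so there is no jump in the strategy, only a (bounded) corner; for a seller, $X^{\delta,x,\mu}_{\tau_\mu(x)} = 0$ and $\xi$ need not vanish there, but the set of $x$ whose exit time equals a given $s$ has $\nu$-measure summing to a Lebesgue-null set of times (monotonicity of $\phi_\mu$, $\psi_\mu$), so it does not affect the integral. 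Thus $F(\mu)$ is absolutely continuous on $[0,T]$ with the stated derivative, and integrating from $t$ to $T$ yields \eqref{eq:consitency2}.

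The main obstacle I anticipate is the rigorous bookkeeping at the variable endpoints: one must check that $t \mapsto F(\mu)_t$ is genuinely absolutely continuous and not merely absolutely continuous on each subinterval between consecutive entry/exit times, and in particular that no atom of $\nu$ sitting exactly at $\pm\psi_\mu(t_0)$ or $\phi_\mu(t_0)$ creates a jump. Here the fact that at a buyer's late-entry time the trading rate is zero is essential — it means the "influx" of newly active buyers contributes continuously — and the analogous smoothness at sellers' exit times follows from $X$ hitting zero continuously. I would spell out that $\psi_\mu$ is strictly decreasing (Assumption~\ref{ass:ceof_relation}, via Proposition~\ref{prop:sufficient_conditions}) and $\phi_\mu$ continuous and non-decreasing, so $I_\mu(s)$ varies monotonically in $s$ and the map $s \mapsto \nu(I_\mu(s))$ has at most countably many discontinuities, each contributing nothing to the absolutely continuous part. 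Once this is in place the representation \eqref{eq:consitency2} and the absolute continuity of $F(\mu)$ follow immediately, and the value $F(\mu)_T$ is simply $\int_\bR \xi^{*,\delta,x,\mu}_T\,\nu(dx)$, the aggregate terminal trading rate that will later be pinned down by the endogenous terminal condition.
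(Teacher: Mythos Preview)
Your proposal is correct and follows essentially the same route as the paper: compute $\dot\xi^{*,\delta,x,\mu}$ on the active interval from the ODE dynamics, invoke the linear-growth bound of Lemma~\ref{lem-admis}(i) together with the first-moment assumption on $\nu$, and assemble the representation. The paper's version is marginally tidier in that it writes $\xi^{*,\delta,x,\mu}_t = \xi^{*,\delta,x,\mu}_T - \int_t^T \dot\xi^{*,\delta,x,\mu}_s\,ds$ for each $x$ (using that the strategy is globally absolutely continuous, vanishing at both entry and exit junctions) and then applies Fubini, which bypasses the difference-quotient and atom-of-$\nu$ bookkeeping you discuss.
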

    \begin{proof}
        From the definition of the strategies $\xi^{\ast, \delta, x, \mu}$  and the interval $I_\mu(t)$ it follows that
        \[
            \dot\xi^{\ast, \delta, x, \mu}_t = \xi^{*, \delta,x,\mu}_t = 0, \quad x\in (-\psi_\mu(t), \phi_\mu(t)).
        \]
        In view of Lemma \ref{lem-admis} (i) and \cite[Lemma 2.8]{FHH-2023} the optimal strategies are almost everywhere differentiable and the derivative is at most of linear growth in the initial position, uniformly in time. The moment condition on the initial distribution thus allows us to apply Fubini's theorem to the integral representation of $F(\mu)_t$ to deduce that
        \begin{align*}
            \int_{\bR}\xi^{\ast, \delta, x, \mu}_t\nu(dx) &=~ \int_{\bR}\left( \xi^{\ast, \delta, x, \mu}_T - \int_t^{T}\dot \xi^{\ast, \delta, x, \mu}_s\,ds \right) \,\nu(dx)   \\
            &=~ \int_{\bR}\xi^{\ast, \delta, x, \mu}_T\nu(dx)  - \int_t^{T}\int_{I_\mu(s)}\dot \xi^{\ast, \delta, x, \mu}_s\,\nu(dx)  \,ds \\
            &=~F(\mu)_T - \int_t^{T}\int_{I_\mu(s)}\frac{d}{ds}\left(\frac{Y^{\delta, x, \mu}_s-\delta\kappa X^{\delta,x,\mu}_s }{\eta_s}\right)\,\nu(dx)  \,ds, \qquad t\in [0,T].
        \end{align*}
        The assertion now follows by using that $(X^{\delta, x, \mu}, Y^{\delta, x, \mu})$ solves the forward-backward equation \eqref{eq:FBODE}.
    \end{proof}

    Using similar arguments as in the proof of the above lemma it follows that aggregate stock holdings can be represented as
    \begin{align*}
        \int_{I_\mu(t)}X^{\delta, x, \mu}_t \,\nu(dx) &=
        \int_{\bR}X^{\delta, x, \mu}_t \,\nu(dx) - \int_{-\psi_\mu(t)}^{\phi_\mu(t)}X^{\delta, x, \mu}_t \,\nu(dx)\\
        &=  \int_{t}^{T} \int_{\bR}\xi^{\delta, x, \mu}_s\, \nu(dx) \, ds - \int_{-\psi_\mu(t)}^{0}x\, \nu(dx)\\
        &=\int_t^{T}F(\mu)_s \,ds + \ell(-\psi_\mu(t)),
    \end{align*}
    where
    \begin{equation}
        \label{ell}
        \ell(x) :=-\int_x^0 y\, \nu(dy), \qquad x\le0.
    \end{equation}

    In terms of the tail probability functions
    \begin{equation}
        \label{tails}
        \begin{split}
            p:\bR\to[0,1],& \quad x\mapsto\nu((-\infty, x]), \\
            q:\bR\to[0,1],& \quad x \mapsto\nu([x,\infty)),
        \end{split}
    \end{equation}
    the equation \eqref{eq:consitency2} can hence be rewritten as
    \begin{equation}
        \label{eq:consitency3}
        \begin{split}
            F(\mu)_t  =&~ F(\mu)_T + \int_{t}^{T}\frac{\kappa}{\eta_s}\left(q\big( \phi_\mu(s)\big) + p(-\psi_\mu(s) \big) \right)\mu_s\,ds
            + \int_t^{T}\frac{\dot\eta_s-\delta\kappa}{\eta_s}F(\mu)_s\,ds  \\
            &~+  \int_t^{T}\frac{\lambda_s}{\eta_s}\left( \ell(-\psi_\mu(s))+\int_s^{T}F(\mu)_u\,du \right)\,ds, \qquad t\in[0,T].
        \end{split}
    \end{equation}

 {\color{black} 
We note that $q(\phi_\mu(t))$ specifies the number of sellers remaining in the market by time $t$. Likewise, $p(-\psi_\mu(t))$ specifies the number of buyers that have entered the market, while $\ell(-\psi_\mu(t))$ specifies the liquidity that these buyers will bring to the market in the future. In the absence of a trading constraint on buyers, we may set  $\psi_\mu\equiv 0$ and $\ell\equiv 0$, which reduces the equation to \cite[Equation (3.6)]{FHH-2023}.
}

    The proof of the following fixed-point representation is identical to the one in market drop-out model considered in \cite[Proposition 3.3]{FHH-2023}.

    \begin{proposition}
        \label{prop:fix-point-integral} A process $\mu\in L^{1}([0,T])$ solves the fixed-point of $F$ if and only if $\mu_T = F(\mu)_T$ and $\mu$ solves the equation
        \begin{equation}
            \label{eq:mu-first-integral}
            \begin{split}
                \mu_t  =&~ \mu_T + \int_{t}^{T}\frac{\kappa}{\eta_s}\left(q\big( \phi_\mu(s)\big) + p(-\psi_\mu(s) \big) -\delta  \right)\mu_s\,ds
                + \int_t^{T}\frac{\dot\eta_s}{\eta_s}\mu_s\,ds  \\
                &~+  \int_t^{T}\frac{\lambda_s}{\eta_s}\left( \ell(-\psi_\mu(s)) +\int_s^{T}\mu_u \,du \right)\,ds, \qquad t\in[0,T].
            \end{split}
        \end{equation}
    \end{proposition}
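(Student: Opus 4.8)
The plan is to read everything off the representation \eqref{eq:consitency3} of $F(\mu)$, which holds for every $t\in[0,T]$ by Lemma~\ref{lem:mu_differentiability} together with the identity for aggregate holdings derived immediately after it. The key observation is that once $\mu\in L^1([0,T])$ is \emph{fixed}, the functions $\psi_\mu$, $\phi_\mu$, $\ell(-\psi_\mu(\cdot))$ and the coefficients $\tfrac{\kappa}{\eta}\big(q(\phi_\mu)+p(-\psi_\mu)\big)$, $\tfrac{\dot\eta-\delta\kappa}{\eta}$, $\tfrac{\lambda}{\eta}$ are all frozen, so \eqref{eq:consitency3} says precisely that $G:=F(\mu)$ solves the linear Volterra integral equation of the second kind
\begin{equation}
    \label{eq:volterra-aux}
    g_t = g_T + \int_t^T a_s\,g_s\,ds + \int_t^T b_s\!\int_s^T g_u\,du\,ds + r_t,\qquad t\in[0,T],
\end{equation}
with $a:=\tfrac{\dot\eta-\delta\kappa}{\eta}$, $b:=\tfrac{\lambda}{\eta}$ and inhomogeneity $r_t:=\int_t^T\tfrac{\kappa}{\eta_s}\big(q(\phi_\mu(s))+p(-\psi_\mu(s))\big)\mu_s\,ds+\int_t^T\tfrac{\lambda_s}{\eta_s}\ell(-\psi_\mu(s))\,ds$. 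Under Assumption~\ref{ass:single-player} the coefficients $a,b$ are bounded (here one uses $\eta\in C^1$, so $\dot\eta$ is bounded), and since $p,q$ take values in $[0,1]$, $1/\eta$ is bounded, $\mu\in L^1$, and $\ell$ is bounded on $(-\infty,0]$ by the first absolute moment of $\nu$ (Assumption~\ref{ass:initial_distribution}), the function $r$ is well defined, absolutely continuous, and satisfies $r_T=0$. Hence \eqref{eq:volterra-aux} has at most one solution: any solution $d$ of the homogeneous version satisfies $|d_t|\le(\|a\|_\infty+T\|b\|_\infty)\int_t^T|d_s|\,ds$, and a backward Grönwall argument applied to $t\mapsto\int_t^T|d_s|\,ds$ forces $d\equiv0$.

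With this in hand both implications are bookkeeping. For the ``only if'' direction, if $\mu=F(\mu)$ then $\mu_T=F(\mu)_T$ is immediate, and substituting $F(\mu)=\mu$ into \eqref{eq:consitency3} and absorbing the contribution $-\int_t^T\tfrac{\delta\kappa}{\eta_s}\mu_s\,ds$ coming from $a$ into the bracket $q(\phi_\mu(s))+p(-\psi_\mu(s))$ produces exactly \eqref{eq:mu-first-integral}. For the ``if'' direction, assume $\mu_T=F(\mu)_T$ and that $\mu$ solves \eqref{eq:mu-first-integral}; splitting $\int_t^T\tfrac{\kappa}{\eta_s}\big(q(\phi_\mu(s))+p(-\psi_\mu(s))-\delta\big)\mu_s\,ds$ into $\int_t^T\tfrac{\kappa}{\eta_s}\big(q+p\big)\mu_s\,ds-\int_t^T\tfrac{\delta\kappa}{\eta_s}\mu_s\,ds$ and combining the last piece with $\int_t^T\tfrac{\dot\eta_s}{\eta_s}\mu_s\,ds$ into $\int_t^T a_s\mu_s\,ds$ shows that $\mu$ also solves \eqref{eq:volterra-aux} with $g_T=\mu_T=F(\mu)_T$ and the same inhomogeneity $r$. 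Since $G=F(\mu)$ solves \eqref{eq:volterra-aux} as well, uniqueness yields $\mu=F(\mu)$.

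The argument follows the same pattern as \cite[Proposition~3.3]{FHH-2023}, so there is no genuine obstacle; the only analytic input is the uniqueness statement for \eqref{eq:volterra-aux}, which rests on the boundedness of $a$ and $b$ (hence on $\eta\in C^1$ and $\lambda\in L^\infty$ from Assumption~\ref{ass:single-player}), and the only point requiring a modicum of care is checking that $r$ is finite and absolutely continuous, which is where $\mu\in L^1$, the boundedness of $1/\eta$, and the finite first moment of $\nu$ enter.
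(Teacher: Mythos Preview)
Your proof is correct and follows the same approach the paper points to (the paper defers entirely to \cite[Proposition~3.3]{FHH-2023}): one reads $F(\mu)$ as a solution of the linear backward Volterra equation \eqref{eq:volterra-aux} with $\mu$-dependent inhomogeneity $r$, notes that \eqref{eq:mu-first-integral} is the same Volterra equation written for $\mu$ after regrouping the $-\delta\kappa/\eta$ piece, and concludes by uniqueness via a backward Gr\"onwall argument. The bookkeeping on the coefficients and on the integrability of $r$ is accurate.
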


    We emphasize that equation \eqref{eq:mu-first-integral} is not a backward equation, due to the dependence of the function $\phi_\mu$ on the forward dynamics of the process $\mu$. 
   
    \subsubsection{The sign condition}

    The following result shows that any fixed-point $\mu$ of the mapping $F$ does not change its sign and that the mapping $t \mapsto \eta_t\mu_t$ is monotone. This justifies our Assumption \ref{ass:mu_sign}, which was key to the analysis of the best response functions carried out in Section \ref{Sec3-new}.

    \begin{lemma}
        \label{lem:apriori}
        Let $\mu$ be a solution to \eqref{eq:mu-first-integral}.
        Then it holds for $\delta \in \{0, \frac 1 N \}$ that
        $$\sign(\mu_t)=\sign(\mu_T),\qquad 0\leq t\leq T.$$
        Furthermore, for $\mu_T > 0$ (resp. $\mu_T <0$) the mapping $t \mapsto \eta_t\mu_t$ is decreasing (resp. increasing).
    \end{lemma}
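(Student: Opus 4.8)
The plan is to pass to the rescaled quantity $m_t:=\eta_t\mu_t$, to show that $m$ is non‑increasing on every subinterval $(a,T]$ on which $\mu$ stays strictly positive, and then to use this monotonicity to propagate positivity of $\mu$ downwards from $t=T$, thereby ruling out a sign change. Throughout I assume $\mu_T\ge 0$; the case $\mu_T\le 0$ is completely symmetric (replace $\mu$ by $-\mu$ and interchange the roles of $p$ and $q$, i.e.\ of buyers and sellers), and the degenerate case $\mu_T=0$ forces $\mu\equiv 0$ by a short‑interval contraction/continuation argument applied to \eqref{eq:mu-first-integral}, using $|\psi_\mu(s)|\le C\int_s^T|\mu_u|\,du$ and the boundedness of $q$ and $p$. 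So assume $\mu_T>0$.

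The right‑hand side of \eqref{eq:mu-first-integral} is an indefinite integral, so $\mu$ is absolutely continuous and may be differentiated in $t$. Since $\dot m_t=\dot\eta_t\mu_t+\eta_t\dot\mu_t$, the two contributions carrying the factor $\dot\eta_t/\eta_t$ cancel, leaving
\[
\dot m_t=-\kappa\big(q(\phi_\mu(t))+p(-\psi_\mu(t))-\delta\big)\mu_t-\lambda_t\Big(\ell(-\psi_\mu(t))+\int_t^T\mu_u\,du\Big),\qquad m_T=\eta_T\mu_T.
\]
Set $\bar t:=\inf\{t\in[0,T]:\mu_s>0\text{ for all }s\in(t,T]\}$; by continuity and $\mu_T>0$ we have $\bar t<T$ and $\mu>0$ on $(\bar t,T]$. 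On this interval every term on the right has the correct sign: $\int_t^T\mu_u\,du\ge 0$; by Proposition~\ref{prop:sufficient_conditions} and $\mu>0$ on $[t,T]$ we have $\psi_\mu(t)=\tfrac{1}{\alpha^\delta_t}\int_t^T e^{-\int_0^s A^\delta_r/\eta_r\,dr}\kappa\mu_s\,ds\ge 0$, so the argument of $\ell$ is $\le 0$ and $\ell(-\psi_\mu(t))=\int_{-\psi_\mu(t)}^0(-y)\,\nu(dy)\ge 0$; and $\lambda_t\ge 0$. The only delicate term is $\big(q(\phi_\mu(t))+p(-\psi_\mu(t))-\delta\big)\mu_t$. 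For $\delta=0$ it is trivially non‑negative. For $\delta=\tfrac1N$ one invokes that $\mu$ is the rate of a fixed point of $F$ (equivalently $\mu_T=F(\mu)_T$), so $\mu_t=F(\mu)_t=\tfrac1N\sum_i\xi^{*,\delta,x_i,\mu}_t$, together with the fact that, by the definition of the candidate entry and exit times, $\xi^{*,\delta,x_i,\mu}_t=0$ whenever $x_i\notin I_\mu(t)$ (late entry for a buyer, early exit for a seller). Hence $\mu_t>0$ forces $\nu(I_\mu(t))>0$, and since $\nu=\nu^N$ has atoms of mass $1/N$ and $\nu(I_\mu(t))\le p(-\psi_\mu(t))+q(\phi_\mu(t))$ by subadditivity, we obtain $q(\phi_\mu(t))+p(-\psi_\mu(t))\ge\tfrac1N=\delta$. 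Consequently $\dot m_t\le 0$ on $(\bar t,T]$, so $m$ is non‑increasing there and $m_t\ge m_T=\eta_T\mu_T>0$, i.e.\ $\mu_t\ge \eta_T\mu_T/\Vert\eta\Vert_\infty>0$ on $(\bar t,T]$.

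Suppose $\bar t>0$. By continuity $\mu_{\bar t}=\lim_{t\downarrow\bar t}\mu_t\ge \eta_T\mu_T/\Vert\eta\Vert_\infty>0$, hence $\mu>0$ on a left‑neighbourhood of $\bar t$ as well, contradicting the minimality of $\bar t$. Therefore $\bar t=0$, and the same lower bound $m_t\ge m_T>0$ also excludes $\mu_0=0$, so $\mu>0$ on all of $[0,T]$. In particular the sign computation above is now valid for every $t\in[0,T]$: it yields both $\sign(\mu_t)=\sign(\mu_T)$ and $\dot m_t\le 0$, i.e.\ $t\mapsto\eta_t\mu_t$ is non‑increasing. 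The case $\mu_T<0$ is treated symmetrically and gives $\mu<0$ on $[0,T]$ with $t\mapsto\eta_t\mu_t$ non‑decreasing.

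The main obstacle is precisely the term $\big(q(\phi_\mu(t))+p(-\psi_\mu(t))-\delta\big)\mu_t$ in the $N$‑player case $\delta=\tfrac1N$: the integral equation \eqref{eq:mu-first-integral} does not by itself bound the active mass $q(\phi_\mu(t))+p(-\psi_\mu(t))$ from below by $1/N$, and one genuinely has to return to the fixed‑point relation $\mu=F(\mu)$ and to the structure of the individual best responses — an inactive player contributes nothing to the aggregate rate — to deduce that a strictly positive $\mu_t$ certifies at least one active player and hence active mass at least $1/N$. A secondary, purely technical point worth flagging is that the functions $\psi_\mu,\phi_\mu$ and $\ell(-\psi_\mu(\cdot))$ appearing in \eqref{eq:mu-first-integral} are genuinely meaningful only once $\mu$ is known to keep a fixed sign; organizing the argument as a downward continuation from $t=T$ is exactly what keeps one always on an interval where positivity of $\mu$ — and hence $\psi_\mu\ge 0$ and the monotonicity of $\phi_\mu$ — has already been secured.
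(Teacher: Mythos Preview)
Your argument is essentially the paper's own: differentiate \eqref{eq:mu-first-integral}, show that $\eta\mu$ is monotone on the maximal interval $(\bar t,T]$ where $\mu$ keeps the sign of $\mu_T$, deduce a uniform lower bound $\mu_t\ge\eta_T\mu_T/\eta_t$ there, and conclude $\bar t=0$ by continuity. Two minor differences are worth noting. First, the paper dispatches the case $\mu_T=0$ via a direct Gr\"onwall estimate $|\mu_t|\le|\mu_T|e^{K(T-t)}$ from \eqref{eq:mu-first-integral}, which is cleaner than your contraction sketch. Second, for the bound $q(\phi_\mu)+p(-\psi_\mu)\ge\delta$ when $\delta=\tfrac1N$, the paper simply asserts that ``a strictly positive proportion of sellers is trading'' without elaboration, while you explicitly fall back on the fixed-point identity $\mu=F(\mu)$; your justification is more honest about the mechanism, but strictly speaking invokes more than the lemma's hypothesis (only the integral equation is assumed, not the full fixed-point relation)---though this matches exactly how the lemma is used.
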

    \begin{proof}
        Noting that $\psi_\mu(t)\leq \frac{\kappa}{\alpha^\delta_T}\int_t^T|\mu_s|\,ds$, it follows from equation \eqref{eq:mu-first-integral} that there exists a constant $K>0$ depending only on $T, \kappa, \eta$ and $\lambda$ such that
        \begin{equation*}
            |\mu_t| \leq |\mu_T| + K \int_{t}^{T}|\mu_s|\, ds
        \end{equation*}
        for all $t\in[0,T]$, and hence from Gr\"onwall's inequality that
        \[
            |\mu_t|\leq |\mu_T| e^{ K(T-t) }.
        \]

        In particular, $\mu_T=0$ implies that $\mu \equiv 0.$ If $\mu_T > 0$, then it follows from differentiating \eqref{eq:mu-first-integral} that
        \begin{equation*}
            \begin{split}
                \dot\mu_t =&~ -\frac{\kappa}{\eta_t}\left(q\big( \phi_\mu(t)\big) + p(-\psi_\mu(t) \big) -\delta \right)\mu_t
                - \frac{\dot\eta_t}{\eta_t}\mu_t - \frac{\lambda_t}{\eta_t}\left(\ell(-\psi_\mu(t)) +\int_t^{T}\mu_u \,du \right),
            \end{split}
        \end{equation*}
        for almost every $t\in[0,T]$.
        We denote the first time after which $\mu$ stays positive by
        \[
            t_0 := \inf\left\{t\in[0, T] \;\big\vert\; \mu\vert_{[t,T]} > 0\right\}.
        \]
        In particular, from that time forward a strictly positive proportion of sellers is trading the stock. Hence,
        \[
            q\big( \phi_\mu \big) + p(-\psi_\mu \big) - \delta \geq 0 \quad \mbox{on} \quad (t_0,T].
        \]
        As a result,
        \begin{equation*}
            \begin{split}
                \dot\mu <&~ - \frac{\dot\eta}{\eta}\mu \quad \mbox{on} \quad [t_0,T].
            \end{split}
        \end{equation*}
        In particular, the function $\mu \cdot \eta$ is strictly decreasing on $[t_0, T]$ and so
        \[
            \mu_t > \frac{\eta_T}{\eta_t}\mu_T > 0 \quad \mbox{on} \quad t \in [t_0, T].
        \]
        By continuity of $\mu$ it must thus hold that $t_0 = 0$.
        The case $\mu_T <0$ follows analogously.
    \end{proof}

    \subsubsection{The terminal condition}\label{sec:terminal-MFG}

    Having derived a characterization of the fixed-points in terms of a non-linear integral equation, the following proposition identifies the terminal condition of the integral equation and determines its sign. It turns out that the terminal condition depends on the proportion of sellers that do not exit the market early as well as on the entire history of market entries.

    In what follows we denote by $\E[\nu]$ the first moment of the distribution $\nu$.

    \begin{proposition}
        \label{prop:terminal-MFG}
        Let Assumption \ref{ass:ceof_relation} hold.
        A function $\mu \in L^1([0,T])$ is a fixed-point of the mapping $F$ if and only if it satisfies the integral equation \eqref{eq:mu-first-integral} and the implicit terminal condition
        \[
            \mu_T= \frac{\widetilde \alpha^\delta_T}{\eta_T}  \left\{     \mathbb E[\nu]  -Q(\phi_\mu(T)) -P(-\psi_\mu(0))  + \int_{0}^T p(-\psi_\mu(t)) \frac{1}{\widetilde \alpha^\delta_t} (\lambda_t\psi_\mu(t)-\kappa \mu_t)\,dt \right\},
        \]
   where
        \[
            \widetilde\alpha^\delta_t :=(A^\delta_t - \delta\kappa) e^{ -\int_0^t\frac{A^\delta_r-\delta\kappa}{\eta_r}\,dr },
            \quad Q(x) :=\int_0^x q(y)\,dy,
            \quad P(x) :=-\int_{x}^{0} p(y) dy.
        \]

        The terminal condition can be equivalently written as
        \begin{equation}
            \label{eq:terminal_condtion}
            \begin{split}
                \mu_T
                =&~ \frac{\widetilde\alpha^\delta_T}{\eta_T} \left\{\mathbb E[\nu]  - Q(\phi_\mu(T) ) + \int_{0}^T P(-\psi_\mu(t))e^{  \int_0^{t} \frac{A^\delta_r - \delta \kappa}{\eta_r}\,dr} \frac{A^\delta_t}{\eta_t}\,dt  \right\}.
            \end{split}
        \end{equation}
        In particular, for any fixed-point $\mu \in L^1([0,T])$ it holds that
        $$\sign(\mu_t) = \sign(\E[\nu]), \qquad 0 \le t \le T.$$
    \end{proposition}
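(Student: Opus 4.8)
The plan is to derive the terminal condition by evaluating the conservation-of-mass identity at time $T$ and then manipulating it into the two asserted forms. First I would observe that the total liquidation constraint forces $\int_{\bR} X^{\delta,x,\mu}_T\,\nu(dx) = 0$, since every active player liquidates by time $T$ and inactive players have already dropped out. On the other hand, integrating the dynamics $\dot X^{\delta,x,\mu}_t = -\xi^{*,\delta,x,\mu}_t$ over $[0,T]$ and then over $x$, and using the aggregate stock-holdings representation derived just before \eqref{eq:consitency3}, namely
\[
\int_{I_\mu(t)}X^{\delta,x,\mu}_t\,\nu(dx) = \int_t^T F(\mu)_s\,ds + \ell(-\psi_\mu(t)),
\]
evaluated at $t=0$ together with $\mathbb{E}[\nu] = \int_{\bR}x\,\nu(dx)$ and the decomposition of the type space into active buyers, active sellers, and the inactive middle interval, produces a relation linking $\mathbb{E}[\nu]$, the integral $\int_0^T F(\mu)_s\,ds$, the early-exiting sellers (whose contribution is $Q(\phi_\mu(T))$ after a Fubini/integration-by-parts step on the tail $q$), and the late-entering buyers (whose contribution is $P(-\psi_\mu(0))$ via the analogous manipulation of $p$ and the function $\ell$). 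Since $\mu$ is a fixed-point, $F(\mu) = \mu$, and one may convert $\int_0^T \mu_s\,ds$ into a boundary term by using the integral equation \eqref{eq:mu-first-integral} itself, or more directly by recalling $Y^{\delta,x,\mu}$ is governed by $-\dot Y = \lambda X + \kappa\mu$; the precise bookkeeping will yield the first displayed formula with the $\int_0^T p(-\psi_\mu(t))\frac{1}{\widetilde\alpha^\delta_t}(\lambda_t\psi_\mu(t) - \kappa\mu_t)\,dt$ term.

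For the equivalence of the two forms of the terminal condition, I would use the defining ODE of $\psi^{\delta,\tau}_\mu$ from the proof of the "Buyers" admissibility lemma, namely $\dot\psi_\mu(t) = \frac{1}{A^\delta_t - \delta\kappa}(\lambda_t\psi_\mu(t) - \kappa\mu_t)$, so that $\lambda_t\psi_\mu(t) - \kappa\mu_t = (A^\delta_t - \delta\kappa)\dot\psi_\mu(t)$. Substituting this into the integral term and integrating by parts — transferring the derivative off $\psi_\mu$ onto $P(-\psi_\mu(t))$ (note $\frac{d}{dt}P(-\psi_\mu(t)) = p(-\psi_\mu(t))\dot\psi_\mu(t)$) and onto the exponential prefactor $\widetilde\alpha^\delta_t = (A^\delta_t - \delta\kappa)e^{-\int_0^t\frac{A^\delta_r - \delta\kappa}{\eta_r}\,dr}$ — should collapse the boundary contributions at $t=0$ (where $\psi_\mu(0) = \|\psi_\mu\|_\infty$ governs which buyers never enter, giving the $P(-\psi_\mu(0))$ term that cancels) and at $t=T$ (where $\psi_\mu(T) = 0$, so $P(0) = 0$), leaving precisely the clean form \eqref{eq:terminal_condtion} with the single integral $\int_0^T P(-\psi_\mu(t))e^{\int_0^t\frac{A^\delta_r - \delta\kappa}{\eta_r}\,dr}\frac{A^\delta_t}{\eta_t}\,dt$. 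The identity $\frac{d}{dt}\big(e^{\int_0^t\frac{A^\delta_r - \delta\kappa}{\eta_r}\,dr}\big) = \frac{A^\delta_t - \delta\kappa}{\eta_t}e^{\int_0^t\frac{A^\delta_r - \delta\kappa}{\eta_r}\,dr}$ will be the workhorse here.

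For the sign statement, once \eqref{eq:terminal_condtion} is established I would argue as follows: $Q(\phi_\mu(T)) \ge 0$ and the integral term $\int_0^T P(-\psi_\mu(t))e^{\int_0^t\cdots}\frac{A^\delta_t}{\eta_t}\,dt \le 0$ since $P \le 0$ by definition, $A^\delta > 0$ by the Riccati analysis, and $\eta > 0$; moreover both correction terms are controlled in size by the mass of buyers/sellers near zero, hence the sign of $\mu_T$ follows from the sign of $\mathbb{E}[\nu]$ — but this needs a touch more care, as the naive bound only gives the sign when the "small-side" mass is small. I expect the cleanest route is to not prove the sign from \eqref{eq:terminal_condtion} in isolation but rather to combine it with Lemma~\ref{lem:apriori}: that lemma already shows $\sign(\mu_t) = \sign(\mu_T)$ for all $t$, so it remains only to exclude $\mu_T < 0$ when $\mathbb{E}[\nu] > 0$ (and symmetrically). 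If $\mu_T < 0$ then $\mu_t < 0$ throughout, which would make $\phi_\mu$ and $\psi_\mu$ have the wrong monotonicity/sign (the roles of buyers and sellers swap, so one should instead use the buyer-dominated analogue of \eqref{eq:terminal_condtion}); plugging a strictly-negative $\mu$ into the appropriate version and reading off that $\mathbb{E}[\nu]$ must then be $\le$ something nonpositive gives the contradiction. The main obstacle I anticipate is precisely this last point: carefully tracking the buyer/seller asymmetry and the direction of all the inequalities so that the sign of $\mathbb{E}[\nu]$ is pinned down unconditionally (without a smallness assumption), rather than the bookkeeping in the first two paragraphs, which is essentially a sequence of Fubini applications and integrations by parts.
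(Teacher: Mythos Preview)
Your plan for the equivalence of the two forms (second paragraph) and for the sign conclusion (third paragraph) is essentially what the paper does: use $\dot\psi_\mu = \frac{1}{A^\delta-\delta\kappa}(\lambda\psi_\mu-\kappa\mu)$ and integrate by parts against $P(-\psi_\mu(\cdot))$ to pass between the two forms (an application of L'H\^opital handles the boundary contribution at $t\to T$), then combine the second form with Lemma~\ref{lem:apriori} and argue by contradiction that $\mu_T>0$ forces $\E[\nu]>0$. That part is fine.

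The gap is in your first paragraph. The conservation-of-mass route you sketch only produces the scalar identity $\int_0^T\mu_s\,ds=\E[\nu]$ (every player starts at $x$ and ends at $0$), together with the aggregate-holdings formula at $t=0$. Neither of these is an expression for $\mu_T$. You write that one may ``convert $\int_0^T\mu_s\,ds$ into a boundary term by using the integral equation itself'', but \eqref{eq:mu-first-integral} relates $\mu_t-\mu_T$ to a nonlinear integral functional of $\mu$ on $[t,T]$; nothing in that relation isolates $\mu_T$, and in particular your outline gives no mechanism by which the Riccati-specific prefactor $\widetilde\alpha^\delta_T/\eta_T=(A^\delta_T-\delta\kappa)e^{-\int_0^T\frac{A^\delta_r-\delta\kappa}{\eta_r}dr}$ could emerge.

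The paper instead computes $\mu_T$ directly as the limit
\[
\mu_T=\lim_{t\nearrow T}\int_{I_\mu(t)}\frac{(A^\delta_t-\delta\kappa)X^{\delta,x,\mu}_t}{\eta_t}\,\nu(dx),
\]
using the feedback form $\xi^{*,\delta,x,\mu}=\frac{(A^\delta-\delta\kappa)X^{\delta,x,\mu}+B}{\eta}$ together with $B_T=0$. The seller piece is the same limit as in \cite{FHH-2023}. For the buyer piece one substitutes the explicit formula \eqref{eq:explicit-X} with entry time $\sigma=\sigma_\mu(x)$, recognises the factor $(A^\delta_t-\delta\kappa)e^{-\int_0^t\frac{A^\delta_r-\delta\kappa}{\eta_r}dr}=\widetilde\alpha^\delta_t$ (whose limit at $T$ is exactly the prefactor you need), and then integrates by parts in the $x$-variable followed by the substitution $x=-\psi_\mu(t)$; an application of L'H\^opital is required to show the boundary term at $x\to 0^{-}$ vanishes. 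This explicit limit computation is where the terminal condition actually comes from, and your plan does not contain it.
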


    \begin{proof}
        We proceed in two steps, starting with the characterization of the terminal value. We assume w.l.o.g.~that $\mu_T > 0$ and set
        \[
            a:=-\psi_\mu(0), \quad b:=\phi_\mu(T), \quad \sigma:=\sigma_\mu(x).
        \]

            {\bf Step 1. Characterization of $\mu_T$}.
        Taking limits in the fixed-point equation we obtain that
        \begin{equation*}
            \begin{split}
                \mu_T= &~   \lim_{t\nearrow T}\mu_t \\ =&~ \lim_{t\nearrow T} \int_{\mathbb{R}}\xi^{\ast, \delta,  x,\mu}_t\,\nu(dx) \\
                =&~ \lim_{t\nearrow T} \int_{-\infty}^{-\psi_\mu(t)} \xi^{\ast, \delta, x,\mu}_t\,\nu(dx) +  \lim_{t\nearrow T} \int_{\phi_\mu(t)}^\infty \xi^{\ast, \delta, x,\mu}_t \nu(dx)\\
                =&~ \lim_{t\nearrow T} \int_{-\infty}^{-\psi_\mu(t)} \frac{(A^\delta_t-\delta\kappa) X^{\delta,x,\mu}_t}{\eta_t}\,\nu(dx)  +  \lim_{t\nearrow T} \int_{\phi_\mu(t)}^\infty    \frac{(A^\delta_t - \delta\kappa) X^{\delta,x,\mu}_t}{\eta_t} \nu(dx)\\
                :=&~ I_1+I_2.
            \end{split}
        \end{equation*}

        The same calculation as in the proof of \cite[Proposition 3.5]{FHH-2023} shows that the second term is given by
        \[
            I_2= \frac{\widetilde \alpha^\delta_T}{\eta_T} \left( \int_{b}^\infty x\,\nu(dx)   -b \int_{b}^\infty \nu(dx)    \right).
        \]

        The first term captures the impact of buyers on the terminal trading rate. It satisfies
        \begin{align*}
            I_1=&~\int_{-\infty}^0 \frac{\lim_{t\nearrow T }(A^\delta_t -\delta\kappa)X^{\delta,x,\mu}_t }{\eta_T} \,\nu(dx)\\
            =&~ \frac{1}{\eta_T}\int_{-\infty}^0   \lim_{t\nearrow T }    (A^\delta_t -\delta\kappa) \left(            xe^{-\int_{\sigma }^t \frac{A^\delta_r-\delta\kappa}{\eta_r}\,dr} 
            -\int_{\sigma }^t \frac{1}{\eta_s} e^{-\int_s^t \frac{A^\delta_r-\delta\kappa}{\eta_r}\,dr }\int_s^{T} \kappa \mu_u e^{-\int_s^u\frac{A^\delta_r}{\eta_r}\,dr}\,du\,ds                    \right)         \nu(dx) \\
            =&~\frac{1}{\eta_T}\int_{-\infty}^0 x \lim_{t\nearrow T }    (A^\delta_t -\delta\kappa)e^{-\int_\sigma^t \frac{A^\delta_r-\delta\kappa}{\eta_r}\,dr} \nu(dx)  \\
            &~  -  \frac{1}{\eta_T}\int_{-\infty}^0 \lim_{t\nearrow T }(A^\delta_t-\delta\kappa)\int_\sigma^t \frac{1}{\eta_s} e^{-\int_s^t \frac{A^\delta_r-\delta\kappa}{\eta_r}\,dr }\int_s^{T} \kappa \mu_u e^{-\int_s^u\frac{A^\delta_r}{\eta_r}\,dr}\,du\,ds                                   \,\nu(dx) \\
            =&~\frac{\widetilde\alpha^\delta_T}{\eta_T}\int_{-\infty}^0 x e^{  \int_0^{\sigma}\frac{A^\delta_r-\delta\kappa}{\eta_r}\,dr }  \,\nu(dx) \\
            &~-  \frac{1}{\eta_T}\int_{-\infty}^0 \lim_{t\nearrow T }(A^\delta_t -\delta\kappa)e^{-\int_0^t \frac{A^\delta_r-\delta\kappa}{\eta_r}\,dr}  \int_\sigma^t \frac{1}{\eta_s} e^{\int_0^s \frac{A^\delta_r - \delta\kappa }{\eta_r}\,dr }\int_s^{T} \kappa \mu_u e^{-\int_s^u\frac{A^\delta_r}{\eta_r}\,dr}\,du\,ds                                   \,\nu(dx) \\
            =&~\frac{\widetilde\alpha^\delta_T}{\eta_T}\int_{-\infty}^0 x e^{  \int_0^{\sigma}\frac{A^\delta_r-\delta\kappa}{\eta_r}\,dr } \,\nu(dx) -  \frac{\widetilde\alpha^\delta_T }{\eta_T}\int_{-\infty}^0    \int_\sigma^T \frac{1}{\eta_s} e^{\int_0^s \frac{A^\delta_r-\delta\kappa}{\eta_r}\,dr }\int_s^{T} \kappa \mu_u e^{-\int_s^u\frac{A^\delta_r}{\eta_r}\,dr}\,du\,ds                                   \,\nu(dx).
        \end{align*}
        Hence, defining
        $$
        g_\mu(t):= \int_t^T \frac{1}{\eta_s} e^{ \int_0^s \frac{A^\delta_r-\delta\kappa}{\eta_r}\,dr }\int_s^{T} \kappa \mu_u e^{-\int_s^u\frac{A^\delta_r}{\eta_r}\,dr}\,du\,ds,
        $$
        we have
        \begin{align*}
            I_1
            &=~\frac{\widetilde\alpha^\delta_T }{\eta_T}  \left(\int_{-\infty}^0 x e^{  \int_0^{\sigma}\frac{A^\delta_r-\delta\kappa}{\eta_r}\,dr } \,\nu(dx) -  \int_{-\infty}^0 g_\mu(\sigma) \,\nu(dx)  \right).
        \end{align*}

        Since $\sigma=0$ for all $x \in (-\infty, -\psi_\mu(0)] = (-\infty, a]$ we see that
        \begin{equation*}
            \begin{split}
                \frac{\eta_T}{\widetilde\alpha^\delta_T}I_{1}  =&~ \int_{a}^0 x e^{  \int_0^{\sigma}\frac{A^\delta_r-\delta\kappa}{\eta_r}\,dr } \,\nu(dx)  +  \int_{-\infty }^{ a } x e^{  \int_0^{0}\frac{A^\delta_r-\delta\kappa}{\eta_r}\,dr } \,\nu(dx) -  \int_{a}^0 g_\mu(\sigma) \,\nu(dx) -  \int_{-\infty}^{a}   g_\mu(0) \,\nu(dx)\\
                =&~\int_{a}^0 \left(x e^{  \int_0^{\sigma}\frac{A^\delta_r-\delta\kappa}{\eta_r}\,dr } - g_\mu(\sigma)\right) \,\nu(dx)  +  \int_{-\infty}^{a} (x-g_\mu(0))   \,\nu(dx).
            \end{split}
        \end{equation*}

        In terms of the tail probabilities $p$ and $q$ introduced in \eqref{tails} and using that  $g_\mu(0)=\phi_\mu(T)=b$ the terminal condition can hence be represented as follows:
        \begin{align*}
            \mu_T=&~\frac{ \widetilde \alpha^\delta_T  }{\eta_T} \left\{   \int_{a}^0 \left(x e^{  \int_0^{\sigma}\frac{A^\delta_r-\delta\kappa}{\eta_r}\,dr } - g_\mu(\sigma)\right)\nu(dx)       +  \int_{\bR\setminus[a,b]}x   \,\nu(dx) - b(p(a) + q(b))   \right\}.
        \end{align*}

        We use an integration by parts argument to simplify the first term.
        Since $\dot\psi_\mu < 0$ on $[0,T]$  the entry time $\sigma = \sigma_\mu(x) =  \psi_\mu^{-1}(-x)$ is differentiable on $[-\psi_\mu(0), \psi_\mu(T)]=[a,0]$ and
        $$\frac{d}{dx} g_\mu(\sigma) = -\frac{(A^\delta_\sigma-\delta\kappa) e^{\int_0^{\sigma}\frac{A^\delta_r-\delta\kappa}{\eta_r}dr}}{\eta_\sigma}\psi_\mu(\sigma)\dot\sigma = -\frac{(A^\delta_\sigma - \delta\kappa) e^{\int_0^{\sigma}\frac{A^\delta_r-\delta\kappa}{\eta_r}dr}}{\eta_\sigma}\frac{x}{\dot\psi_\mu(\sigma)} .$$
        Using partial integration it follows that
        \begin{equation*}
            \begin{split}
                &~ \int_{a}^0 \left(x e^{  \int_0^{\sigma}\frac{A^\delta_r-\delta\kappa}{\eta_r}\,dr } - g_\mu(\sigma)\right) \,\nu(dx)  \\
                = & \lim_{\varepsilon \to 0} \left\{ \left(x e^{  \int_0^{\sigma}\frac{A^\delta_r-\delta\kappa}{\eta_r}\,dr } - g_\mu(\sigma)\right)p(x)\right\}_{x = a}^{x=\varepsilon} - \int_{a}^0 p(x) e^{  \int_0^{\sigma} \frac{A^\delta_r-\delta\kappa}{\eta_r}\,dr} \,dx.
            \end{split}
        \end{equation*}

        Applying L'Hôpital's rule we further obtain that
        \begin{align*}
            \lim_{\varepsilon \to 0}  \left(\varepsilon  e^{  \int_0^{\sigma(\varepsilon)}\frac{A^\delta_r-\delta\kappa}{\eta_r}\,dr } - g_\mu(\sigma(\varepsilon))\right)p(\varepsilon)
            &= \lim_{t \to T}  \left(-\psi_\mu(t)  e^{  \int_0^{t}\frac{A^\delta_r-\delta\kappa}{\eta_r}\,dr } - g_\mu(t)\right)p(-\psi(t)) \\
            &=   -p(0)\lim_{t \to T} \psi_\mu(t)  e^{  \int_0^{t}\frac{A^\delta_r-\delta\kappa}{\eta_r}\,dr } \\
            &=   p(0)\lim_{t \to T} \frac{\frac{\eta_t}{A^\delta_t-\delta\kappa}(\lambda_t \psi_\mu(t) - \kappa \mu_t)}{ \widetilde\alpha^\delta_t}  \\
            &= 0.
        \end{align*}

        Inserting the above calculations and summarizing the remaining integral terms yields that
        \begin{align*}
            \mu_T =&~ \frac{\widetilde\alpha^\delta_T}{\eta_T} \left\{   -(a-b) p( a) + \int_{\bR\setminus[a,b]}x   \,\nu(dx) - b(p(a) + q(b)) -\int_{a}^0 p(x) e^{  \int_0^{\sigma} \frac{A^\delta_r-\delta\kappa}{\eta_r}\,dr} \,dx  \right\}.
        \end{align*}

        The substitution $x = -\psi_\mu(t)$ simplifies the second integral term to
        \begin{equation}
            \label{eq:proof_terminal_ibp}
            \begin{split}
                \int_{a}^0 p(x) e^{  \int_0^{\sigma} \frac{A^\delta_r-\delta\kappa}{\eta_r}\,dr} \,dx  =& -\int_{0}^T p(-\psi_\mu(t)) e^{  \int_0^{t} \frac{A^\delta_r-\delta\kappa}{\eta_r}\,dr} \dot\psi_\mu(t)\,dt \\
                =& -\int_{0}^T p(-\psi_\mu(t)) \frac{1}{\widetilde\alpha^\delta_t} (\lambda_t \psi_\mu(t)- \kappa \mu_t)\,dt.
            \end{split}
        \end{equation}

        Using that
        \[
            P(x)=xp(x)+\int_x^0 y\nu(dy), \qquad Q(x) = xq(x)+\int_0^{x}y\nu(dy),
        \]
        and summarizing the remaining terms we finally arrive at
        \begin{equation*}
            \begin{split}
                \mu_T
                =&~ \frac{\widetilde\alpha^\delta_T}{\eta_T} \left\{\mathbb E[\nu]  - Q(b) - P(a) + \int_{0}^T p(-\psi_\mu(t)) \frac{1}{\widetilde\alpha^\delta_t} (\lambda_t \psi_\mu(t)-\kappa \mu_t)\,dt  \right\}.
            \end{split}
        \end{equation*}

        {\bf Step 2. Alternative characterization and identification of the sign.} To determine the sign of $\mu_T$ we establish an alternative representation. Applying integration by parts in \eqref{eq:proof_terminal_ibp}, we see that
        \begin{align*}
            \label{eq:proof_terminal_ibp}
            & \int_{a}^0 p(x) e^{  \int_0^{\sigma} \frac{A^\delta_r-\delta\kappa}{\eta_r}\,dr} \,dx \\
            =& -\int_{0}^T p(-\psi_\mu(t)) e^{  \int_0^{t} \frac{A^\delta_r-\delta\kappa}{\eta_r}\,dr} \dot\psi_\mu(t)\,dt \\
            =& \lim_{t \to T}\left\{P(-\psi_\mu(t))e^{  \int_0^{t} \frac{A^\delta_r-\delta\kappa}{\eta_r}\,dr}\right\} - P(a) -\int_{0}^T P(-\psi_\mu(t))e^{  \int_0^{t} \frac{A^\delta_r-\delta\kappa}{\eta_r}\,dr}\frac{A^\delta_t-\delta\kappa}{\eta_t} \,dt \\
            =&  - P(a) -\int_{0}^T P(-\psi_\mu(t))e^{  \int_0^{t} \frac{A^\delta_r-\delta\kappa}{\eta_r}\,dr}\frac{A^\delta_t-\delta\kappa}{\eta_t} \,dt,
        \end{align*}
        where the last equation follows from an application of L'Hôpital's rule. This shows that
        \begin{equation*}
            \begin{split}
                \mu_T
                =&~ \frac{\alpha_T}{\eta_T} \left\{\mathbb E[\nu]  - Q(b) + \int_{0}^T P(-\psi_\mu(t))e^{  \int_0^{t} \frac{A^\delta_r-\delta\kappa}{\eta_r}\,dr}\frac{A^\delta_t-\delta\kappa}{\eta_t}\,dt  \right\}.
            \end{split}
        \end{equation*}

        Let us now assume to the contrary that $\E[\nu] \le 0$. Then the right-hand side of the above equation is non-positive (recall that $P(x) \le 0$ for all $x\le 0$), which contradicts our assumption $\mu_T > 0$. Hence,
        \[
            \sign(\mu_T) = \E[\nu]
        \]
        and by Lemma~\ref{lem:apriori} it follows that $\sign(\mu_t) = \E[\nu]$ for all $0 \le t \le T$. The case $\mu_T < 0$ follows by symmetry. If $\mu_T = 0$, then it follows from Lemma~\ref{lem:apriori} that $\mu \equiv 0$ and, hence that  $a = b = 0$ and $\psi_\mu \equiv 0$. Hence, in this case $\E[\nu] = 0$.
    \end{proof}

    \subsection{Fixed-point analysis}

    Two key challenges arise when solving equation \eqref{eq:mu-first-integral} with the terminal condition \eqref{eq:terminal_condtion}. First, the terminal condition is 
    implicitly given in terms of the entire solution; second, the equation is neither a forward, nor a backward equation due to the dependence of $\phi_\mu(t)$ on the forward path $(\mu_s)_{0 \leq s \leq t}$.

        To overcome both problems, we consider a family of parameterized backward equations subject to a consistency requirement on the parameters. More precisely, we replace the implicit terminal value $\mu_T$ by a generic parameters $\theta \ge 0$ and the endogenous quantity $\phi_\mu(T)$ by a generic parameter $c \geq 0$. The resulting parameterized backward equation reads:
    \begin{equation}
        \label{eq:mu-backwards}
        \begin{split}
            \mu_t  =&~ \theta + \int_{t}^{T}\frac{\kappa}{\eta_s}\left(q\left( c - \int_s^T h^\delta_u \kappa\mu_u du \right) + p(-\psi_\mu(s) ) -\delta \right)\mu_s\, ds
            + \int_t^{T}\frac{\dot\eta_s}{\eta_s}\mu_s\,ds  \\
            &~+  \int_t^{T}\frac{\lambda_s}{\eta_s}\left( \ell(-\psi_\mu(s)) +\int_s^{T}\mu_u \,du \right)\,ds, \qquad t\in[0,T].
        \end{split}
    \end{equation}

    \begin{remark}
        The key difference between the market drop-out model considered in \cite{FHH-2023} and the model considered in this paper is that the terminal condition in \cite{FHH-2023} depends on the trading rate $\mu$ only through the quantity $\phi_\mu(T)$. In that setting, the equilibrium equation could be solved by solving a one-dimensional root finding problem. In our current setting, the terminal condition depends on the entire history of market entries, which renders the root-finding problem much more complex.
    \end{remark}
        In a first step, Section~\ref{sec:auxiliary_backwards:equations}, we prove that for any pair of parameters $(\theta,c)$ there exists a unique solution to the terminal value problem \eqref{eq:mu-backwards}, which we denote by $\mu^{\theta, c}$. To establish the existence and uniqueness of a fixed-point, we then show in a second step, Section~\ref{sec:eu_fixedpoints}, that there exists a pair $(\theta,c)$ such that the following conditions hold:
    \begin{align}
        \label{eq:bc_identities}
        c = \int_0^T h^\delta_s \kappa \mu^{\theta,c}_s\,ds, \quad
        \theta =  \frac{\widetilde\alpha^\delta_T}{\eta_T}\left( \mathbb E[\nu] - Q(c) + \int_0^T e^{\int_0^t\frac{A^\delta_r-\delta\kappa}{\eta_r}dr}\frac{A^\delta_t-\delta\kappa}{\eta_t}P(-\psi_{\mu^{\theta,c}}(t))\, dt\right).
    \end{align}

    The corresponding solution $\mu^{\theta, c}$ yields a solution to our fixed point equation \eqref{eq:mu-first-integral} with terminal condition \eqref{eq:terminal_condtion}, therefore, a fixed point of the mapping $F$.

 In what follows, $K$ denotes a positive constant that may change from line to line, but only depends on $T$ and the parameters $\eta, \kappa$ and $\lambda$. In particular, the involved estimates will hold uniformly in $\delta\in[0,1]$.

\subsubsection{Auxiliary backwards equations}\label{sec:auxiliary_backwards:equations}
    To eliminate the terms resulting from the derivative of $\eta$, it will be convenient to do the  substitutions $\vartheta := \mu \eta$ and $\pi := \frac{\kappa}{\eta}$ in equation \eqref{eq:mu-backwards}, from which we obtain the following modified equation:
        \begin{equation}
            \label{eq:vartheta}
            \begin{split}
                \vartheta_t  &=~ \theta\eta_T + \int_{t}^{T} \left\{q\left( c - \int_s^T h^\delta_u \pi_u \vartheta_u du \right) + p\left(-\frac{1}{\alpha^\delta_s}\int_s^{T} e^{ -\int_0^{u}\frac{A^\delta_r}{\eta_r}\,dr    }\pi_u\vartheta_u\,du\right) -\delta \right\}{\pi_s} \vartheta_s\, ds \\
                &~+  \int_t^{T}{\lambda_s} \left\{ \ell\left(-\frac{1}{\alpha^\delta_s}\int_s^{T} e^{ -\int_0^{u}\frac{A^\delta_r}{\eta_r}\,dr    }\pi_u\vartheta_u\,du\right) +\int_s^{T}\frac{1}{\eta_u}\vartheta_u \,du \right\}\,ds.
            \end{split}
        \end{equation}

    \begin{theorem}
        \label{thm:existence}
        Assume that $\E[\nu] > 0$. For any $\theta, c \ge0$ there exists a unique solution $\vartheta^{\theta,c}$ to the integral equation \eqref{eq:vartheta}.
        In particular, $\mu^{\theta,c} = \eta^{-1} \vartheta^{\theta, c}$ is the unqiue solution to \eqref{eq:mu-backwards}. 

    \end{theorem}

    \begin{proof}
        We prove the existence and uniqueness of solutions to \eqref{eq:vartheta} %
        by separating the linear and non-linear parts.

        \textbf{Step 1. Refomulating the equation.} First write equation \eqref{eq:vartheta} as
        \begin{equation}
            \label{eq:vartheta2}
            \begin{split}
                \vartheta_t  &=~ \mathcal I_\theta + G(\vartheta)_t + H_c(\vartheta)_t + \mathcal J(\vartheta)_t,
            \end{split}
        \end{equation}
        where ${\cal I}_\theta :=\eta_T \theta$ and the functions  $ G, H_c, {\cal J}: C^{0}([0,T]) \to C^{0}([0,T])$ are defined by
        \begin{equation}
            \label{eq:GHJ_def}
            \begin{split}
                G(\vartheta)_t  =&~ \int_t^{T}{\lambda_s}\left(\int_s^{T}\frac{1}{\eta_u}\vartheta_u du\right)ds,\\
                H_{c}(\vartheta)_t  =&~
                \int_{t}^{T} q\left( c - \int_s^T h^\delta_u \pi_u \vartheta_u du \right) {\pi_s} \vartheta_s\, ds\\
                {\cal J}(\vartheta)_t  =&~ \int_{t}^{T} \left\{ p\left(-\frac{1}{\alpha^\delta_s}\int_s^{T} e^{ -\int_0^{u}\frac{A^\delta_r}{\eta_r}\,dr    }\pi_u\vartheta_u\,du\right) -\delta\right\}{\pi_s} \vartheta_s\, ds \\
                &~+ \int_t^{T}{\lambda_s} \ell\left(-\frac{1}{\alpha^\delta_s}\int_s^{T} e^{ -\int_0^{u}\frac{A^\delta_r}{\eta_r}\,dr    }\pi_u\vartheta_u\,du\right)\,ds.
            \end{split}
        \end{equation}
        We deduce the existence and uniqueness of a solution to the equation \eqref{eq:vartheta} from suitable growth and Lipschitz-type estimates on the auxiliary function. To establish Lipschitz estimates for the non-linear functions $H_c$ and ${\cal J}$ we will use the following representations:
        \begin{equation}
            \label{eq:J_integrated_form}
            \begin{split}
                H_c(\vartheta)_t =& ~ \frac{1}{h^\delta_T}Q\left( c\right) - \frac{1}{h^\delta_t}Q\left( c - \int_t^T h^\delta_u \pi_u \vartheta_u \,du \right)
                +\int_{t}^{T}\frac{\dot{h}^\delta_s}{(h_s^\delta)^2}Q\left( c - \int_s^T h^\delta_u \pi_u \vartheta_u \,du \right)\, ds, \\ %
                {\cal J}(\vartheta)_t =&~ - (A^\delta_t-\delta\kappa) P\left(-\frac{1}{\alpha^\delta_t}\int_t^{T} e^{ -\int_0^{u}\frac{A^\delta_r}{\eta_r}\,dr    }\pi_u\vartheta_u\,du\right) \\
                &~- \int_{t}^{T}\left( \frac{(A_s^\delta)^2}{\eta_s} - \frac{\delta\kappa}{\eta_s}A^\delta_s\right) P\left(-\frac{1}{\alpha^{\delta}_s}\int_s^{T} e^{ -\int_0^{u}\frac{A^\delta_r}{\eta_r}\,dr    }\pi_u\vartheta_u\,du\right) \,ds - \int_t^T\delta\pi_s\vartheta_s\,ds.
            \end{split}
        \end{equation}

        The representation of the function $H_c$ follows from integration by parts, noting that by \cite[Lemma~2.6]{FHH-2023} the function $\dot{h}^\delta$ is bounded, that $h^\delta$ is increasing, and that $h^\delta_t > 0$ for all $t > 0$. 
        
        The representation of ${\cal J}$ requires a more intricate analysis at the integration limits. %
         {\color{black} Using that $\ell(x) = xp(x) - P(x)$ for all $x \le 0$ and furthermore that
        $$\dot{\psi}_{\vartheta/\eta}(t) = \frac{1}{A^\delta_t-\delta\kappa}\left(  \lambda_t \psi_{\vartheta/\eta}(t)  - \pi_t\vartheta_t\right), \quad t\in[0,T],$$
        we have that
        \begin{align}\label{eq:J_proof_intermediate}
        {\cal J}(\vartheta)
            _t  =&~ - \int_{t}^{T}(A^\delta_s-\delta\kappa) p(-\psi_{\vartheta/\eta}(s) \big) \dot\psi_{\vartheta/\eta}(s)\, ds - \int_t^T\delta\pi_s\vartheta_s\,ds  - \int_t^{T} \lambda_s P(-\psi_{\vartheta/\eta}(s))ds,
        \end{align}
        for all $t\in[0,T]$.
        We now use integration by parts to obtain that
        \begin{equation}
        \begin{split}\label{eq:proof_integration_by_parts}
            &~-\int_t^{T} (A^\delta_s-\delta\kappa) p(-\psi_{\vartheta / \eta}(s))\dot\psi_{\vartheta / \eta}(s)\,ds\\
            =&~ \left. (A^\delta_s-\delta\kappa)P(-\psi_{\vartheta / \eta}(s))\right\vert_{s=t}^{T} - \int_t^{T}\dot{A}^\delta_s P(-\psi_{\vartheta/\eta}(s)) \,ds \\
            =&~- (A^\delta_t  - \delta\kappa) P(-\psi_{\vartheta / \eta}(t)) - \int_t^{T}\left(\frac{(A^\delta_s)^2}{\eta_s^{}} -\frac{\delta\kappa}{\eta_s}A^\delta_s - \lambda_s  \right)P(-\psi_{\vartheta / \eta}(s))ds, \quad t\in[0,T],
            \end{split}
        \end{equation}
        where we have used that the limit of the first term in the right-hand side is zero as $t\to T$, which can be obtained by using L'Hospital's rule.
        Plugging this into \eqref{eq:J_proof_intermediate} yields the representation in \eqref{eq:J_integrated_form} for ${\cal J}$.}
        
        \textbf{Step 2. Growth and Lipschitz estimates.} In view of the boundedness of the model parameters, the boundedness of the tail probability function $p$, and the linear growth estimate $|\ell(x)| \leq 2 |x|$ for all $x<0$, we get from the definition in \eqref{eq:GHJ_def} that
        \begin{align}
            \label{eq:GHJ_growth}
            \vert G(\vartheta)_t \vert + \vert H_c(\vartheta)_t \vert + \vert {\cal J}(\vartheta)_t \vert  \leq &~ K \int_{t}^{T} \vert\vartheta_s\vert\, ds, \quad t\in[0,T], \quad \vartheta \in C^{0}([0,T]), \quad c\ge 0.
        \end{align}
        
        Since $Q$ is Lipschitz-continuous with coefficient $q(0) \le 1$, we readily deduce that for any $\varepsilon > 0$ there exists a constant $L_\varepsilon > 0$ such that for all $\vartheta, \widetilde{\vartheta} \in C^{0}([0,T])$ and $c, \widetilde{c} \ge 0$ it holds that
        \begin{align}
            \label{eq:H_estimate_lip}
            \vert H_c(\vartheta)_t - H_{\widetilde c}(\widetilde\vartheta)_t\vert \le&~ L_\varepsilon \left( \vert c - \widetilde{c} \vert + \int_t^{T} \Vert \vartheta - \widetilde{\vartheta}\Vert_{\infty;[s, T]} \, ds \right), \quad t\in[\varepsilon,T],
        \end{align}
        where $\| y \|_{\infty;[s,T]}:=\sup_{s\leq r\leq T}|y_r|$.
        The corresponding estimate for the function ${\cal J}$ is more involved. We first notice that $P$ is Lipschitz continuous with constant $p(0) \le 1$.
        Therefore, for any $\vartheta, \widetilde{\vartheta}\in C^{0}([0,T])$ we can estimate
        \begin{align*}
            &~(A^\delta_t-\delta\kappa)\left\vert P\left(-\frac{1}{\alpha^\delta_t}\int_t^{T} e^{ -\int_0^{u}\frac{A^\delta_r}{\eta_r}\,dr    }\pi_u\vartheta_u\,du\right) - P\left(-\frac{1}{\alpha^\delta_t}\int_t^{T} e^{ -\int_0^{u}\frac{A^\delta_r}{\eta_r}\,dr    }\pi_u\widetilde{\vartheta}_u\,du\right)\right\vert \\
            \le&~ \frac{(A^\delta_t-\delta\kappa)}{\alpha^\delta_t}\int_t^{T} e^{-\int_0^{s}\frac{A^\delta_r}{\eta_r}dr}\pi_s \abs{\vartheta_s - \widetilde{\vartheta}_s}\,ds \\
            \le&~ K\frac{(A^\delta_t-\delta\kappa) e^{-\int_0^{t}\frac{A^\delta_r}{\eta_r}dr}}{\alpha^\delta_t}\int_t^{T}\abs{\vartheta_s - \widetilde{\vartheta}_s}\,ds \\
            =&~ K\int_t^{T}\abs{\vartheta_s - \widetilde{\vartheta}_s}\,ds,
        \end{align*}
        for all $t \in [0,T]$.
        Similarly,  for all $t\in[0,T]$ we can estimate the second part of ${\cal J}$ as follows:
        \begin{align*}
            &~\int_t^{T}\frac{(A_s^\delta)^2-\delta\kappa A^\delta_s}{\eta_s}\left\vert P\left(-\frac{1}{\alpha^\delta_s}\int_s^{T} e^{ -\int_0^{u}\frac{A^\delta_r}{\eta_r}\,dr    }\pi_u\vartheta_u\,du\right) - P\left(-\frac{1}{\alpha^\delta_s}\int_s^{T} e^{ -\int_0^{u}\frac{A^\delta_r}{\eta_r}\,dr    }\pi_u\widetilde{\vartheta}_u\,du\right)\right\vert ds \\
            \leq&~ K\int_t^T (A^\delta_s+1)\left(\int_s^T|\vartheta_u -  \widetilde{\vartheta}_u|\,du\right)ds\\
            \leq&~K\int_t^T (A^\delta_s+1) (T-s)\|\vartheta -\widetilde\vartheta\|_{\infty;[s,T]}\,ds\\
            \le& K\int_t^{T} \Vert \vartheta - \widetilde{\vartheta}\Vert_{\infty;[s, T]} \, ds,
        \end{align*}
        where the second to last inequality uses \cite[Lemma~A.1]{FHH-2023}. 
        
        Summarizing the above estimates and using the linearity of $G$, we see that for any $\varepsilon > 0$ there exists a constant $L_\varepsilon > 0$ such that for all $c \ge 0$ and $\vartheta, \widetilde{\vartheta} \in C^{0}([0,T])$ the following holds:
        \begin{align}
            \label{eq:GHJ_lip}
            \vert G(\vartheta)_t - G(\widetilde\vartheta)_t \vert + \vert H_c(\vartheta)_t - H_c(\widetilde\vartheta)_t\vert + \vert {\cal J}(\vartheta)_t - {\cal J}(\widetilde\vartheta)_t \vert  \leq &~ L_{\varepsilon}\int_{t}^{T} \Vert \vartheta - \widetilde{\vartheta}\Vert_{\infty;[s, T]} \, ds, \quad t\in[\varepsilon,T].
        \end{align}
        
         \textbf{Step 3. Constructing the solution.}
        Iterating the estimate \eqref{eq:GHJ_lip} shows that for any $\theta,c \ge 0$ and any $n\in\bN$,
        \begin{align*}
            &\left\Vert [ {\cal I}_\theta + G+ H_c + {\cal J} ]^{n}(\vartheta) - [ {\cal I}_\theta + G + H_c + {\cal J} ]^{n}(\widetilde\vartheta)\right\Vert_{[\varepsilon, T],\infty}  %
            \le~ \frac{L_{\varepsilon}^{n}T^{n}}{n!}\Vert\vartheta - \widetilde{\vartheta}\Vert_{[\varepsilon, T],\infty}
        \end{align*}
        and so it follows from \cite[Theorem 2.4]{Teschl-2016} that the operator $[ {\cal I}_\theta + G + H_c + {\cal J} ]$ has a unique fixed-point
        \[
            \vartheta^{\theta,c, \varepsilon}\in C^{0}({[\varepsilon,T]}).
        \]
        It follows from the uniqueness that the pointwise limit
        $
        \vartheta^{\theta,c}_t := \lim_{\epsilon \to 0}\vartheta_{t}^{\theta,c, \epsilon}
        $
        is well defined and satisfies
        \[
            [ {\cal I}_\theta + G + H_c + {\cal J}](\vartheta^{\theta, c})_t = \vartheta^{\theta, c}_t \quad \mbox{for all} \quad  t\in(0,T].
        \]
        Using the growth estimate \eqref{eq:GHJ_growth} and the dominated convergence theorem, we can uniquely extend $\vartheta^{\theta,c}$ to a continuous function on $[0,T]$.
        By construction $\vartheta^{\theta,c}$ is the unique fixed-point of $[ {\cal I}_\theta + G + H_c + {\cal J}]$ in $C^{0}([0,T])$, hence, the unique solution to the equation \eqref{eq:vartheta}. Thus, the unique solution to the equation \eqref{eq:mu-backwards} is given by $$\mu^{\theta,c} := \frac{\vartheta^{\theta,c}}{\eta}.$$

\end{proof}  
{\color{black}
\subsubsection{Existence and uniqueness of fixed-points}\label{sec:eu_fixedpoints}
                To establish the existence (and uniqueness) of a solution to our fixed-point equation, we next need solve the system \eqref{eq:bc_identities} for $(\theta, c)$, i.e., we need to prove that the function $\rho: [0,\infty) \times [0,\infty) \to \bR^2$, $(\theta, c) \mapsto (\rho_1(\theta, c), \rho_2(\theta, c))$ defined by
        \begin{equation}\label{eq:def_rho}
        \begin{split}
            \rho_1(\theta, c) &:=  \frac{\eta_T}{\widetilde\alpha^\delta_T}\theta - \mathbb E[\nu] + Q(c) - \int_0^T e^{\int_0^t\frac{A^\delta_r-\delta\kappa}{\eta_r}dr}\frac{A^\delta_t-\delta\kappa}{\eta_t}P\left(-\frac{1}{\alpha^\delta_t}\int_t^{T} e^{ -\int_0^{u}\frac{A^\delta_r}{\eta_r}\,dr    }\pi_u\vartheta^{\theta, c}_u\,du\right)\, dt,\\
            \rho_2(\theta, c) &:= c-\int_0^T h^\delta_s \pi_s \vartheta^{\theta,c}_s\,ds,
        \end{split}
        \end{equation}
        has a (unique) root, where $\vartheta^{\theta, c} \in C([0,T])$ is the solution to the equation \eqref{eq:vartheta}. 
        The following lemma proves the differentiability of $\vartheta^{\theta, c}$ with respect to the parameters $\theta$ and $c$ and establishes uniform bounds for the partial derivatives. The key observation is that the derivatives satisfy a Volterra equation with a generally discontinuous and unbounded kernel that depends non-linearly on the solution $\vartheta^{\theta, c}$.

   \begin{lemma}
        \label{lem:differentiation_bc}
        For all $\theta, c\ge 0$ let $\vartheta^{\theta, c} \in C([0,T])$ be the solution to \eqref{eq:vartheta}. Then the map $(\theta,c) \mapsto \vartheta^{\theta,c}_t$ is differentiable for all $t\in(0,T]$ and %
        \begin{align*}
            0 ~<~ \eta_T ~\le~ \frac{\partial \vartheta^{\theta,c}_t}{\partial \theta} ~\le~ K\frac{1}{h^\delta_t} \qquad \text{ and } \qquad - K\frac{1}{h^\delta_t} ~\le~ \frac{\partial \vartheta^{\theta,c}_t}{\partial c} ~\le~ 0 , \qquad t \in (0,T], \quad \theta, c \ge 0.
        \end{align*}
        Furthermore, let $\rho:[0,\infty)\times[0,\infty)\to \mathbb{R}^2$ be as defined in \eqref{eq:def_rho}. Then $\rho$ is differentiable and
        \begin{align*}
            \renewcommand*{\arraystretch}{2}
            D\rho_{(\theta, c)} =
            \begin{pmatrix}
                \frac{\eta_T}{\widetilde\alpha^\delta_T} +  \int_0^T \chi_s \frac{\partial \vartheta^{\theta, c}_s}{\partial \theta} \,ds & q(c) +  \int_0^T \chi_s \frac{\partial \vartheta^{\theta, c}_s}{\partial c} \,ds\\
                -\int_0^{T} h^\delta_s \pi_s \frac{\partial \vartheta^{\theta,c}_s}{\partial \theta} ds & 1 - \int_0^{T} h^\delta_s \pi_s \frac{\partial \vartheta^{\theta,c}_s}{\partial c} \,ds
            \end{pmatrix}
        \end{align*}
        where for all $t\in [0,T]$
        \begin{align*}
            0 ~\le~ \chi_t
            := \pi_te^{ -\int_0^{t}\frac{A^\delta_r}{\eta_r}\,dr    }\int_0^t \frac{e^{ 2\int_0^{s}\frac{A^\delta_r-\delta\kappa}{\eta_r}\,dr    }}{\eta_s}p\left(-\frac{1}{\alpha^\delta_s}\int_s^{T} e^{ -\int_0^{u}\frac{A^\delta_r}{\eta_r}\,dr    }\pi_u\vartheta^{\theta, c}_u\,du\right)\, ds ~\le~ Kp(0)h^\delta_t.
        \end{align*}

    \end{lemma}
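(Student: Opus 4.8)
\emph{Proof strategy.} The plan is to differentiate the fixed-point identity $\vartheta^{\theta,c}=[\mathcal{I}_\theta+G+H_c+\mathcal{J}](\vartheta^{\theta,c})$ from Step~1 of the proof of Theorem~\ref{thm:existence}, to derive linear Volterra-type equations for the partial derivatives $w^\theta:=\partial_\theta\vartheta^{\theta,c}$ and $w^c:=\partial_c\vartheta^{\theta,c}$, and to read off all stated bounds from a weighted Gr\"onwall estimate together with a comparison argument. The first point to settle is differentiability itself: although the tail functions $p,q$ are merely monotone, the maps $H_c$ and $\mathcal{J}$ are $C^1$ in $(\vartheta,c)$ because the integrated representations \eqref{eq:J_integrated_form} (Lemma~\ref{lem:HJ_ibp}) are built from the $C^1$ primitives $Q,P$ with bounded derivatives $Q'=q$, $P'=p$. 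Hence $(\vartheta,\theta,c)\mapsto[\mathcal{I}_\theta+G+H_c+\mathcal{J}](\vartheta)$ is $C^1$, its partial Fr\'echet derivative in $\vartheta$ is a Volterra (integral-over-$[t,T]$) operator, and so the scheme used in Step~1 of Theorem~\ref{thm:existence} --- truncating to $[\varepsilon,T]$, iterating the contraction via \cite[Theorem~2.4]{Teschl-2016}, and passing $\varepsilon\to0$ --- applies to the differentiated equation and produces, for $t\in(0,T]$, functions $w^\theta,w^c$ solving
\begin{equation*}
  w^\theta = \eta_T + \mathcal{L}(w^\theta), \qquad w^c = r_c + \mathcal{L}(w^c),
\end{equation*}
where $\mathcal{L}:=D[G+H_c+\mathcal{J}](\vartheta^{\theta,c})$ and $r_c:=\partial_c H_c(\vartheta^{\theta,c})$. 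That these solutions genuinely are the partial derivatives I would confirm by inserting the difference quotients $\varepsilon^{-1}(\vartheta^{\theta+\varepsilon,c}-\vartheta^{\theta,c})$ into the equation, controlling the remainder with the Lipschitz estimate \eqref{eq:GHJ_lip}, and invoking uniqueness on $[\varepsilon,T]$.

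The operator $\mathcal{L}$ is \emph{not} bounded on $C^0([0,T])$: differentiating the term $-\tfrac{1}{h^\delta_t}Q(\cdot)$ in \eqref{eq:J_integrated_form} creates a factor $1/h^\delta_t$, and differentiating the $A^\delta$-weighted terms creates factors $A^\delta_t$ and $(A^\delta_s)^2$. I would neutralise this with the substitution $v:=h^\delta w$. Using $1/\alpha^\delta_t=e^{\int_0^tA^\delta_r/\eta_r\,dr}/(A^\delta_t-\delta\kappa)$ and $h^\delta_t=\beta_t\,e^{-\int_0^tA^\delta_r/\eta_r\,dr}$ with $\beta_t:=\int_0^t\eta_s^{-1}e^{\int_0^s(2A^\delta_r-\delta\kappa)/\eta_r\,dr}\,ds$, every term of the transformed equation becomes dominated by $K\bigl(1+\int_t^T\|v\|_{\infty;[s,T]}\,ds\bigr)$: the singular prefactors cancel against $\beta_t/\beta_u\le1$ (for $u\ge t$, $\beta$ increasing), against $h^\delta_t\int_t^T\dot h^\delta_s(h^\delta_s)^{-2}\,ds=1-h^\delta_t/h^\delta_T\le1$, and against the quantity $e^{-\int_0^tA^\delta_r/\eta_r\,dr}\int_t^T(A^\delta_s/\eta_s)e^{\int_0^sA^\delta_r/\eta_r\,dr}\,(\,\cdot\,)\,ds$ rewritten via Fubini --- each step licensed by the boundedness of $h^\delta,\dot h^\delta$ and of the $A^\delta$-weighted integrals from \cite[Lemma~2.6, Lemma~A.1]{FHH-2023}. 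Gr\"onwall's inequality then yields $\|v\|_{\infty;[t,T]}\le K$, that is, the upper bounds $|w^\theta_t|,|w^c_t|\le K/h^\delta_t$.

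For the signs $w^\theta_t\ge\eta_T$ and $w^c_t\le0$ I would invoke a comparison principle for the Volterra equations: once they are put into the $v$-form one checks that the governing integral operator preserves (resp.\ reverses) the relevant order, the one genuinely delicate contribution --- coming from the $P(-\psi_{\vartheta/\eta})$ terms that encode the buyers' entry threshold --- being controlled through the identity $\pi_s\vartheta^{\theta,c}_s-\lambda_s\psi_{\vartheta^{\theta,c}/\eta}(s)=-(A^\delta_s-\delta\kappa)\dot\psi_{\vartheta^{\theta,c}/\eta}(s)>0$ (cf.\ \eqref{dotpsi} and Assumption~\ref{ass:ceof_relation}); since $\eta_T>0$ and $r_c\le0$ (because $q\ge0$), a monotone-iteration / Neumann-series argument then delivers $w^\theta\ge\eta_T$ and $w^c\le0$. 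Differentiability of $\rho$ is then immediate from that of $\vartheta^{\theta,c}$ and the $C^1$-regularity of $Q,P$; computing $D\rho$ and applying Fubini to the double integral in $\partial_\theta\rho_1$ (resp.\ $\partial_c\rho_1$) --- interchanging the $t$- and $u$-integrations and collapsing the inner integral with the identity for $1/\alpha^\delta_t$ --- reproduces exactly the kernel $\chi_t$ of the statement, hence the displayed matrix. Finally $\chi_t\ge0$ is clear, and since $\vartheta^{\theta,c}>0$, $\pi>0$ and $\alpha^\delta>0$ make the argument of $p$ inside $\chi_t$ non-positive, one has $p(\cdot)\le p(0)$ there, so comparing the integrand with that of $h^\delta$ gives $\chi_t\le p(0)\,\pi_t\,h^\delta_t\le Kp(0)h^\delta_t$.

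\emph{Main obstacle.} The crux is the unboundedness of the kernel of $\mathcal{L}$ at $t=0$ --- equivalently, the blow-up of $w^\theta,w^c$ there --- which is exactly what distinguishes this lemma from its counterpart in \cite{FHH-2023} and which is resolved only through the weighted substitution $v=h^\delta w$ and the several non-obvious cancellations above that restore a genuine integrable-kernel Volterra structure. A secondary difficulty is the sign analysis, where the integral operator is not manifestly order-preserving and Assumption~\ref{ass:ceof_relation} (via $\dot\psi<0$) is what closes the comparison argument.
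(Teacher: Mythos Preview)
Your overall architecture matches the paper's: differentiate the fixed-point identity in its integrated form \eqref{eq:J_integrated_form}, obtain linear Volterra equations for $w^\theta,w^c$, neutralise the $1/h^\delta_t$ singularity by the substitution $v=h^\delta w$ (the paper's $y(t)=h^\delta_t\,\partial\vartheta^{\theta,c}_t$), and read off the bounds from the resulting bounded-kernel equation. The upper bounds $|w^\theta_t|,|w^c_t|\le K/h^\delta_t$, the Fubini computation producing $\chi_t$, and the estimate $\chi_t\le Kp(0)h^\delta_t$ are all correctly sketched.

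The gap is in your sign analysis. You locate the ``genuinely delicate contribution'' in the $P$-terms and invoke $\dot\psi<0$ via Assumption~\ref{ass:ceof_relation}. But after linearising \eqref{eq:J_integrated_form}, the $P$-terms contribute to the kernel the pieces $D_t\,e^{-\int_t^s A^\delta_r/\eta_r\,dr}\pi_s$ and $e^{-\int_0^s A^\delta_r/\eta_r\,dr}\pi_s\int_t^s \frac{A^\delta_u(A^\delta_u-\delta\kappa)}{\eta_u\,\alpha^\delta_u}D_u\,du$ with $D_u=p(-\psi_{\vartheta/\eta}(u))\ge0$; these are \emph{manifestly} non-negative, and $\dot\psi<0$ plays no role. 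The actually delicate piece is the combination of the $Q$-terms with $-\delta\pi_s$: the kernel contains $\tfrac{h^\delta_s\pi_s}{h^\delta_t}C_t - h^\delta_s\pi_s\int_t^s\tfrac{\dot h^\delta_u}{(h^\delta_u)^2}C_u\,du - \delta\pi_s$ with $C_u=q(c-\int_u^T h^\delta\pi\vartheta)$, and the middle term has the wrong sign. The paper resolves this by a Stieltjes integration by parts (using that $C$ is c\`agl\`ad non-increasing and $h^\delta$ increasing) to rewrite these three terms as $\pi_s(C_s-\delta)-h^\delta_s\pi_s\int_{[t,s)}\tfrac{1}{h^\delta_u}\,dC_u$, which is non-negative because $dC\le0$ and $C_s\ge\delta$ (trivial for $\delta=0$; for $\delta=\tfrac1N$ one uses $c<Q^{-1}(\E[\nu])$ to get $C_s\ge\tfrac1N$). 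The same trick is needed for the forcing term of the $c$-equation: your claim ``$r_c\le0$ because $q\ge0$'' is too quick, since $r_c(t)=\tfrac{1}{h^\delta_T}q(c)-\tfrac{1}{h^\delta_t}C_t+\int_t^T\tfrac{\dot h^\delta_s}{(h^\delta_s)^2}C_s\,ds$ has terms of both signs; only after the Stieltjes integration by parts does it collapse to $\int_{[t,T)}\tfrac{1}{h^\delta_u}\,dC_u\le0$. Without this device the kernel is not manifestly order-preserving and your monotone-iteration argument cannot close.
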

    \begin{proof}
Our starting point is the equation \eqref{eq:vartheta}, which can be brought into the following form using an integration by parts argument (cf. \eqref{eq:proof_integration_by_parts}):
        \begin{align}
            \label{eq:vartheta_appendix}
            \begin{split}
                \vartheta_t =~& \theta\eta_T + \frac{1}{h^\delta_T}Q\left( c\right) - \frac{1}{h^\delta_t}Q\left( c - \int_t^T h^\delta_u \pi_u \vartheta_u du \right) + \int_{t}^{T}\frac{\dot{h}^\delta_s}{(h^\delta_s)^2}Q\left( c - \int_s^T h^\delta_u \pi_u \vartheta_u du \right)\, ds \\
                &- (A^\delta_t - \delta\kappa) P\left(-\frac{1}{\alpha^\delta_t}\int_t^{T} e^{ -\int_0^{u}\frac{A^\delta_r}{\eta_r}\,dr    }\pi_u\vartheta_u\,du\right) \\
                & - \int_{t}^{T} \frac{A_s^\delta(A^\delta_s-\delta\kappa)}{\eta_s} P\left(-\frac{1}{\alpha^\delta_s}\int_s^{T} e^{ -\int_0^{u}\frac{A^\delta_r}{\eta_r}\,dr    }\pi_u\vartheta_u\,du\right) \,ds\\
                &+  \int_t^{T}\lambda_s \left( \int_s^{T}\frac{1}{\eta_u}\vartheta_u \,du \right)\,ds - \int_t^T\delta\pi_s\vartheta_s\,ds.
            \end{split}
        \end{align}
        By the proof of Theorem~\ref{thm:existence}, we know that for any $\theta, c \ge 0$ there exists a unique solution $$\vartheta^{\theta, c} \in C^0([0,T])$$ to the above equation and the estimates \eqref{eq:H_estimate_lip} and \eqref{eq:GHJ_lip} show that the mapping
        \[
            (\theta, c) \mapsto \vartheta^{\theta, c}\vert_{[\varepsilon,T]} \in C^{0}([\varepsilon,T])
        \]
        is Lipschitz continuous, for any $\varepsilon > 0$.
        This allows us to apply the dominated convergence theorem to establish the differentiability w.r.t.~$\theta$ and to interchange differentiation and integration to obtain the following representation of the derivative:
        \begin{equation}
            \label{eq:mu-theta}
            \begin{split}
                \frac{\partial \vartheta^{\theta, c}_t}{\partial \theta}  =&~ \eta_T
                + \frac{1}{h^\delta_t}q\left( c - \int_t^T h^\delta_s \pi_s\vartheta^{\theta, c}_s   \,ds \right) \int_t^T h^\delta_s \pi_s \frac{\partial \vartheta^{\theta, c}_s}{\partial \theta} \,ds \\
                &- \int_{t}^{T} \frac{\dot{h}^\delta_s}{(h_s^\delta)^2} q\left( c - \int_s^T h^\delta_u \pi_u\vartheta^{\theta, c}_u \,du \right) \left( \int_s^T h^\delta_u \pi_u \frac{\partial \vartheta^{\theta, c}_u}{\partial \theta}    \,du \right)\, ds \\
                &  + (A^\delta_t-\delta\kappa) p\left(-\frac{1}{\alpha^\delta_t}\int_t^{T} e^{ -\int_0^{u}\frac{A^\delta_r}{\eta_r}\,dr    }\pi_u\vartheta^{\theta, c}_u\,du\right)\frac{1}{\alpha^\delta_t}\int_t^{T} e^{ -\int_0^s\frac{A^\delta_r}{\eta_r}\,dr    }\pi_s\frac{\partial \vartheta^{\theta, c}_s}{\partial \theta}\,ds \\
                & + \int_t^{T}\frac{A^\delta_s(A^\delta_s-\delta\kappa)}{\eta_s}p\left(-\frac{1}{\alpha^\delta_s}\int_s^{T} e^{ -\int_0^{u}\frac{A^\delta_r}{\eta_r}\,dr    }\pi_u\vartheta^{\theta, c}_u\,du\right)\frac{1}{\alpha^\delta_s}\left(\int_s^{T} e^{ -\int_0^u\frac{A^\delta_r}{\eta_r}\,dr    }\pi_u\frac{\partial\vartheta^{\theta, c}_u}{\partial \theta}\, du\right)\,ds \\
                &+  \int_t^{T}\lambda_s\left(\int_s^{T} \frac{1}{\eta_u}\frac{\partial \vartheta^{\theta, c}_u}{\partial \theta} \,du \right) \,ds - \int_t^T\delta\pi_s \frac{ \partial\vartheta^{\theta,c}_s }{\partial\theta}\,ds   \\
                =:&~ \eta_T + \int_t^{T} \Gamma(t,s) \frac{\partial \vartheta^{\theta, c}_s}{\partial \theta}  \,ds,
            \end{split}
        \end{equation}
        where the kernel $\Gamma$ admits the explicit representation
        \begin{align*}
            \Gamma(t, s)~=&~ \frac{h^\delta_s \pi_s}{h^\delta_t}C_t - h^\delta_s\pi_s\int_{t}^{s} \frac{\dot{h}^\delta_u}{(h^\delta_u)^2} C_u\, du +  \frac{1}{\eta_s}\int_t^{s}\lambda_u \,du  \\
            &+ D_t e^{ -\int_t^s\frac{A^\delta_r}{\eta_r}\,dr    }\pi_s
            + e^{ -\int_0^s\frac{A^\delta_r}{\eta_r}\,dr} \pi_s\int_t^{s}\frac{A^\delta_u(A^\delta_u-\delta\kappa)}{\eta_u}D_u\frac{1}{\alpha^\delta_u} \,du - \delta \pi_s
        \end{align*}
        for all $0 \le t \le s \le T$ with
        \[
            C_t := q\left( c - \int_t^T h^\delta_s \pi_s\vartheta^{\theta, c}_s \,ds \right), \quad
            D_t := p\left(-\frac{1}{\alpha^\delta_t}\int_t^{T} e^{ -\int_0^{u}\frac{A^\delta_r}{\eta_r}\,dr    }\pi_u\vartheta^{\theta, c}_u\,du\right).
        \]

        This shows that the derivative satisfies a Volterra integral equation, which suggests that the derivative can be bounded in terms of the kernel $\Gamma$. To this end, we first prove that $\Gamma$ is non-negative and then establish a growth condition on the kernel that carries over to our derivative function.

        \begin{itemize}
            \item {\bf Non-negativity of $\Gamma$.} The function $C$ is c\`agl\`ad and non-increasing, in particular of finite variation. Moreover, $h^\delta$ is continuous and increasing.
            Therefore, using the integration by parts formula for finite variation functions we can transform the first two terms of $\Gamma$ as follows
            \begin{align*}
                \frac{h^\delta_s \pi_s}{h_t}C_t - h^\delta_s\pi_s\int_{t}^{s} \frac{\dot{h}^\delta_u}{(h_u^\delta)^2}  C_u\, du  {-\delta\pi_s}  ~=~ \pi_s ( C_{s} {-\delta}) - h^\delta_s\pi_s\int_{[t,s)} \frac{1}{h^\delta_u} \,d C_u, \quad 0 < t \le s \le T.
            \end{align*}

            In the MFG $\delta = 0$ and hence the above term is non-negative. In the $N$-player game $\delta=\frac{1}{N}$. Let $x_N$ be the initial position of the largest seller, i.e.~the upper limit of the support of $\nu$. Since
            \[
                c-\int_s^Th^\delta_r\pi_r\vartheta^{\theta,c}_r\,dr  \leq c<Q^{-1}(\mathbb E[\nu])\leq Q^{-1}\left(\int_0^\infty q(x)\,dx\right)=x_N,
            \]
            we see that
            \[
                C_t=q\left( c - \int_t^T h^\delta_s \pi_s\vartheta^{\theta, c}_s \,ds \right)\geq \frac{1}{N},
            \]
            from which we again deduce non-negativity of the above term. All other terms in the definition of $\Gamma$ are non-negative as well.
            \item {\bf Growth bounds on $\Gamma$.} Using again that $h^\delta$ is increasing we see that
            \begin{align*}
                \left\vert \pi_s C_{s} - h^\delta_s\pi_s\int_{[t,s)} \frac{1}{h^\delta_u} \,d C_u \right\vert  ~\le~ K\left(1 + \frac{h^\delta_s}{h^\delta_t}\right), \qquad 0 < t \le s \le T.
            \end{align*}
            By \cite[Lemma~A.1]{FHH-2023} we have the following estimate
            \begin{align*}
                e^{ -\int_0^s\frac{A^\delta_r}{\eta_r}\,dr} \pi_s\int_t^{s}\frac{(A^\delta_u)^2}{\eta_u}D_u\frac{1}{\alpha^\delta_u} \,du ~\leq~ K(T-s)\int_t^s \frac{1}{(T-u)^2}\,du ~\leq~ K, \qquad 0 \le t \leq s \le T.
            \end{align*}
        \end{itemize}
        From the above estimates and the monotonicity of $h^\delta$ it follows that the modified kernel $$\widetilde{\Gamma}(t,s):= \Gamma(t,s)\frac{h^\delta_t}{h^\delta_s}, \qquad 0 \le t \le s \le T$$ is non-negative and bounded.
        Results established in \cite{beesack1969comparison} show that
        \[
            y(t) := h^\delta_t\frac{\partial{\vartheta^{\theta, c}_t}}{\partial \theta}, \quad t\in[0,T]
        \]
        is the unique and bounded solution to the Volterra integral equation
        \begin{align*}
            y(t) =&~ h^\delta_t\eta_T + \int_t^{T} \widetilde\Gamma(t,s) y(s)  \,ds, \quad t\in[0,T].
        \end{align*}
        In particular, it holds
        \begin{align*}
            \eta_T ~\le~ \frac{\partial{\vartheta^{\theta, c}_t}}{\partial \theta} ~\le~ K\frac{1}{h^\delta_t}, \qquad 0 < t \le T.
        \end{align*}

        An analogous argument establishes the differentiability of the function $\vartheta^{\theta, c}_t$ with respect to the parameter $c$ and shows that
        \[
            y(t) = h^\delta_t\frac{\partial{\vartheta^{\theta, c}_t}}{\partial c}, \quad t\in[0,T]
        \]
        uniquely solves the integral equation
        \begin{align*}
            y(t) =&~ z(t) + \int_t^{T} \widetilde{\Gamma}(t,s) y(s)  \,ds, \quad t\in[0,T],
        \end{align*}
        where %
        \begin{align*}
            z(t) := h^\delta_t\left( \frac{1}{h^\delta_T} q\left( c\right) - \frac{1}{h^\delta_t}C_t + \int_{t}^{T}\frac{\dot{h}^\delta_s}{(h^\delta_s)^2}C_s\, ds \right)~=~ h^\delta_t\int_{[t, T)} \frac{1}{h^\delta_u} \,d C_u.
        \end{align*}
        As before we see from the right-hand side that $z$ is non-positive and bounded.
        Hence, it follows that
        \begin{align*}
            -K\frac{1}{h^\delta_t} ~\le~ \frac{\partial{\vartheta^{\theta, c}_t}}{\partial c} ~\le~ z(t)\frac{1}{h^\delta_t} \le 0, \qquad 0 < t \le T.
        \end{align*}

        To prove that $\rho$ is differentiable we have to once again justify that differentiation w.r.t.~$\theta$ (resp.~$c$) is interchangeable with the integrals in the definition of $\rho$.

        To this end, we notice that the above bounds for $\left|\frac{\partial{\vartheta^{\theta, c}}}{\partial \theta}\right|$ and $\left|\frac{\partial{\vartheta^{\theta, c}}}{\partial c}\right|$ hold uniformly in $\theta, c \ge 0$.
        Thus it suffices to show that these bounds provide integrable majorants.
        For $\rho_2$ this follows from the presence of the factor $h^\delta$ in the integrand. Regarding $\rho_1$ we recall that by \cite[Lemma~2.6]{FHH-2023} we have that $h^\delta_t \ge K t$ for all $t\ge 0$ and thus,
        \begin{align*}
            &\int_0^T \frac{A^\delta_t-\delta\kappa}{\alpha^\delta_t \eta_t}p\left(-\frac{1}{\alpha^\delta_t}\int_t^{T} e^{ -\int_0^{u}\frac{A^\delta_r}{\eta_r}\,dr    }\pi_u\vartheta^{\theta, c}_u\,du\right)  e^{  \int_0^t \frac{ A^\delta_r-\delta\kappa }{\eta_r}\,dr  }  \left(\int_t^{T}e^{ -\int_0^{s}\frac{A^\delta_r}{\eta_r}\,dr    }\pi_s \frac{1}{h_s} \,ds\right)\, dt \\
            &\le K \int_0^T(A^\delta_t - \delta\kappa) \left(\int_t^{T}\frac{1}{s} ds \right)\,dt \\
            &\le K \int_0^T \frac{1}{T-t} (\log(T) - \log(t)) \,dt\\
            &< \infty.
        \end{align*}

        A straightforward computation using Fubini's theorem now shows that
        \begin{align*}
            \frac{\partial \rho_1}{\partial \theta}(\theta, c)
            =&~ \frac{\eta_T}{\widetilde\alpha^\delta_T} +  \int_0^T \frac{A^\delta_t-\delta\kappa}{\alpha^\delta_t \eta_t}p\left(-\frac{1}{\alpha^\delta_t}\int_t^{T} e^{ -\int_0^{u}\frac{A^\delta_r}{\eta_r}\,dr    }\pi_u\vartheta^{\theta, c}_u\,du\right)   e^{\int_0^t\frac{ A^\delta_r-\delta\kappa  }{\eta_r}\,dr}    \\
            &~ \qquad \qquad    \times \left(\int_t^{T}e^{ -\int_0^{s}\frac{A^\delta_r}{\eta_r}\,dr    }\pi_s \frac{\partial \vartheta^{\theta,c}_s}{\partial \theta} \,ds\right)\, dt \\
            =&~ \frac{\eta_T}{\widetilde\alpha^\delta_T} +  \int_0^T \frac{\partial \vartheta^{\theta,c}_t}{\partial \theta} \pi_t e^{ -\int_0^{t}\frac{A^\delta_r}{\eta_r}\,dr    }\left(\int_0^t e^{ \int_0^{s}\frac{A^\delta_r-\delta\kappa}{\eta_r}\,dr    }\frac{A^\delta_s-\delta\kappa}{\alpha^\delta_s \eta_s} \right. \\
            &~ \left. \qquad \qquad \times p\left(-\frac{1}{\alpha^\delta_s}\int_s^{T} e^{ -\int_0^{u}\frac{A^\delta_r}{\eta_r}\,dr    }\pi_u\vartheta^{\theta, c}_u\,du\right)\, ds\right)\,dt \\
            =&~ \frac{\eta_T}{\widetilde\alpha^\delta_T} +  \int_0^T \frac{\partial \vartheta^{\theta,c}_t}{\partial \theta}\chi^{\theta, c}_t \,dt.
        \end{align*}

        The derivation of the remaining partial derivatives is analogous.
        The fact that $\chi^{\theta, c}$ is non-negative follows from its definition, while its upper bound follows from the definition of $h^\delta$.
    \end{proof}
}
    We are now ready to prove the existence (and uniqueness) of the roots of the map~$\rho$, and thus the existence (and uniqueness) of fixed points of~$F$.
    
    \begin{theorem}
        \label{thm:existence2}
        Assume that $\E[\nu] > 0$.
        \begin{enumerate}
            \item[(i)] There exists $\theta, c \ge 0$ such that the identities \eqref{eq:bc_identities} hold.
            In particular, $\mu^{\theta,c}$ is a fixed-point of $F$.

            \item[(ii)] If $p(0)$ is small enough, then there exists a unique pair $(\theta,c)$ such that \eqref{eq:bc_identities} holds. In this case, $\mu^{\theta,c}$ is the unique fixed-point of $F$.
        \end{enumerate}
    \end{theorem}
\begin{proof}
    By Theorem~\ref{thm:existence}, there exists a unique solution $\mu^{\theta, c} = \eta^{-1} \vartheta^{\theta, c}$ to~\eqref{eq:mu-backwards} for all $c, \theta \ge 0$. Hence, to prove the first claim, it suffices to show that there exist $c, \theta \ge 0$ such that~\eqref{eq:bc_identities} holds. To prove the second claim, we must show that such a pair $(\theta, c)$ is unique. This is equivalent to showing that the map $\rho$, defined in~\eqref{eq:def_rho}, has a (unique) root.
    To this end, we first notice that any such root necessarily satisfies\footnote{Note that  $\lim_{x\to\infty} Q(x) = \int_0^\infty q(x)\,dx > \E[\nu]$. Furthermore, $Q$ is increasing and strictly increasing on the interval $Q^{-1}(\bR)$, hence,  $Q^{-1}(\E[\nu])$ is well defined.}
        \[
            c < Q^{-1}(\E[\nu])).
        \]

        We now proceed in two steps. We first prove that for any $c\in [0, Q^{-1}(\E[\nu]))$ there exists a unique $\theta(c) \in (0,\mathbb{E}[\nu]\frac{\widetilde\alpha^\delta_T}{\eta_T})$ such that
        \begin{align}
            \label{eq:implicit_def_bc}
            \rho_{1}(\theta(c),c) = 0.
        \end{align}
        In fact, from Lemma~\ref{lem:differentiation_bc} it follows that the mapping $\theta \mapsto \rho_1(\theta, c)$ is strictly increasing and continuous.
        Thus, it suffices to show that this map changes its sign on $[0,\mathbb{E}[\nu]\frac{\widetilde\alpha^\delta_T}{\eta_T}]$.
        Choosing $\theta = 0$ we have $\vartheta_T^{0, c} = 0$, hence $\vartheta^{0, c} \equiv 0$, and therefore $\rho_1(0, c) = Q(c) - \mathbb{E}[\nu] < 0$.
        On the other hand, choosing $\theta = \mathbb{E}[\nu]\frac{\widetilde\alpha^\delta_T}{\eta_T}$ we have $\vartheta_T^{\theta,c} >  0$, hence $\vartheta^{\theta,c} > 0$, and therefore
        \begin{align*}
            \rho_1( \mathbb{E}[\nu]\frac{\widetilde\alpha^\delta_T}{\eta_T}, c) = Q(c) - \int_0^T e^{\int_0^t\frac{A^\delta_r-\delta\kappa}{\eta_r}dr}\frac{A^\delta_t-\delta\kappa}{\eta_t}P\left(-\frac{1}{\alpha^\delta_t}\int_t^{T} e^{ -\int_0^{u}\frac{A^\delta_r}{\eta_r}\,dr    }\pi_u\vartheta^{\theta, c}_u\,du\right) dt  > 0.
        \end{align*}

        It remains to show that the function $c \mapsto \rho_{2}(\theta(c),c)$
        has a root. By Lemma \ref{lem:differentiation_bc} and the implicit function theorem it follows that the function $c \mapsto \theta(c)$ is continuous. Hence, by Lemma~\ref{lem:differentiation_bc}, the function $c \mapsto \rho_{2}(\theta(c),c)$ is also continuous, and it suffices to show that it changes its sign on the interval $(0, Q^{-1}(\E[\nu]))$.

        Choosing $c = 0$ and recalling that $\theta(0) > 0$, hence $\vartheta^{\theta(0), 0} > 0$, we see that
        \begin{align*}
            \rho_{2}(\theta(0),0) = - \int_0^T h^\delta_s \pi_s \vartheta^{\theta(0), 0}_s ds < 0.
        \end{align*}
        On the other hand, if $c \to  Q^{-1}(\E[\nu])$, then $\theta(c) \to 0$, hence $\Vert \vartheta^{\theta(c), c} \Vert_{\infty}\to 0$, and so
        \begin{align*}
            \lim_{c \to Q^{-1}(\E[\nu])}\left(c - \int_0^T h^\delta_s \pi_s \vartheta^{\theta(0), c}_s \,ds\right) > 0.
        \end{align*}

        By the implicit function theorem and Lemma~\ref{lem:differentiation_bc} below we have that
        \begin{align*}
            \frac{\partial \theta(c)}{\partial c}
            =- \frac{\partial \rho_1}{\partial c} \left(\frac{\partial \rho_1}{\partial \theta}\right)^{-1}
            = - \frac{q(c)+\int_0^T \chi^{\theta, c}_t \frac{\partial \vartheta^{\theta, c}_t}{\partial c} dt}{ \frac{\eta_T}{\widetilde\alpha^\delta_T} +  \int_0^T \chi^{\theta, c}_t \frac{\partial \vartheta^{\theta, c}_t}{\partial \theta} dt}
            \le K p(0), \quad c \ge 0.
        \end{align*}
        Using once again the uniform estimates from Lemma~\ref{lem:differentiation_bc}, we see that
        \begin{align*}
            \frac{d}{dc}\rho_2(\theta(c), c)
            =&~ 1 -\int_0^{T} h^\delta_s \pi_s \left\{ \frac{\partial \vartheta^{\theta,c}_s}{\partial \theta}\frac{\partial \theta(c)}{\partial c} +  \frac{\partial \vartheta^{\theta,c}_s}{\partial c}\right\}ds \\
            \ge&~ 1 - K p(0).
        \end{align*}
        For small enough $p(0)$, the function $c \mapsto \rho_2(\theta(c), c)$ is strictly increasing, and therefore, the root is unique.
    \end{proof}

    \subsection{Existence and uniqueness of equilibria}\label{sec:proof_main}

    With our fixed-point results in hand, we are now ready to establish our existence and uniqueness of equilibrium results.
    \begin{proof}[Proof of Theorem~\ref{thm:main}]
     We will first assume that $\E[\nu] = \int_{\mathbb{R}} x\nu(d x) > 0$. The case $\E[\nu] < 0$ follows by symmetry and the case  $\E[\nu] = 0$ is specifically treated in (iv).
        \begin{enumerate}[label=(\roman*)]
            \item Since In view of Assumption~\ref{ass:initial_distribution} it follows from Theorem~\ref{thm:existence2}.(i) that there exists a fixed-point $\mu^\ast \in C([0,T])$ of $F$, i.e., 
            $$ \mu^\ast_t = \int_{\bR}\xi^{*,\delta, x, \mu^\ast}_t \nu(dx) $$
            where we recall that the strategies $\xi^{*,\delta, x, \mu^\ast}$ are defined in \eqref{xi*3} for $x \ge 0$ and \eqref{xi-sigma} for $x < 0$.
            By Proposition~\ref{prop:fix-point-integral} the fixed point $\mu^\ast$ is a solution to the integral equation~\eqref{eq:mu-first-integral} with terminal condition~\eqref{eq:terminal_condtion} and satisfies $\mu_t > 0$ for all $t \ge 0$. By Lemma~\ref{lem:apriori}, the mapping $t \mapsto \mu_t \eta_t$ is  decreasing, which verifies that Assumption~\ref{ass:mu_sign} is satisfied for $\mu = \mu^\ast$.
            Furthermore, by the assumption on $\lambda$ (resp. $\lambda\eta$) it follows from Proposition~\ref{prop:sufficient_conditions} that Assumption~\ref{ass:ceof_relation} is satisfied as well.
            It thus follows from our verification results stated in Theorem~\ref{prop:sellers} and \ref{thm:veri-N} that the strategies $\xi^{\ast, \delta, x, \mu^\ast}$ form a Nash equilibrium, both for the $N$-player ($\delta = \frac 1 N$) and the mean-field ($\delta=0$) game.%

            \item Given that $p(0) = \nu((-\infty,0])$ is sufficiently small, the uniqueness of the fixed point $\mu^\ast$ in Theorem~\ref{thm:existence2}(ii), together with the uniqueness of the best response established in Theorems~\ref{thm:veri-N} and~\ref{prop:sellers}, implies that the equilibrium is also unique within the class of equilibria whose aggregate rate $\mu$ satisfies Assumption~\ref{ass:mu_sign}. 
            \footnote{The sign condition on $\mu$ can be relaxed to a mere continuity requirement when using a dynamic subgame perfection argument, as in \cite[Theorem~4.1]{FHH-2023}. In contrast, the monotonicity condition on $\mu\eta$—which is new compared to \cite{FHH-2023}—does not appear to admit a straightforward relaxation.}%

            \item %

            We may assume without loss of generality that $\nu$ satisfies the condition of a seller dominated market.
            It then follows from the convergence assumptions on the sequence $(\nu^N)_{N\in\mathbb{N}}$ that $$\lim_{n\to\infty}\E[\nu^N] = \E[\nu].$$ In particular, for $N_0 \in\mathbb{N}$ large enough it also holds $\mathbb{E}[\nu^N] > 0$ for all $N\ge N_0$.
            Further, since $\nu$ has  density, weak convergence implies the uniform convergence of the cumulative distribution functions, so in particular
            for ${p(x) = \nu((-\infty, x])}$, $p^N(x) = \nu^{N}((-\infty, x])$, $q(x) = \nu([x,\infty))$ and $q^N(x) = \nu^{N}([x,\infty))$ we have
    \begin{equation}\label{ass:convergence}
             \lim_{N\to\infty}\sup_{x \ge 0} \Big( \big\vert p^N(x) -p(x) \big\vert +  \big\vert q^N(x) - q(x) \big\vert \Big) = 0.
            \end{equation}
            So in particular by choosing $N_0$ possibly larger we have from  $p(0) = \nu((-\infty,0])<\nu_0$ that also $p^N(0) = \nu^N((-\infty,0])<\nu_0$ for all $N\ge N_0$.

            For $N \ge N_0$ it follows from the uniqueness result in (ii) that the aggregate trading rate of the $N$-player game $\mu^N$ solves the integral equation \eqref{eq:mu-first-integral} with $\delta =\frac{1}{N}$ and the terminal condition \eqref{eq:terminal_condtion}.
            Reformulating this equation with the substitutions $\vartheta^N := \mu^N \eta$ and $\pi := \frac{\kappa}{\eta}$ we obtain
               \begin{equation}\label{eq:etaN_convergence}
            \begin{split}
                \vartheta^N_t  =&~ \vartheta^N_T +\int_t^T\lambda_s\int_s^{T}\frac{1}{\eta_u}\vartheta^N_u \,du +\int_{t}^{T} q^N\left( \phi_{\mu^N}(s) \right) {\pi_s} \vartheta^N_s\, ds \\
                &~+  \int_t^{T}{\lambda_s} \  \ell^N\left(-\psi_{\mu^N}(s)\right)   \,ds 
                +\int_t^T\left\{   p^N\left(-\psi_{\mu^N}(s)\right) - \frac{1}{N} \right\}\pi_s\vartheta^N_s\,ds.
            \end{split}
        \end{equation}
        We first verify the sequence $( \vartheta^N )_{N\ge N_0}$ admits a limit in $C^0([0,T])$. 
        Note that for terminal value it follows from its expression in \eqref{eq:terminal_condtion}, Assumption~\ref{ass:convergence} and \cite[Lemma A.4]{FHH-2023} that
        $$0<\vartheta^N_T\leq\sup_{N\ge N_0} \widetilde\alpha^{1/N}_T\mathbb E[\nu^N]<\infty.$$ 
        The last inequality above is implied by the convergence condition on $\nu^N$.
        Next, following the growth bound in the proof of Theorem~\ref{thm:existence} Step~2, stressing that the constants $K$ do not depend on $\delta \in [0,1]$, we obtain that %
        $\sup_{N\ge N_0}\|\vartheta^N \|_\infty<\infty$. 
        Similarly, using the boundedness of $q^N$, $p^N$ and the following estimate for the unbounded non-linear function $\ell^N$,
        $$\ell^N( -\psi_{\mu^N}(s))\leq \frac{2}{\alpha_s^{1/N}}\int_s^T e^{-\int_0^r \frac{A_u^{1/N}}{\eta_u} \,du}\kappa \mu^N_r\,dr \leq K\sup_N\| \vartheta^N   \|_\infty <\infty,$$
        we can bound the Lipschitz constant $|\vartheta^N_t-\vartheta^N_s| \leq K\sup_{N\ge N_0}\|\vartheta^N\|_\infty |t-s|  $, uniformly over all $s,t\in [0,T]$.
        Thus, Arzel\`a-Ascoli theorem yields a limit of $\vartheta^N$ in $C^0([0,T])$, denoted by $\vartheta$. 
        
        Finally letting $N\rightarrow\infty$ in \eqref{eq:etaN_convergence}, and using the previous growth bounds for the application of the dominated convergence theorem, we see that the limit $\vartheta$ satisfies \eqref{eq:vartheta_appendix} with $\delta=0$, which readily establishes the convergence to the unique mean field equilibrium.
            
    \item In the case $\mathbb{E}[\nu] = 0$, we set $\mu^\circ \equiv 0$ and observe that $\phi_{\mu^\circ} \equiv \psi_{\mu^\circ} \equiv 0$.  
    One can then verify directly that $\mu^\circ$ solves the integral equation~\eqref{eq:mu-first-integral} with the  terminal condition~\eqref{eq:terminal_condtion}, which is trivial in this case.
    Since Assumptions~\ref{ass:ceof_relation} and~\ref{ass:mu_sign} are also trivially satisfied, we can verify—just as in case (i)—that $(\xi^{\ast, \frac{1}{N}, x_i, \mu^\circ})_{i = 1, \dots, N}$ and $(\xi^{\ast, 0, x, \mu^\circ})_{x \in \mathbb{R}}$ constitute equilibria for the $N$-player and mean-field games, respectively.
    Note that since $\phi_{\mu^\circ} \equiv \psi_{\mu^\circ} \equiv 0$, it follows that the optimal entry times satisfy $\sigma_{\mu^\circ} \equiv 0$ (see~\eqref{eqn:sigma}) and the optimal exit times satisfy $\tau_{\mu^\circ} \equiv T$ (see~\eqref{eqn:tau}). Hence, all players on both sides of the market enter immediately and exit at the terminal time $T$.
        \end{enumerate}
    \end{proof}

    \begin{remark}\label{rem:focal-equi}
    \begin{itemize}
        \item[(i)] In general, we cannot rule out the existence of an equilibrium rate that is not monotone towards the end of the trading period.
     In case $\E[\nu] = 0$, for instance, the unique equilibrium within the previously mentioned class is given by $\mu \equiv 0$.
     However, we cannot rule out the existence of an equilibrium rate that changes its sign infinitely often.
     Such equilibria are much less ``focal'' and are therefore not relevant.
     
    \item[(ii)] Note that the convergence result in \cite{FHH-2023} depends only on the initial distribution restricted to the positive half-space, whereas the convergence result in this paper allows for initial distributions defined over the entire space. 
    \end{itemize}
    \end{remark}

    \section{Examples}\label{sec:examples}

    In what follows, we present numerical examples to illustrate how our constraint of the trading direction affects equilibrium trading in both the mean field and the $N$-player games.
    Therefore, we contrast our results with the equilibrium obtained under the market dropout constraint studied in \cite{FHH-2023} and the unconstrained case studied in \cite{FGHP-2018}.
    For simplicity we consider constant cost parameters; precisely we set
    \[
        \eta \equiv 5, \quad \kappa \equiv 10, \quad \mbox{and} \quad \lambda \equiv 5.
    \]

    \begin{figure}[h!]
        \begin{center}
             \includegraphics[scale=.64, trim=0 0 0 0,clip]{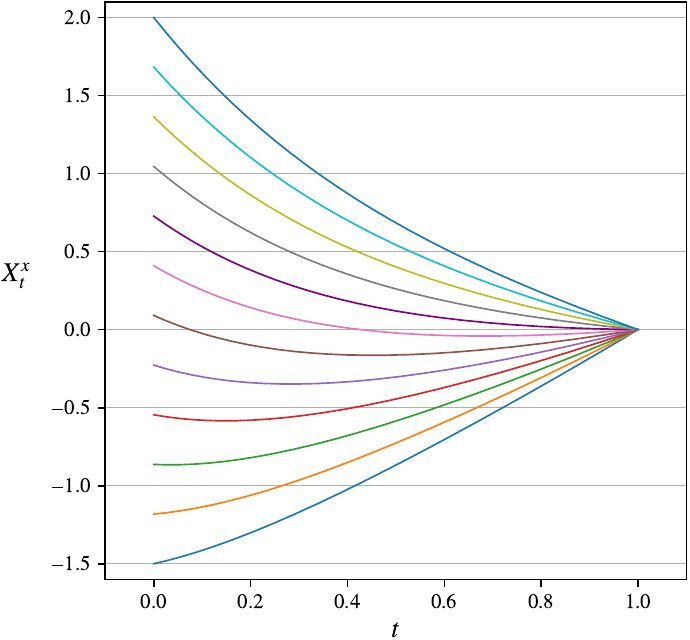}
            \hfill{}
            \includegraphics[scale=.64, trim=14 0 0 0,clip]{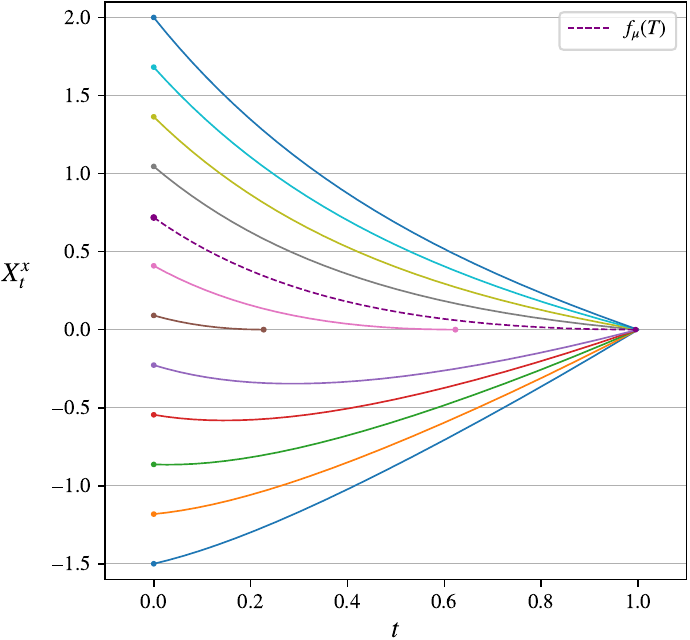}
            \includegraphics[scale=.64]{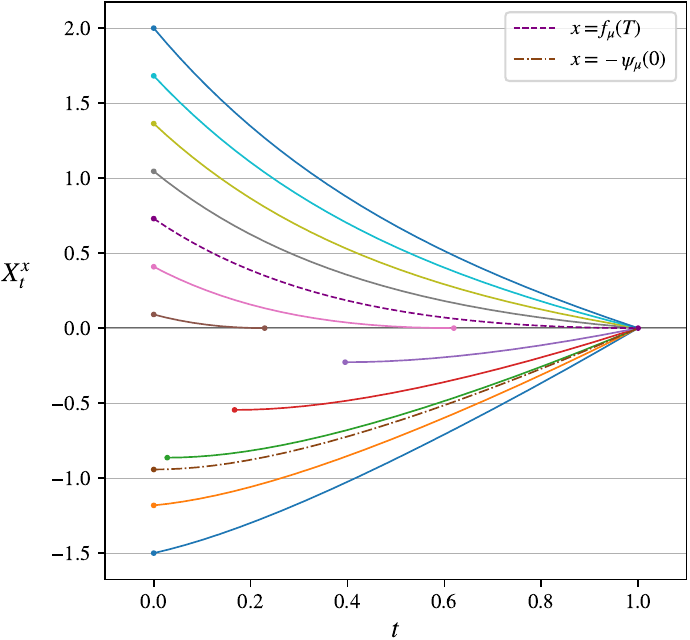}\hfill{}\includegraphics[scale=.64, trim=0 0 0 0,clip]{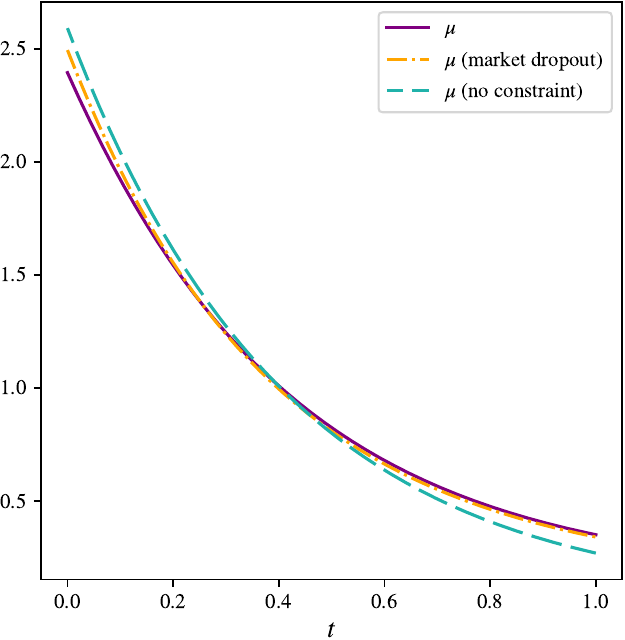}
        \end{center}
        \caption{Evolution of the equilibrium state processes \emph{without constraint} (top-left), \emph{with market drop-out} (top-right) and \emph{with trading constraint} (bottom-left) for several representative players.
        We have highlighted the moments of market entry and drop-out and the initial position $x = -\psi_\mu(0)$ and $x = \phi_\mu(T)$ (which represent the smallest initial positions for which one has immediate entry, respectively no early exit).
        In the bottom-right plot we compare the evolution of the mean trading rates of all three scenarios.}\label{fig:mfg}
    \end{figure}
    To approximate the mean-field equilibrium numerically, we first apply a standard numerical solver to integrate the backward equation \eqref{eq:mu-backwards} across varying values of $(\theta, c)$. Subsequently, we employ a standard root-finding procedure to identify a pair $(\theta, c) \in (0,\mathbb{E}[\nu]\frac{\widetilde\alpha^\delta_T}{\eta_T}) \times [0, Q^{-1}(\E[\nu]))$ that satisfies the equation \eqref{eq:bc_identities}, effectively finding a root of the function $\rho: \mathbb{R}^2 \to \mathbb{R}^2$ as defined in \eqref{eq:def_rho}.

    \begin{remark}
        For the benchmark case of constant coefficients, a closed-form solution $A^{\delta}$ for the Riccati equation \eqref{eq:AB+} is available (see  \cite{FGHP-2018}), which substantially simplifies the numerical implementation.
    \end{remark}

    We first consider an MFG with exponentially distributed initial positions on both sides of the market, setting
    \[
        q(x) = 0.8 \cdot e^{- \frac{2}{3} x} \quad \mbox{and} \quad  p(x) = 0.2 \cdot e^{x}.
    \]
    This results in an average initial position of $\E[\nu] = 1$, that is, in a seller dominated market. Figure~\ref{fig:mfg} presents the evolution of the equilibrium state processes for all three scenarios: no trading constraints, drop-out constraints and with trading constraints and several representative players.

    In models without constraints (top-left) we see that players on both sides of the market change the direction of trading for small initial positions. In a seller dominated market buyers can take advantage of favorable price trends. Hence it is beneficial for both sellers and buyers with small initial positions to (further) sell the asset and then buying it back at favorable prices.

    Under the market drop-out constraint (top-right), sellers do not change the direction of trading but may exit the market early. On the buyer side, however, we continue to observe players that initially use an opposite trading direction to benefit from the overall market trend. Our trading constraints avoid such effects. In a model with trading constraints (bottom-left) we see that it is beneficial for buyers with small short position to enter the market at later time points.

    Figure~\ref{fig:mfg} (bottom-right) presents a comparison of the average equilibrium trading rate across all three scenarios. With the parameters selected, the deviation in tradings rates is small. This is intuitive as only traders with small initial positions, hence with comparably small impact on the market dynamics, enter the market late, respectively, exit the market early.

    At the same time, we observe that our trading constraint slightly amplifies the effect previously observed under the market drop-out constraint, namely a slower initial aggregate liquidation, followed by an acceleration in aggregate liquidation halfway through the trading period, in comparison to the model without constraints.

    This dynamics can be intuitively understood by considering the impact of small buyers who, in the absence of trading constraints, would initially increase their short positions, thereby generating additional selling pressure. Under our trading constraint, these buyers are restricted to hold their position initially. Thus, there is initially no contribution to the aggregate trading rate, which ultimately also results in a higher aggregate trading rate later in the game as there is no need for buying back initially sold stocks.

    \begin{figure}[h!]
        \begin{center}
            \hspace{.55cm} \includegraphics[scale=.64, trim=0 0  0 0,clip]{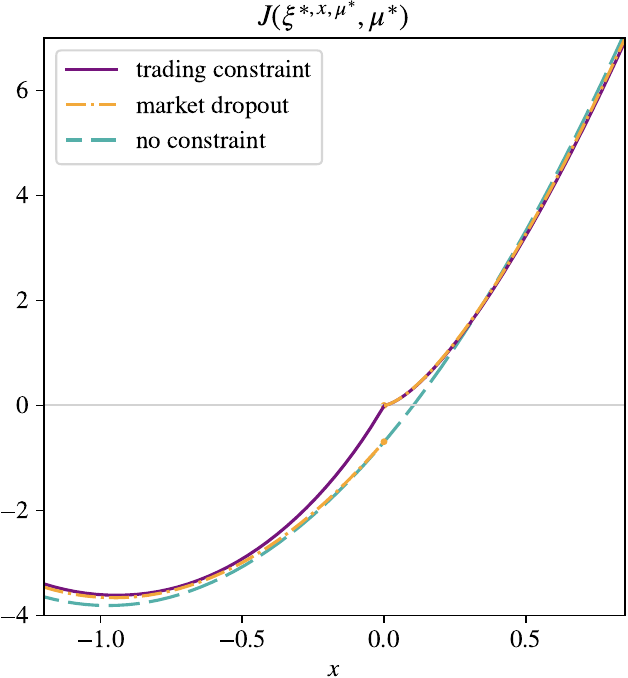}\hfill{}
             \includegraphics[scale=.64, trim=0 -.49cm 0 0,clip]{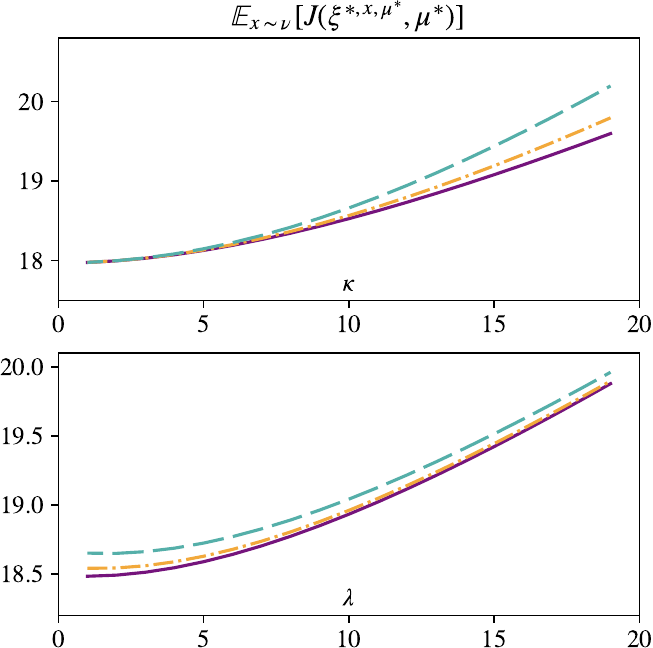}            \hfill{}
        \end{center}
        \caption{\textit{left}: Liquidation (acquisition) cost for individual players in the equilibrium plotted as a function of the initial position for all three scenarios. \textit{right}: Dependence of overall averaged costs in the equilibrium on the permanent price impact parameter $\kappa$ (top) and risk aversion parameter $\lambda$ (bottom).}\label{fig:costs_sensitivity}
    \end{figure}
    
        \begin{figure}[h!]
        \begin{center}
            \includegraphics[scale=.64, trim=0 0 0 0,clip]{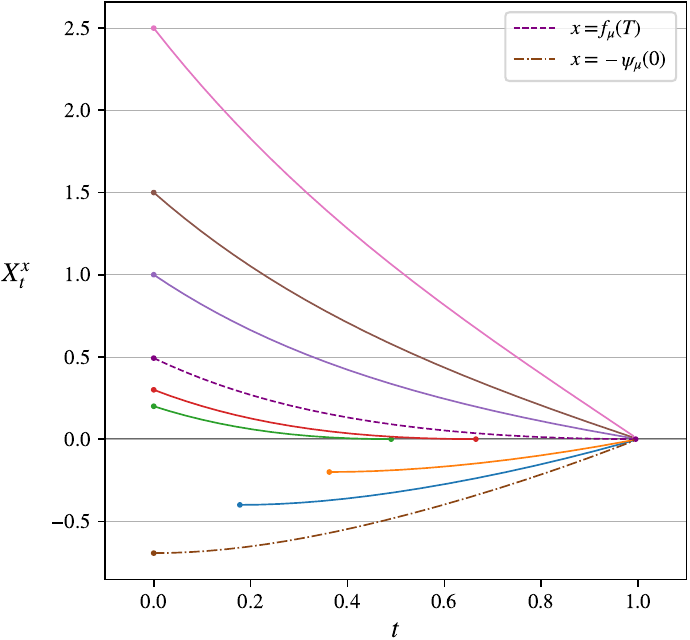}
            \hfill{}
            \includegraphics[scale=.64, trim=0 0 0 0,clip]{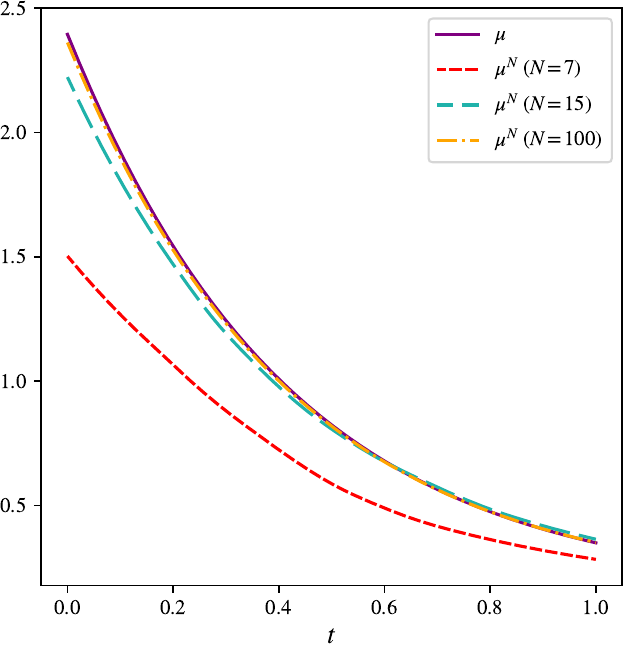}
        \end{center}
        \caption{\textit{left}: Evolution of states of $N=7$ players in the Nash equilibrium.
        \textit{right}: Comparison of the mean trading rate in the equilibrium of the \emph{MFG} (solid line) and the $N$-player game (dashed/dotted line) for $N=7, 15$ and $100$.}\label{fig:n-player}
    \end{figure}

    We note that in all three settings the relation $\int_0^T\mu_t\, dt = \mathbb{E}[\nu]$ holds. Hence, the price at the terminal time is the same in all settings. Nevertheless the distribution of liquidation (acquisition) costs across players differs significantly. In Figure~\ref{fig:costs_sensitivity} (left) we present the total costs of individual players as a function of their initial position for all three scenarios. The effect of the different constraints is clearly visible around $x=0$, where in presence of the trading constraint the profits of small buyers (i.e., their negative costs) are reduced by exclusion of short selling strategies. Furthermore, the asymmetry in strategies on both sides (late entry for buyers and early exit for sellers) is visible by the discontinuity of the cost functions derivative around at $x=0$. 
    Under the market dropout constraint, the asymmetry (short selling for buyers and early exit for sellers) leads to a discontinuity of the cost function itself.
    
    In Figure~\ref{fig:costs_sensitivity} (right) we see important differences in the dependency of average costs in equilibrium $\mathbb{E}_{x\sim\nu}[J(\xi^{\ast, x, \mu^\ast}, \mu^{\ast})]$ on the model parameters $\kappa$ and $\lambda$. 
It can be observed that as the permanent price impact factor $\kappa$ increases,
the average costs become increasingly lower in the presence of constraints, particularly under the trading
constraint.
This suggests that constraints are socially desirable in markets where the permanent impact is large.
    When $\kappa$ increases, stronger negative price trends from trades of the dominating seller side of the market offers more arbitrage opportunity for buyers by using short selling strategies, which even amplify the price trend initially. Removing short selling strategies from the market, the trading constraint reduces overall trading and hence overall costs.
    We also observe that the difference in average costs is largest when $\lambda$ is small.
    Increasing $\lambda$, short selling strategies become  less attractive due to high inventory costs, thereby gradually removing the associated arbitrage and its influence on the average costs. 
    
    Figure~\ref{fig:n-player} (left) illustrates the resolution of an $N$-player game with seven players.
    In this setup, the two buyers both have a starting position above $x = -\psi_\mu(0)$ and thus delay their market entry.
    The two sellers with initial positions below the threshold $x = \phi_\mu(0)$ fully liquidate their trades before reaching the terminal time. Figure~\ref{fig:n-player} (right) shows the aggregate rates for several numbers of players. This simulation further supports the observation of a fast convergence to the MFG equilibrium.

    \section{Conclusion}\label{sec:conclusion}

    We established existence and uniqueness of equilibrium results in multi-player and mean-field games of portfolio liquidation under a ``no change of trading condition''. We proved that the games are equivalent to games of timing where buyers and sellers need to determine the equilibrium market entry and exit times. Several avenues are open for future research.

    First, we worked under the assumption of deterministic market parameters. Although it would clearly be desirable to allow for stochastic parameters, it is unclear to us how to extend our model to the stochastic case. In our setting, entry and exit times are deterministic. In stochastic settings these times were stopping times and our equilibrium analysis would require fixed-point results for stopping times, which is challenging for many reasons. Most importantly we are unaware of any topology on the set of stopping times that would guarantee that (i) the set of stopping times is compact, and at the same time that (ii) our response functions mapping anticipated entry and exit times into actual entry and exit times would be continuous.

    A second limitation that one would like to overcome is our assumption that all players share the same liquidation time. In \cite{FHH-2023} we illustrate how our current results could be used to solve finite player games with heterogeneous trading horizons, but the approach is tedious and not very elegant. It would be desirable to develop a general game-theoretic framework that allows for heterogeneous liquidation times.

    \bibliographystyle{plain}

    \appendix

  \section{Proof of Proposition~\ref{prop:sufficient_conditions}}\label{app:A}
    Let us first note that the function $\widetilde{A} := e^{-\int_0^\cdot \frac{\delta\kappa}{\eta_r}dr}(A^{\delta} - \delta \kappa)$ satisfies the Riccati equation $$-\dot{\widetilde{A}} = -\frac{\widetilde{A}^2}{\widetilde\eta} + \widetilde\lambda_t, \quad t \in [0,T) \qquad \lim_{t\to T}\widetilde{A}_t = \infty,$$
    with $\widetilde\eta = \eta e^{-\int_0^\cdot \frac{\delta\kappa}{\eta_r}dr}$ and $\widetilde\lambda = \lambda e^{-\int_0^\cdot \frac{\delta\kappa}{\eta_r}dr}$.
    Hence, by \cite[Lemma A.1]{FHH-2023} we see that
    \begin{align}
        \label{eq:lower_bound_A_delta}
        A^\delta_t - \delta \kappa ~\ge~ e^{\int_0^t \frac{\delta\kappa}{\eta_r}dr}A^{\circ}_t, \qquad t \in [0,T),
    \end{align}
    where $A^{\circ} = (\int_{\cdot}^T \frac{1}{\widetilde{\eta}_s} ds)^{-1}$.
    This implies that $A^\delta - \delta \kappa$ is positive and bounded away from zero.
    Using furthermore \cite[Lemma~A.4]{FHH-2023} we conclude that the function $\alpha^\delta$ is positive, bounded and differentiable.

    Next, we prove that condition (i) and (ii) are sufficient for $\psi^{\delta, \tau}_\mu$ to be strictly decreasing.
    Differentiation of $(\alpha^\delta)^{-1}$ yields that
    \begin{align*}
        \left(\frac{1}{\alpha^\delta_t}\right)^{\prime}
        &= e^{\int_0^t  \frac{A^\delta_r}{\eta_r}\,dr  }\left(\frac{-\frac{(A^\delta_t)^2 }{\eta_t} + \delta\frac{\kappa}{\eta_t} + \lambda_t}{(A^{\delta}_t - \delta\kappa)^2} +\frac{\frac{A^\delta_t}{\eta_t}}{A^\delta_t  - \delta \kappa}\right) ~=~\frac{\lambda_t}{(A^\delta_t - \delta \kappa)\alpha^\delta_t}.
    \end{align*}
    Hence, for the derivative of $\psi^{\delta, \tau}_{\mu}$ we obtain
    \begin{equation}
        \label{dotpsi}
        \begin{split}
            \dot\psi^{\delta,\tau}_\mu(t)
            &=~ \frac{\lambda_t}{(A^\delta_t - \delta \kappa_t)\alpha^\delta_t}\int_t^{\tau} e^{ -\int_0^s\frac{A^{\delta}_r}{\eta_r}\,dr    }\kappa\mu_s\,ds - \frac{1}{A^{\delta}_t - \delta\kappa}\kappa \mu_t \\
            &=~ \frac{1}{A^\delta_t- \delta \kappa}(\lambda_t \psi_\mu(t) - \kappa \mu_t).
        \end{split}
    \end{equation}
    Using that the function $\mu\eta$ satisfies Assumption~\ref{ass:mu_sign}.(ii) we estimate
    \begin{align*}
        \dot\psi^{\delta,\tau}_\mu(t)
        &= \frac{1}{A^\delta_t - \delta \kappa}\left(\frac{\lambda_t}{\alpha^\delta_t}\int_t^{\tau} e^{ -\int_0^s\frac{A^{\delta}_r}{\eta_r}\,dr    }\kappa\mu_s\,ds - \kappa \mu_t\right) \\
        &\le \frac{\kappa }{A^\delta_t - \delta\kappa}\eta_t\mu_t\left(\frac{\lambda_t}{\alpha^\delta_t}\int_t^{T} e^{ -\int_0^s\frac{A^\delta_r}{\eta_r}\,dr    }\frac{1}{\eta_s}\,ds - \frac{1}{\eta_t}\right),
    \end{align*}
    for all $t\in[0, \tau]$. It thus suffices to show that the term in the above bracket is strictly negative assuming either (i) or (ii).
    \begin{itemize}
        \item[(i)]

        In this case, we use \eqref{eq:lower_bound_A_delta} to directly estimate
        \begin{align*}
            \frac{1}{\alpha^\delta_t}\int_t^{T} e^{ -\int_0^s\frac{A^\delta_r}{\eta_r}\,dr}\frac{1}{\eta_s}\,ds &\le \frac{e^{-\int_0^t \frac{\delta\kappa}{\eta_r}dr}}{A^\circ_t}\int_t^{T} e^{ -\int_t^s\frac{A^\delta_r}{\eta_r}\,dr}\frac{1}{\eta_s}\,ds \\
            &= \frac{1}{A^\circ_t }\int_t^{T} e^{ -\int_t^s\frac{A^\delta_r - \delta\kappa}{\eta_r}\,dr}\frac{e^{-\int_0^s \frac{\delta\kappa}{\eta_r}dr}}{\eta_s}\,ds
            \\
            &\le \frac{1}{A^\circ_t}\int_t^{T} e^{ -\int_t^s\frac{A^\circ_r}{\widetilde\eta_r}\,dr}\frac{e^{-\int_0^s \frac{\delta\kappa}{\eta_r}dr}}{\eta_s}\,ds \\
            &=  \frac{1}{A^\circ_t} \int_t^{T} \frac{\int_s^T\frac{1}{\widetilde\eta_u}\,du}{\int_t^T\frac{1}{\widetilde\eta_u}\,du}\frac{e^{-\int_0^s \frac{\delta\kappa}{\eta_r}dr}}{\eta_s}\,ds\\
            &\leq   \int_t^{T}\frac{1}{\widetilde\eta_s} \int_s^T\frac{1}{\widetilde\eta_u}\,du\,ds \\
            &\le\frac{1}{2}\left(\int_t^T\frac{1}{\widetilde\eta_u}\,du\right)^2 \\
        \end{align*}
        for all $t\in [0, \tau]$.
        Recall that for $\delta = 0$, it holds $\widetilde{\eta} = \eta$, and we conclude in this case for $\Vert \lambda \Vert_\infty < \frac{2}{T^2 \Vert \eta \Vert_\infty \Vert \eta^{-1} \Vert_\infty^2 }$. 
        In the case $\delta > 0$ we can continue the estimate to obtain a more explicit bound:
        \begin{align*}
           \frac{1}{2}\left(\int_t^T\frac{1}{\widetilde\eta_u}\,du\right)^2 & = \frac{1}{2}\left(\int_t^T\frac{1}{\eta_u} e^{\int_0^u \frac{\delta\kappa}{\eta_r}dr}\,du\right)^2\\
          &= \frac{1}{2} \left(\frac{1}{\delta \kappa}\left(e^{\int_0^T \frac{\delta\kappa}{\eta_r}dr} - e^{\int_0^t \frac{\delta\kappa}{\eta_r}dr}\right)\right)^2\\
        &\le \frac{1}{2}\left(\frac{1}{\delta \kappa}\left(e^{\int_0^T \frac{\delta\kappa}{\eta_r}dr} - 1\right)\right)^2\\
        &\le \frac{1}{2}\left(e^{\delta\kappa\int_0^T \frac{1}{\eta_r}dr}\int_0^T \frac{1}{\eta_r}dr\right)^2\\
        &\le \frac{T^2}{2} \Vert \eta^{-1}\Vert_{\infty}^2 e^{ 2T \delta\kappa \Vert \eta^{-1}\Vert_{\infty}},
        \end{align*}
        where in the second last inequality we have used that $\frac{1}{x}(e^{ax} - 1) \le a e^{a x}$ for all $a,x>0$.
        We readily conclude that $\Vert \lambda \Vert_{\infty} <    2 (T^2\Vert \eta^{-1}\Vert_{\infty}^2 \|\eta\|_\infty)^{-1} e^{- 2T \kappa \Vert \eta^{-1}\Vert_{\infty}} $ is sufficient since $\delta\leq 1$.
        \item[(ii)]In this case we consider the function
        \begin{align*}
            z(t) := \frac{1}{\alpha^\delta_t}\int_t^{T} e^{ -\int_0^s\frac{A^\delta_r}{\eta_r}\,dr}\frac{1}{\eta_s}\,ds, \qquad 0 \le t \le T.
        \end{align*}
        Differentiation yields for all $0 \le t < T$ that
        \begin{align*}
            \dot{z}(t) =&~ \frac{\lambda_t e^{ \int_0^t\frac{A^\delta_r}{\eta_r}\,dr}}{(A_t-\delta\kappa)^2}\int_t^{T} e^{ -\int_0^s\frac{A^\delta_r}{\eta_r}\,dr}\frac{1}{\eta_s}\,ds - \frac{1}{\alpha^\delta_t} e^{ -\int_0^t\frac{A^\delta_r}{\eta_r}\,dr} \frac{1}{\eta_t}\\
            =&~ \frac{\lambda_t }{(A^\delta_t -\delta\kappa) \alpha^\delta_t}\int_t^{T} e^{ -\int_0^s\frac{A^\delta_r}{\eta_r}\,dr}\frac{1}{\eta_s}\,ds
            - \frac{1}{A^\delta_t-\delta\kappa}\frac{1}{\eta_t} \\
            =&~ \frac{1}{(A^\delta_t - \delta\kappa) \eta_t }\left( \lambda_t \eta_t z(t) - 1\right).
        \end{align*}
        If we can prove that
        \[
            t_0 := \sup\left\{ t \in [0,T] \;\vert\; z(t)\lambda_t\eta_t = 1\right\} = - \infty,
        \]
        then we have for all $t\in[0,T]$ that
        \begin{align*}
            \frac{\lambda_t}{\alpha^\delta_t}\int_t^{T} e^{ -\int_0^s\frac{A^\delta_r}{\eta_r}\,dr    }\frac{1}{\eta_s}\,ds  = z(t) \lambda_t < \frac{1}{\eta_t}
        \end{align*}
        and the claim readily follows.
        Since, $z(T) = 0$ it holds that $t_0 < T$. Let us now assume to the contrary that $t_0 \ge 0$. Then $\lambda_{t_0} > 0$, $z(t_0) = \frac{1}{\lambda_{t_0}\eta_{t_0}}$, $\dot{z}(t_0) = 0$ and
        \[
            z(t) < \frac{1}{\lambda_t \eta_t } \quad \mbox{for all} \quad t\in ({t_0},T].
        \]
        Since $(\lambda \eta)$ is non-decreasing $\dot{z}(t) \ge \frac{1}{(A^\delta_t - \delta\kappa) \eta_t}(\lambda_{t_0}\eta_{t_0} z(t) - 1)$ on $[t_0, T)$ and hence
        we have for all $t \in [t_0, T)$
        \[
            \left(z(t) - \frac{1}{\lambda_{t_0}\eta_{t_0}}\right)^{\prime} \ge \frac{\lambda_{t_0}\eta_{t_0}}{(A^\delta_t-\delta\kappa)\eta_t}\left(z(t) - \frac{1}{\lambda_{t_0}\eta_{t_0}}\right).
        \]
        By Gr\"onwall's inequality this shows that
        \[
            z(t) - \frac{1}{\lambda_{t_0}\eta_{t_0}} \ge  \left( z(t_0) - \frac{1}{\lambda_{t_0}\eta_{t_0}}\right)e^{\int_{t_0}^t \frac{\lambda_{t_0}\eta_{t_0}}{(A^\delta_s-\delta\kappa)\eta_s }\,ds} = 0,
        \]
        and hence $z(t)\ge \frac{1}{\lambda_{t_0}\eta_{t_0}}  \ge \frac{1}{\lambda_t\eta_t}$ on $[t_0, T)$, which contradicts the definition of $t_0$.
    \end{itemize}

\section{Verification result for buyers}

\subsection{Proof of Lemma \ref{lem:admiss}}

        By Lemma \ref{lem-admis} the strategy is square integrable and satisfies the liquidation constraint. Furthermore, it follows from the definition of $\sigma_\mu(x)$ and the representation \eqref{Yst} of the adjoint process that
        \[
            Y^{\delta, \sigma_\mu(x),T}_{\sigma_\mu(x)}  - \delta\kappa X^{\delta,\sigma_\mu(x),T}_{\sigma_\mu(x)}  < 0 \quad \mbox{if} \quad |x| > \|\psi^{\delta,T}_\mu\|_\infty
        \]
        and
        \[
            Y^{\delta,\sigma_\mu(x),T}_{\sigma_\mu(x)} - \delta\kappa X^{\delta,\sigma_\mu(x),T}_{\sigma_\mu(x)} = 0  \quad \mbox{if} \quad |x| \leq \|\psi^{\delta,T}_\mu\|_\infty.
        \]

        Furthermore, for every $t_0 \in [\sigma^\tau_\mu(x), T]$ with $Y^{\delta, \sigma_\mu(x),T}_{t_0}  -\delta\kappa X^{\delta,\sigma_\mu(x),T}_{t_0}= 0$ it follows from the equation \eqref{Yst} that
        \begin{align*}
            X^{\delta, \sigma_\mu(x),T}_{t_0} = -\psi^{\delta,T}_\mu(t_0),
        \end{align*}
        and hence from the ODE for $Y^{\delta, \sigma_\mu(x),T}$ and the definition of $\psi^{\delta,T}_\mu$ that
        \begin{equation*}
            \begin{split}
                \big( Y^{\delta, \sigma_\mu(x),T}_{t_0} -  \delta\kappa X^{\delta,\sigma_\mu(x),T}_{t_0} \big)' &= \lambda_{t_0}\psi^{\delta,T}_\mu(t_0) - \mu_{t_0}\kappa \\
                &= (A^\delta_{t_0} -\delta\kappa)  \dot\psi^{\delta,T}_\mu(t_0) \\
                &< 0.
            \end{split}
        \end{equation*}
        In particular, the process $Y^{\delta, \sigma_\mu(x),T}-\delta\kappa X^{\delta,\sigma_\mu(x),T}$ is strictly negative in a vicinity of the entry time and strictly decreasing in a vicinity of every time it hits zero. As a result,
        \[
            Y^{\delta, \sigma_\mu(x),T} -\delta\kappa X^{\delta, \sigma_\mu(x), T} < 0 \quad \mbox{on} \quad (\sigma_\mu(x),T].
        \]
        Hence the strategy is admissible in a model with trading constraints.

\subsection{Proof of Theorem \ref{thm:veri-N}}

        Let $\xi$ be an arbitrary strategy of player $i$ with corresponding portfolio process $X$ and market entry time $\sigma$. We distinguish two cases, depending on which strategy enters the market first.\footnote{To unify the notion for finite player and MFGs, the case $N=\infty$ corresponds to the MFG. In this case many terms drop out and the computation simplifies.}

        \begin{itemize}
            \item Let $\sigma < \sigma^{*,i}$. In particular $\sigma^{*,i}>0$. To compare the transaction costs $J_i(\xi;\xi^{-i})$ and $J_i(\xi^{*,i};\xi^{-i})$, we split the cost functions into three terms as follows:
            \begin{equation*}
                \begin{split}
                    J_i(\xi;\xi^{-i}) = &~\int_0^\sigma \left\{\kappa\left(\mu_s+\frac{1}{N}\xi_s - \frac{1}{N}\xi^{*,i}_s\right)X_s+\frac{1}{2}\lambda_s X^2_s+\frac{1}{2}\eta_s\xi^2_s\right\} \,ds\\
                    &~ +  \int_\sigma^{\sigma^{*,i}}\left\{\kappa\left(\mu_s+\frac{1}{N}\xi_s - \frac{1}{N}\xi^{*,i}_s\right)X_s+\frac{1}{2}\lambda_s X^2_s+\frac{1}{2}\eta_s\xi^2_s\right\} \,ds\\
                    &~+ \int_{\sigma^{*,i}}^T \left\{\kappa\left(\mu_s+\frac{1}{N}\xi_s - \frac{1}{N}\xi^{*,i}_s\right)X_s+\frac{1}{2}\lambda_s X^2_s+\frac{1}{2}\eta_s\xi^2_s\right\} \,ds\\
                    =&~\int_0^\sigma \left\{\kappa \mu_s x_i+\frac{1}{2}\lambda_s x^2_i \right\} \,ds\\
                    &~ +  \int_\sigma^{\sigma^{*,i}}\left\{\kappa\left(\mu_s+\frac{1}{N}\xi_s  \right)X_s+\frac{1}{2}\lambda_s X^2_s+\frac{1}{2}\eta_s\xi^2_s\right\} \,ds\\
                    &~+ \int_{\sigma^{*,i}}^T \left\{\kappa\left(\mu_s+\frac{1}{N}\xi_s - \frac{1}{N}\xi^{*,i}_s\right)X_s+\frac{1}{2}\lambda_s X^2_s+\frac{1}{2}\eta_s\xi^2_s\right\} \,ds
                \end{split}
            \end{equation*}
            and
            \begin{equation*}
                \begin{split}
                    J_i(\xi^{*,i};\xi^{-i}) = &~\int_0^\sigma \left\{\kappa \mu_s x_i+\frac{1}{2}\lambda_s x_i^2 \right\} \,ds\\
                    &~ +  \int_\sigma^{\sigma^{*,i}}\left\{\kappa \mu_s x_i+\frac{1}{2}\lambda_s x_i^2 \right\} \,ds \\
                    &~+ \int_{\sigma^{*,i}}^T  \left\{\kappa \mu_s X^{*,i}_s+\frac{1}{2}\lambda_s (X_s^{*,i})^2+\frac{1}{2}\eta_s(\xi_s^{*,i})^2\right\} \,ds.
                \end{split}
            \end{equation*}

            Thus, using convexity in the second step, we obtain that
            \begin{equation*}
                \begin{split}
                    &~J_i(\xi;\xi^{-i}) - J_i(\xi^{*,i};\xi^{-i})\\
                    =&~ \int_\sigma^{\sigma^{*,i}}\left\{ \kappa  \mu_sX_s+\frac{1}{2}\lambda_s X^2_s -\kappa \mu_s x_i - \frac{1}{2}\lambda_s x_i^2\right\} \,ds + \int_\sigma^{\sigma^{*,i}}  \frac{\kappa}{N}\xi_s X_s\,ds +\frac{1}{2}\int_\sigma^{\sigma^{*,i}}\eta_s\xi^2_s\,ds  \\
                    &~ +\int_{\sigma^{*,i}}^T \left\{\kappa \mu_s (X_s-X^{*,i}_s)+\frac{1}{2}\lambda_s X^2_s+\frac{1}{2}\eta_s\xi^2_s-\frac{1}{2}\lambda_s (X_s^{*,i})^2 - \frac{1}{2}\eta_s(\xi_s^{*,i})^2\right\} \,ds \\
                    &~ +\int_{\sigma^{*,i}}^T \frac{ \kappa}{N}\left( \xi_s - \xi^{*,i}_s\right)X_s\,ds \\
                    \geq&~\int_\sigma^{\sigma^{*,i}}\left\{ \kappa  \mu_sX_s+\frac{1}{2}\lambda_s X^2_s -\kappa \mu_s x_i - \frac{1}{2}\lambda_s x_i^2\right\} \,ds + \int_\sigma^{\sigma^{*,i}}  \frac{\kappa}{N}\xi_s X_s\,ds +\frac{1}{2}\int_\sigma^{\sigma^{*,i}}\eta_s\xi^2_s\,ds  \\
                    &~ +\int_{\sigma^{*,i}}^T \left\{\kappa \mu_s (X_s-X^{*,i}_s) + \lambda_s (X_s-X^{*,i}_s)X^{*,i}_s+ \eta_s(\xi_s-\xi^{*,i}_s)\xi^{*,i}_s \right\} \,ds \\
                    & ~ +\int_{\sigma^{*,i}}^T \frac{ \kappa}{N}\left( \xi_s - \xi^{*,i}_s\right)X_s\,ds.
                \end{split}
            \end{equation*}

            Due to the constant market impact $\kappa$ and since $-\xi = \dot X$ and $-\xi^{*,i} = \dot X^{*,i}$, the last term on the right hand side of the above inequality satisfies
            \begin{equation*}
                \begin{split}
                    &~\int_{\sigma^{*,i}}^T \frac{ \kappa}{N}\left( \xi_s - \xi^{*,i}_s\right)X_s\,ds\\
                    =&~ \int_{\sigma^{*,i}}^T\frac{ \kappa}{N} \left( \xi_s - \xi^{*,i}_s\right)(X_s-X^{*,i}_s)\,ds+ \int_{\sigma^{*,i}}^T \frac{ \kappa}{N}\left( \xi_s - \xi^{*,i}_s\right)X^{*,i}_s\,ds\\
                    =&~\frac{\kappa}{2N}(X_{\sigma^{*,i}}-X^{*,i}_{\sigma^{*,i}})^2+\int_{\sigma^{*,i}}^T\frac{ \kappa}{N} \left( \xi_s - \xi^{*,i}_s\right)X^{*,i}_s\,ds.
                \end{split}
            \end{equation*}

            To simplify the second to last term, we recall that the strictly positive entry time $\sigma^{*,i}$ satisfies
            \[
                \frac{\kappa}{N}x_i = Y_{\sigma^{*,i}}^i.
            \]
            Hence, integration by parts yields that
            \begin{equation*}
                \begin{split}
                    \frac{\kappa}{N}x_i ( X_{\sigma^{*,i}}-X^{*,i}_{\sigma^{*,i}} ) & ~ = Y_{\sigma^{*,i}}^i( X_{\sigma^{*,i}}-X^{*,i}_{\sigma^{*,i}} ) \\
                    & ~ = \int_{\sigma^{*,i}}^T Y^i_s( \xi_s-\xi^{*,i}_s )\,ds +\int_{\sigma^{*,i}}^T (X_s-X^{*,i}_s)(\lambda_sX^{*,i}_s+\kappa\mu_s)\,ds,
                \end{split}
            \end{equation*}
            and so the second to last term equals
            \[
                \int_{\sigma^{*,i}}^T \left\{  -Y^i_s( \xi_s-\xi^{*,i}_s )+ \eta_s(\xi_s-\xi^{*,i}_s)\xi^{*,i}_s \right\} \,ds + \frac{\kappa}{N}x_i ( X_{\sigma^{*,i}}-X^{*,i}_{\sigma^{*,i}} ).
            \]
            This shows that
            \begin{equation*}
                \begin{split}
                    &~J_i(\xi;\xi^{-i}) - J_i(\xi^{*,i};\xi^{-i})\\
                    \geq&~\int_\sigma^{\sigma^{*,i}}\left\{ \kappa  \mu_sX_s+\frac{1}{2}\lambda_s X^2_s -\kappa \mu_s x_i - \frac{1}{2}\lambda_s x_i^2\right\} \,ds + \int_\sigma^{\sigma^{*,i}}  \frac{\kappa}{N}\xi_s X_s\,ds +\frac{1}{2}\int_\sigma^{\sigma^{*,i}}\eta_s\xi^2_s\,ds  \\
                    & + \int_{\sigma^{i,*}}^T \left\{ -Y^i_s( \xi_s-\xi^{*,i}_s )+ \eta_s(\xi_s-\xi^{*,i}_s)\xi^{*,i}_s + \frac{\kappa}{N} \left( \xi_s - \xi^{*,i}_s\right)X^{*,i}_s \right\}\,ds \\
                    &~ +\frac{\kappa}{2N}(X_{\sigma^{*,i}}-X^{*,i}_{\sigma^{*,i}})^2 + \frac{\kappa}{N}x_i ( X_{\sigma^{*,i}}-X^{*,i}_{\sigma^{*,i}} ).
                \end{split}
            \end{equation*}

            Using the fact that $\xi^{*,i} = \frac{Y^{i} -\frac 1 N \kappa X^{*,i}}{\eta}$ on $[\sigma^{*,i}, T]$ we see that the third line above vanishes and so
            \begin{equation*}
                \begin{split}
                    &~J_i(\xi;\xi^{-i}) - J_i(\xi^{*,i};\xi^{-i}) \\
                    \geq &~\int_\sigma^{\sigma^{*,i}}\left\{ \kappa  \mu_sX_s+\frac{1}{2}\lambda_s X^2_s -\kappa \mu_s x_i - \frac{1}{2}\lambda_s x_i^2\right\} \,ds + \int_\sigma^{\sigma^{*,i}}  \frac{\kappa}{N}\xi_s X_s\,ds +\frac{1}{2}\int_\sigma^{\sigma^{*,i}}\eta_s\xi^2_s\,ds \\
                    & ~ +   \frac{\kappa}{2N}(X_{\sigma^{*,i}}-X^{*,i}_{\sigma^{*,i}})^2 +  \frac{\kappa}{N}x_i( X_{\sigma^{*,i}}-X^{*,i}_{\sigma^{*,i}} )   \\
                    =&~\int_{\sigma}^{ \sigma^{*,i} }\left\{\kappa\mu_sX_s+\frac{1}{2}\lambda_s X^2_s-\kappa \mu_s x_i- \frac{1}{2}\lambda_s x^2_i\right\}\,ds + \int_\sigma^{\sigma^{*,i}}\frac{1}{2}\eta_s \xi^2_s\,ds \\
                    & ~ +   \frac{\kappa}{2N}(X_{\sigma^{*,i}}-X^{*,i}_{\sigma^{*,i}})^2 +  \frac{\kappa}{N}x_i( X_{\sigma^{*,i}}-X^{*,i}_{\sigma^{*,i}} )  -  \frac{\kappa}{2N}( X^2_{\sigma^{*,i}} - X^2_{\sigma})    \\
                    =&~\int_{\sigma}^{ \sigma^{*,i} }\left\{(\kappa\mu_s+\lambda_s x_i)( X_s-x_i )+\frac{1}{2}\lambda_s (X_s-x_i)^2 \right\} \,ds + \int_\sigma^{\sigma^{*,i}}\frac{1}{2}\eta_s\xi^2_s\,ds \\
                    & ~ +  \frac{\kappa}{2N}(X_{\sigma^{*,i}}-x_i)^2 +  \frac{\kappa}{N}x_i( X_{\sigma^{*,i}}-x_i ) - \frac{\kappa}{2N}( X^2_{\sigma^{*,i}} -x_i^2 )   \\
                    =&~  \int_{\sigma}^{ \sigma^{*,i} }\left\{(\kappa\mu_s+\lambda_s x_i)( X_s-x_i )+\frac{1}{2}\lambda_s (X_s-x_i)^2 \right\} \,ds + \int_\sigma^{\sigma^{*,i}}\frac{1}{2}\eta_s\xi^2_s\,ds .
                \end{split}
            \end{equation*}

            Since the process $\mu$ satisfies Assumption~\ref{ass:mu_sign} it follows from Assumption \ref{ass:ceof_relation} that
            \[
                \dot\psi_{\mu}(s)=\frac{1}{A^\delta_s-\frac{\kappa}{N}}\left(\lambda_s\psi_{\mu}(s)-\kappa\mu_s\right)<0 \quad \mbox{on} \quad (0,T],
            \]
where $\psi_\mu := \psi_{\mu}^{\delta,T}$ is defined in \eqref{eq:psi} with $\tau=T$. 
            
            Since $\sigma^{*,i} > 0$ we also have that $-\psi_{\mu}(\sigma^{*,i})=x_i$, hence, that
            \[
                \kappa\mu_s+\lambda_s x_i\geq \kappa\mu_s-\lambda_s\psi_{\mu}(s)>0
                \quad \mbox{and} \quad
                X - x_i \geq 0
                \quad \mbox{on} \quad [\sigma, \sigma^{*,i}].
            \]
            Thus we conclude that
            \begin{align*}
                J_i(\xi;\xi^{-i}) - J_i(\xi^{*,i};\xi^{-i})  ~\geq ~\int_\sigma^{\sigma^{*,i}} \left\{\frac{1}{2}\lambda_s (X_s-x_i)^2 +  \frac{1}{2}\eta_s\xi^2_s \right\} \,ds ~\geq ~ 0.
            \end{align*}

            \item The case $\sigma \geq \sigma^{*,i}$ is simpler. In this case,
            \begin{align*}
                &~ J_i(\xi;\xi^{-i}) - J_i(\xi^{*,i};\xi^{-i})\\
                = &~\int_{ \sigma^{*,i} }^T\left\{ \frac{1}{2} \eta_s \xi_s^2 + \kappa\mu_s X_s + \frac{1}{2}\lambda_sX_s^2 +\frac{\kappa}{N}X_s(\xi_s-\xi^{*,i}_s) \right\}  \,ds \\
                & ~ - \int_{ \sigma^{*,i} }^T \left\{ \frac{1}{2} \eta_s (\xi^{*,i}_s)^2 + \kappa\mu_s X^{*,i}_s + \frac{1}{2}\lambda_s(X^{*,i}_s)^2  \right\}  \,ds  \\
                \geq&~  \int_{ \sigma^{*,i} }^T \left\{ \eta_s \xi^{*,i}_s( \xi_s-\xi^{*,i}_s ) + \kappa\mu_s( X_s-X^{*,i}_s ) + \lambda_s X^{*,i}_s( X_s-X^{*,i}_s )+\frac{\kappa}{N}X^{*,i}_s(\xi_s-\xi^{*,i}_s) \right\}  \,ds\\
                &~  + \int_{\sigma^{*,i}}^T     \frac{\kappa}{N}(X_s-X^{*,i}_s)(\xi_s-\xi^{*,i}_s)\,ds .  
            \end{align*}
            First, $ \int_{\sigma^{*,i}}^T     \frac{\kappa}{N}(X_s-X^{*,i}_s)(\xi_s-\xi^{*,i}_s)\,ds = -\left.\frac{\kappa}{2N}( X_s-X^{*,i}_s )^2\right|_{\sigma^{*,i}}^T =0 $.
            Second, applying integration by parts to $Y^i(X-X^{*,i})$ on $[\sigma^{*,i},T]$ and noting that $X_{\sigma^{*,i}}=X^{*,i}_{\sigma^{*,i}}=x_i$, we have that
            \[
                0=Y^i_{\sigma^{*,i}}(X_{\sigma^{*,i}}-X^{*,i}_{\sigma^{*,i}}) =\int_{\sigma^{*,i}}^T Y^i_s(\xi_s-\xi^{*,i}_s)\,ds + \int_{\sigma^{*,i}}^T(X_s-X^{*,i}_s)( \lambda_s X^{*,i}_s+\kappa\mu_s )\,ds,
            \]
            which implies that
            \begin{align*}
                J_i(\xi;\xi^{-i}) - J_i(\xi^{*,i};\xi^{-i})
                \geq  \int_{ \sigma^{*,i} }^T  \left(\eta_s \xi^{*,i}_s + \frac{\kappa}{N}X^{*,i}_s -Y^i_s  \right) ( \xi_s-\xi^{*,i}_s )        \,ds
                = 0.
            \end{align*}
        \end{itemize}
        Now assume $\xi$ is another optimal strategy. The above argument leads to $0\geq J_i(\xi;\xi^{-i})-J_i(\xi^{*,i};\xi^{-i})\geq 0$. Thus, all above inequalities become equalities. As a result, $\xi=\xi^{*,i}$ in both cases.

\end{document}